\newcommand{\td}{\textnormal{\textsc{td}}}
\newcommand{\tw}{\textnormal{\textsc{tw}}}
\newcommand{\bd}{\textnormal{\textsc{bd}}}
\newcommand{\cc}{\textnormal{\textsc{cc}}}
\newcommand{\tree}{\textnormal{\textsc{tree}}}
\newcommand{\tail}{\textnormal{\textsc{tail}}}
\newcommand{\ed}{\textnormal{\textsc{ed}}}
\newcommand{\fvs}{\textnormal{\textsc{fvs}}}
\newcommand{\FVS}{\textnormal{\textsc{fvs}}}
\newcommand{\prob}[1]{\textnormal{\textsc{#1}}}
\newcommand{\gf}{\mathcal G_F}
\newcommand{\nm}{{\textnormal{\textsc{nm}}}}
\newcommand{\NP}{\mathrm{NP}}
\newcommand{\coNP}{\mathrm{coNP}}
\newcommand{\poly}{\mathrm{poly}}
\ifdefined\DEBUG{}
\def\rem#1{{\marginpar{\raggedright\scriptsize #1}}}
\newcommand{\bmp}[1]{{\color{blue}{#1}}}
\newcommand{\bmpr}[1]{\rem{\textcolor{blue}{\(\bullet \) #1}}}
\newcommand{\bmp}[1]{#1}
\newcommand{\bmpr}[1]{}
\newenvironment{customthm}[1]
{\innercustomthm}
{\endinnercustomthm}
\spnewtheorem{innercustomclaim}{Claim}{\itshape}{\rmfamily}
\newenvironment{customlemma}[1]
{\innercustomlemma}
{\endinnercustomlemma}
\newcommand*{\myproofname}{Proof}
\newenvironment{innerproof}[1][\myproofname]{\begin{proof}[#1]
                                                 }{
\end{proof}}
\newcommand{\naturals}{\mathbb{N}}
\newcommand{\bigO}{\mathcal 
O}
\DeclarePairedDelimiter\abs{\lvert}{\rvert}
\DeclarePairedDelimiterX{\set}[1]\lbrace\rbrace{\setaux #1||\endsetaux}
\def\setaux#1|#2|#3\endsetaux{\if\relax\detokenize{#2}\relax #1 \else #1 \;\delimsize\vert\; #2 \fi}
\begin{document}
    \title{Kernelization for Feedback Vertex Set via Elimination Distance to a Forest}
%
%
    \author{David Dekker\inst{1} \and
    Bart M.P.\ Jansen\inst{1}}
    \authorrunning{D.\ J.\ C.\ Dekker and B.\ M.\ P.\ Jansen}
%
    \institute{Eindhoven University of Technology, P.O.\ Box 513, 5600 MB Eindhoven, the Netherlands}
    \maketitle              
    \begin{abstract}
        We study efficient preprocessing for the undirected \prob{Feedback Vertex Set} problem, a fundamental problem in graph theory which asks for a minimum-sized vertex set whose removal yields an acyclic graph.
        More precisely, we aim to determine for which parameterizations this problem admits a polynomial kernel.
        While a characterization is known for the related \prob{Vertex Cover} problem based on the recently introduced notion of bridge-depth, it remained an open problem whether this could be generalized to \prob{Feedback Vertex Set}.
        The answer turns out to be negative; the existence of polynomial kernels for structural parameterizations for \prob{Feedback Vertex Set} is governed by the elimination distance to a forest.
        Under the standard assumption $\NP \not \subseteq \coNP/\poly$, we prove that for any minor-closed graph class $\mathcal{G}$, \prob{Feedback Vertex Set} parameterized by the size of a modulator to $\mathcal{G}$ has a polynomial kernel if and only if $\mathcal{G}$ has bounded elimination distance to a forest.
        This captures and generalizes all \bmp{existing kernels} for structural parameterizations of the \prob{Feedback Vertex Set} problem.

        \keywords{Feedback Vertex Set \and Kernelization \and Elimination distance.}
    \end{abstract}

    \section{Introduction}\label{sec:intro}

    For NP-complete problems, a polynomial time algorithm solving any problem instance exactly is unlikely to exist.
However, as one is often interested in solving specific instances, one can try to exploit characteristics of problem instances and develop algorithms that are fast when the input has certain properties.
We therefore associate a parameter with each problem instance.
In our context, a problem instance is a graph for which we ask for the existence of a vertex set of size at most $\ell$ having certain properties.
Such a parameterized instance can be denoted with a triple $(G, \ell, k)$, where we are asking for the existence of a solution of size at most $\ell$ for a graph $G$ with parameter $k$.
We say that an algorithm is \emph{fixed parameter tractable} (FPT) for such a parameterization if it solves any instance $(G, \ell, k)$ of size $n$, as described above, in time bounded by $f(k) n^{\bigO(1)}$ for some computable function $f \colon \naturals \rightarrow \naturals$.

A strongly related field is that of \emph{kernelization}.
This field focuses on reducing a parameterized instance $(G, \ell, k)$ in polynomial time to an equivalent instance $(G', \ell', k')$ whose size is bounded by a computable function of the parameter.
We speak of a polynomial kernel when this function is a polynomial.
It is known that a decidable parameterized problem is fixed parameter tractable if and only if it admits a kernelization~\cite{fptiffkernelization}.
In our quest for determining which parameterizations enable efficient algorithms, it is therefore interesting to determine those that allow a polynomial kernel.

This paper focuses on polynomial kernels for the undirected \prob{Feedback Vertex Set} problem, which is an NP-complete problem in graph theory as originally identified by Karp~\cite{karp}.
For an undirected graph $G$, a vertex set $X \subseteq V(G)$ is a \emph{feedback vertex set} if the graph is acyclic after removal of $X$.
We call such a vertex set whose removal yields a graph in some graph class $\mathcal G$ a $\mathcal G$-modulator and define the deletion distance to $\mathcal G$ as its minimum size.
The \prob{Feedback Vertex Set} problem now asks for the minimum size of such a feedback vertex set, or equivalently, the deletion distance to a forest.
For a graph $G$, we let $\fvs(G)$ (the \emph{feedback vertex number} of $G$) denote that minimum size.
Our main question is for which parameterizations the \prob{Feedback Vertex Set} problem admits a polynomial kernel.

Before exploring the \prob{Feedback Vertex Set} problem further, we should mention the related \prob{Vertex Cover} problem.
It asks for a minimum set of vertices hitting all edges in a graph.
While a kernel in the solution size with a linear number of vertices can be obtained using various techniques~\cite{vclinearc1,vclinearc2,vclinearc3,vclineart1,vclineart2}, a polynomial kernel in a structurally smaller parameter was only discovered in 2011, when Jansen and Bodlaender developed a polynomial kernel in the feedback vertex number of a graph~\cite{vertexcoverfvs}.
From there, many polynomial kernels for \prob{Vertex Cover} were discovered in modulators to even larger graph classes~\cite{vcsmalldeg,vctd1,vctd2,vcdpseudoforest}.
In 2020, Bougeret, Jansen and Sau proved the following characterization under common hardness assumptions: \prob{Vertex Cover} admits a polynomial kernel in the size of a modulator to a minor-closed graph family $\mathcal G$ if and only if $\mathcal G$ has bounded \emph{bridge-depth}~\cite{bridgedepth}.
With this result, they generalized all existing work on kernels in the size of modulators to minor-closed graph families, and they proved that their results cannot be improved further under common hardness assumptions.

These results now bring us back to the \prob{Feedback Vertex Set} problem.
For this problem, finding a polynomial kernel in the solution size had been an open problem for a long time.
The first polynomial kernel with size bound $\bigO(k^{11})$ was obtained in 2006 and it was subsequently improved to a quadratic kernel~\cite{fvspolynomialkernel,fvsqubickernel,fvsquadratickernel}.
After the improvements for \prob{Vertex Cover}, \bmp{researchers} also tried to develop polynomial kernels in smaller parameters for \prob{Feedback Vertex Set}~\cite{jansen2014,fvsparam2,jansenpieterse2018}.
It remained an open problem whether these results could be generalized further or whether there exists some parameter that characterizes \prob{Feedback Vertex Set} similarly to how bridge-depth characterizes \prob{Vertex Cover}.
In particular, Bougeret, Jansen and Sau suggested in their paper on \prob{Vertex Cover} that the deletion distance to constant bridge-depth might also be an interesting parameter to consider for problems such as \prob{Feedback Vertex Set}.
We therefore aim to answer the question for which graph families $\mathcal G$ the \prob{Feedback Vertex Set} problem admits a polynomial kernel when parameterized by the size of a $\mathcal G$-modulator.

\paragraph{Our results}
To our initial surprise, the results for \prob{Vertex Cover} cannot be generalized to \prob{Feedback Vertex Set}.
It turns out that a minor-closed graph family $\mathcal G$ must have bounded \emph{elimination distance to a forest} \bmp{(Definition~\ref{def:eldist})}, in order to allow a polynomial kernel in a $\mathcal G$-modulator.
This concept was introduced by Bulian and Dawar~\cite{eldist} and is another generalization of the more common parameter treedepth~\cite{treedepth}.
Our result is described in Theorem~\ref{thm:main}.

\begin{theorem}\label{thm:main}
Let $\mathcal G$ be a minor-closed graph family.
Then \prob{Feedback Vertex Set} admits a polynomial kernel in the size of a $\mathcal G$-modulator if and only if $\mathcal G$ has bounded elimination distance to a forest, assuming that $\NP \not \subseteq \coNP / \poly$.
\end{theorem}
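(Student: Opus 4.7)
\medskip
\noindent\textbf{Proof plan.} The statement is a characterization with two implications, which I would prove separately. Throughout, let $c$ denote a bound on the elimination distance of~$\mathcal{G}$ to a forest (in the positive direction), and let $\ed(G)$ abbreviate the elimination distance of~$G$ to a forest.

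\smallskip
\noindent\emph{Positive direction: bounded elimination distance yields a polynomial kernel.} The plan is to combine a structural decomposition of~$G - X$ (where~$X$ is the given $\mathcal{G}$-modulator) with sunflower-style reduction rules. First I would argue that an elimination forest of~$G - X$ of depth at most~$c$ can be computed in polynomial time, or that we may assume it is given with the input, possibly after running an FPT approximation. This decomposition exposes a hierarchical structure: the leaves of the elimination forest belong to an actual forest, and every internal level adds one ``eliminator'' vertex per subtree. Next I would design reduction rules that, for each vertex~$v \in X$ together with each ``boundary signature'' it induces on the elimination forest, keep only a bounded number of representative subtrees (via a sunflower/$q$-expansion argument, similar to the kernel of Jansen and Bodlaender for \prob{Vertex Cover} parameterized by feedback vertex number). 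The ambition is an inductive bound: the size is polynomial in~$|X|$ with a degree that grows with~$c$. A complementary ingredient would be protrusion replacement in the sense of Bodlaender~et~al.~to compress deep, low-boundary portions of~$G - X$ that admit only boundedly many behaviours with respect to any feedback vertex set.

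\smallskip
\noindent\emph{Negative direction: unbounded elimination distance precludes a polynomial kernel.} Assuming $\mathcal{G}$ has unbounded elimination distance to a forest, I would build an OR-cross-composition from an NP-hard version of \prob{Feedback Vertex Set} (say on cubic graphs) into \prob{Feedback Vertex Set} parameterized by modulator size to~$\mathcal{G}$. The crux is to leverage minor-closedness: since $\mathcal{G}$ is minor-closed and has unbounded elimination distance to a forest, it contains an infinite family $\{H_n\}_{n \in \naturals}$ with $\ed(H_n) \to \infty$; by passing to an obstruction (for instance, a family of walls, subdivided grids, or comparable graphs that appear as minors because they force high elimination distance to a forest), I would extract explicit ``encoder'' gadgets inside~$\mathcal{G}$. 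Given~$t$ instances of the source problem, these gadgets, glued along a small shared boundary, would form a single graph whose feedback vertex number equals~$\ell$ iff at least one input instance is solvable; the boundary vertices constitute the modulator~$X$, so $|X|$ is polynomial in the largest input size and logarithmic in~$t$. Standard OR-cross-composition then rules out polynomial kernels under $\NP \not\subseteq \coNP/\poly$.

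\smallskip
\noindent\emph{Main obstacles.} On the upper-bound side, the difficult step is designing reduction rules that handle the \emph{cumulative} effect of nested eliminators: deleting one layer of vertices from~$G - X$ adds new ``virtual'' modulator vertices, so naively iterating the kernel size blows up exponentially in~$c$. Controlling this requires either a careful global reduction (where an elimination vertex is recognised as playing the same role for many~$X$-vertices simultaneously) or a protrusion replacement argument with a CMSO-expressible equivalence. On the lower-bound side, the hard step is extracting from the mere assumption ``$\mathcal{G}$ has unbounded elimination distance to a forest'' an \emph{explicit} obstruction with enough connectivity and predictable feedback behaviour to serve as a cross-composition gadget; this probably requires a structural dichotomy theorem saying that minor-closed classes of unbounded elimination distance to a forest contain such a family.
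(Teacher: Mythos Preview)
Your plan correctly splits the theorem into two implications, but both directions diverge from the paper's execution, and the lower bound has a genuine gap.

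For the upper bound, the paper does follow an iterative scheme close to what you sketch: apply a reduction rule that bounds the number of connected components of~$G-X$, then move one vertex per component into the modulator to drop the elimination distance by one, and recurse. But the engine is neither a sunflower/$q$-expansion argument nor protrusion replacement. The central technical statement is Lemma~\ref{lemma3}: if every minimum feedback vertex set of a connected graph~$C$ either misses a vertex of~$S$ or leaves two vertices of~$T$ connected, then the same holds for subsets $S^*\subseteq S$, $T^*\subseteq T$ whose sizes depend only on~$\ed_{\gf}(C)$. With this in hand, the reduction rule (Lemma~\ref{lemma6}) marks, for each small subset $X_0\subseteq X$, a bounded number of components of $G-X$ for which no minimum FVS contains $\mathcal S(C,X_0)$ and separates $\mathcal T(C,X_0)$; unmarked components can be solved independently and discarded. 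Your obstacle about ``cumulative nested eliminators'' is handled precisely by this recursion, but you are missing the $S,T$-reduction lemma that makes the component-bounding step go through.

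For the lower bound, the paper does \emph{not} use an OR-cross-composition; it gives a polynomial-parameter transformation from \textsc{CNF-SAT} parameterized by the number of variables, invoking the Dell--van~Melkebeek bound. More importantly, your proposed obstruction family---walls or subdivided grids---is the wrong target. High elimination distance to a forest does not force large grid minors: a $\{K_3\}$-necklace of length~$n$ has $\ed_{\gf}=\Theta(\log n)$ but treewidth~$2$, so no wall minors can be extracted from the hypothesis. The paper's structural dichotomy (Lemma~\ref{lemma:unbounded-ed-implies-unbounded-necklaces}) is that a minor-closed class with unbounded~$\ed_{\gf}$ contains arbitrarily long \emph{uniform} $\{K_3\}$-necklaces. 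These necklaces are exactly what encode clauses in the reduction of Lemma~\ref{lemma:fvsreduction}: one triangle per literal, a variable gadget~$J_{H,v,x,y}$ forcing a binary choice, and a global cycle through each necklace forcing at least one satisfied literal per clause. Without the necklace characterization your cross-composition has no concrete gadget to stand on, and the gadget family you name is demonstrably unavailable.
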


The minor-closed and hardness assumptions are only needed for the lower bound.
To the best of our knowledge, our kernel generalizes all known polynomial kernels for the \prob{Feedback Vertex Set} problem.
Both the kernel and its correctness proof follow the structure of the kernel for $\mathcal F$-\prob{Minor Free Deletion} in the deletion distance to a graph of constant treedepth by Jansen and Pieterse~\cite{jansenpieterse2018}.
The correctness proof of their kernel crucially relies on their Lemma 3 \bmp{whose technical proof spans thirty pages. We require a variation of this lemma. On the one hand, our variation is more involved since it deals with elimination distance to a forest rather than treedepth; on the other hand, it is simpler since it concerns only \prob{Feedback Vertex Set} rather than $\mathcal F$-\prob{Minor Free Deletion}. As a result of this simplification, we can formulate the lemma in terms of reducing some vertex sets $S$ and $T$. As one of our main contributions, we prove this lemma using a strategy that differs significantly from the one followed in earlier work~\cite{jansenpieterse2018}.}

\begin{lemma}\label{lemma3}
Let $G$ be a connected graph with disjoint vertex sets $S,T \subseteq V(G)$.
Suppose that any minimum feedback vertex set $X$ of $G$ misses some vertex from $S$ or leaves two vertices from $T$ connected in $G-X$.
Then there exist sets $S^* \subseteq S$ and $T^* \subseteq T$ whose sizes only depend on the elimination distance to a forest of $G$, such that any minimum feedback vertex set $X$ of $G$ misses some vertex from $S^*$ or leaves two vertices from $T^*$ connected in $G-X$.
\end{lemma}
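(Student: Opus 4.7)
The plan is to prove the lemma by induction on $k = \eldist(G)$, aiming for a size bound $|S^*|, |T^*| \le f(k)$ for some function $f$ depending only on $k$. For the base case $k = 0$, the connected graph $G$ is a tree, so the unique minimum feedback vertex set is $X = \emptyset$; the hypothesis then simplifies to ``$S \neq \emptyset$ or $|T| \ge 2$'', so I can take $S^* \subseteq S$ of size at most $1$ and $T^* \subseteq T$ of size at most $2$, and both conditions transfer immediately.

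For the inductive step with $k \geq 1$, since $G$ is connected I would pick a vertex $v$ witnessing $\eldist(G - v) = k - 1$, so each component $C_1, \dots, C_m$ of $G - v$ has $\eldist(C_i) \le k - 1$. The forward direction ``$S \subseteq X$ and $T$ separated $\implies$ $S^* \subseteq X$ and $T^*$ separated'' is automatic from $S^* \subseteq S$ and $T^* \subseteq T$, so the real task is the converse: for every minimum FVS $X$ with $S^* \subseteq X$ and $T^*$ separated in $G - X$, we need $S \subseteq X$ and $T$ separated in $G - X$. I would therefore strengthen the inductive statement to this biconditional form, then split the analysis according to whether $v \in X$ and whether $v \in S \cup T$, project the problem onto each component $C_i$ (augmenting the local analogues of $S$ and $T$ with the neighbors of $v$ inside $C_i$ to encode the interaction across $v$), apply the inductive hypothesis within each $C_i$ to obtain sets $S_i^*, T_i^*$ of size $f(k-1)$, and combine the results together with $\{v\} \cap (S \cup T)$ into global $S^*$ and $T^*$.

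The principal obstacle is that the number of components $m$ can be arbitrarily large, so a naive union of the per-component representative sets yields unbounded sizes. To overcome this I would classify the components $C_i$ into a bounded number of \emph{types}, where the type captures everything $C_i$ contributes to the global minimum-FVS structure: the pattern of its neighbors of $v$, the value $\fvs(C_i)$ under each of the two boundary conditions (whether $v$ is or is not forced into the FVS), the way the $T$-vertices inside $C_i$ get partitioned by minimum local solutions, and the inductively obtained sets $S_i^*, T_i^*$. Since each piece of data has complexity bounded only in terms of $k$, the number of types is a function of $k$. I would then retain only a bounded number of representative components per type and argue, via an exchange argument showing that swapping the restriction of a minimum FVS between two components of the same type preserves minimality and the relevant connectivity pattern, that the remaining components can be safely ignored. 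Making this exchange argument rigorous --- particularly, accounting for cycles through $v$ that can link vertices in different components, and ensuring that the inductive representatives $S_i^*, T_i^*$ interact correctly with the choice of whether $v \in X$ --- is the hardest step, and it is likely the place where the new proof strategy departs most from the approach of Jansen and Pieterse~\cite{jansenpieterse2018}.
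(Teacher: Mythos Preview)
Your inductive strategy is in the right spirit but has two concrete gaps. First, the type classification as you describe it is not bounded: you include ``the value $\fvs(C_i)$ under each of the two boundary conditions'' and ``the inductively obtained sets $S_i^*, T_i^*$'' among the data determining a component's type, but $\fvs(C_i)$ is an unbounded integer (elimination distance does not bound it) and the sets $S_i^*, T_i^*$ are concrete vertex subsets of $C_i$ that are not comparable across components. So ``the number of types is a function of $k$'' does not follow from what you wrote. Second, the inductive hypothesis you would invoke speaks about minimum feedback vertex sets of each component $C_i$, whereas the objects you must control are the restrictions $X \cap V(C_i)$ for a global minimum feedback vertex set $X$ of $G$, and these need not be minimum in $C_i$. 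You never explain how to bridge that mismatch; the exchange argument you defer to would have to do exactly this, and you acknowledge it is the hardest step without supplying it.

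The paper avoids both issues by not inducting layer by layer. It takes $S^*, T^*$ \emph{minimal} (under inclusion) with the desired property and then argues, at every node $v$ of a fixed $\gf$-elimination tree, that at most $3\eta\cdot 2^{\eta}$ children of $v$ carry a label. The key point is that minimality gives, for each labeled child $c_i$, a witnessing minimum feedback vertex set $X_i$ of $G$ (not of $G_{c_i}$); one then groups children by $Z_i \coloneqq \tail[v]\setminus X_i$, a subset of a set of size at most $\eta$, so there are at most $2^{\eta}$ groups. If some group has more than $3\eta$ members one picks two children $c_j, c_k$ in it and shows that $X' \coloneqq (X_k \setminus G_{c_k}) \cup (X_j \cap G_{c_k})$ is a minimum feedback vertex set of $G$ that contains $S^*$ and separates $T^*$, contradicting the hypothesis. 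A separate short argument bounds the number of labels inside each leaf bag. Thus the only ``type'' that appears is the subset of the tail missed by the witness, which is genuinely bounded by $2^{\eta}$, and because the witnesses are global minimum feedback vertex sets there is no local-versus-global mismatch to resolve.
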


\bmp{Once Lemma~\ref{lemma3} is proven, the kernelization upper bounds follow similarly to earlier work~\cite{jansenpieterse2018}.} As for the lower bound in Theorem~\ref{thm:main}, we are also able to generalize our proof for other $\mathcal F$-\prob{Minor Free Deletion} problems as described in Theorem~\ref{thm:main-lower-bound}.

\begin{theorem}\label{thm:main-lower-bound}
    Let $\mathcal G$ be a minor-closed family of graphs and let $\mathcal F$ be a finite set of biconnected planar graphs on at least three vertices.
    If $\mathcal G$ has unbounded elimination distance to an $\mathcal F$-minor free graph, then $\mathcal F$-\prob{Minor Free Deletion} does not admit a polynomial kernel in the size of a $\mathcal G$-modulator, unless $\NP \subseteq \coNP / \poly$.
\end{theorem}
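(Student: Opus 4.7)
The plan is to prove the lower bound by an OR-cross-composition into $\mathcal F$-\prob{Minor Free Deletion}, starting from an NP-hard base problem (e.g. $\mathcal F$-\prob{Minor Free Deletion} itself on unrestricted graphs, which is NP-hard since each graph in $\mathcal F$ is biconnected, planar, and has at least three vertices). Given $t$ input instances $(G_1,\ell),\dots,(G_t,\ell)$, I will build a single graph $G^\star$ and target value $\ell^\star$ such that $G^\star$ has an $\mathcal F$-minor-free deletion set of size $\ell^\star$ iff some $G_i$ has one of size $\ell$, and such that $G^\star$ has a $\mathcal G$-modulator of size polynomial in $\max_i |V(G_i)|$ and in $\log t$. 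Combined with the standard cross-composition framework and the assumption $\NP \not\subseteq \coNP/\poly$, this rules out polynomial kernels in the $\mathcal G$-modulator parameter.

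The first step is to convert the assumption of unbounded elimination distance into concrete gadgets. For every $d$, the hypothesis yields a connected graph $H_d \in \mathcal G$ with $\eldist(H_d, \mathcal F\text{-minor-free}) \geq d$; unfolding the recursive definition gives a rooted `tower' decomposition of $H_d$ in which every node's remaining subgraph still has elimination distance at least $d-1$ after the removal of its parent-vertex. In particular, after any sequence of fewer than $d$ vertex deletions, some component still carries an $\mathcal F$-minor. Because the members of $\mathcal F$ are biconnected on at least three vertices, these obstructions can be `localised' to recursively defined sub-gadgets of $H_d$ whose interfaces with the outside world are single cut vertices, so that $\mathcal F$-minors inside one sub-gadget never leak into another.

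Using $H_d$ with $d = \Theta(\log t)$ as a skeleton, I plug in the input instances as leaf gadgets: each $G_i$ is identified with one leaf of the elimination tower through an interface vertex, in such a way that the `default' $\mathcal F$-minor-free modulator of $H_d$ (of size roughly $d$) leaves every $G_i$ intact, whereas any modulator of the same size must instead `select' one leaf index $i$ and additionally cover an $\mathcal F$-minor-free deletion set inside $G_i$. The $\mathcal G$-modulator of $G^\star$ then consists only of the union of the $V(G_i)$-copies, since removing them leaves $H_d \in \mathcal G$ (minor-closedness ensures the resulting graph is still in $\mathcal G$); its size is $O(t \cdot \max_i |V(G_i)|)$ in the weakest bound, and by the choice of $d = \Theta(\log t)$ together with a padding argument on the source problem, $\log(t)$ only contributes polylogarithmically to the parameter.

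The main obstacle will be the rigidity argument: that any $\mathcal F$-minor-free deletion set of $G^\star$ of the target size must indeed respect the intended `select-one-leaf-and-solve-there' structure, rather than exploiting $H_d$ in some unforeseen way to simultaneously destroy $\mathcal F$-minors from several inputs. I plan to prove this inductively along the elimination tower: at each level, a counting argument over the disjoint $\mathcal F$-minors forced by the `$\eldist \geq d-1$' property shows that any deviation from the intended modulator costs strictly more than the target budget. This is the analogue, for the elimination-distance setting, of the tower-rigidity arguments used to characterise \prob{Vertex Cover} via bridge-depth in~\cite{bridgedepth}; the added difficulty here is that elimination distance nests deeper than bridge-depth, so the induction has to track budget usage across $\Theta(\log t)$ levels while still leaving enough slack to pay for solving the selected instance.
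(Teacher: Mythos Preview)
Your proposal has a genuine gap in the parameter bookkeeping. In an OR-cross-composition the parameter of the output instance must be bounded by a polynomial in $\max_i |x_i| + \log t$. You take the $\mathcal G$-modulator of $G^\star$ to be the union of all the $V(G_i)$-copies, which has size $\Theta(t \cdot \max_i |V(G_i)|)$; this is \emph{not} polynomial in $\max_i |V(G_i)| + \log t$, so the framework does not deliver the claimed lower bound. Your remark that ``$d = \Theta(\log t)$ together with a padding argument'' keeps the $\log t$ contribution polylogarithmic does not address this: the skeleton $H_d$ may well be small, but the modulator you chose is the \emph{complement} of the skeleton, and that part carries all $t$ input instances. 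The obvious alternative---declaring the skeleton to be the modulator---fails as well, since then $G^\star$ minus the modulator would contain the arbitrary input graphs $G_i$, which need not lie in the minor-closed class $\mathcal G$. There is no evident way to embed $t$ unrestricted instances of $\mathcal F$-\prob{Minor Free Deletion} inside a single graph while keeping everything outside a small modulator in $\mathcal G$.

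The paper avoids this obstacle by taking a different route altogether. Rather than cross-composition, it gives a polynomial-parameter transformation from \textsc{CNF Satisfiability} on $n$ variables and invokes the Dell--van~Melkebeek communication lower bound. The structural core is a lemma, proved via the Excluded Grid Theorem, showing that any minor-closed $\mathcal G$ with unbounded elimination distance to $\mathcal F$-minor-free graphs contains arbitrarily long \emph{uniform $\mathcal F$-necklaces}. These necklaces serve as clause gadgets; each variable is encoded by a constant-size gadget, and the modulator consists only of the variable-gadget interface vertices together with one anchor copy of $H$, giving total size $O_{\mathcal F}(n)$. After deleting this modulator every component is (a minor of) a uniform $\mathcal F$-necklace and hence lies in $\mathcal G$. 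So the construction never needs to place ``arbitrary'' graphs outside the modulator---everything there is controlled by the necklace lemma---which is precisely what your cross-composition sketch cannot arrange.
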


\paragraph{Related work}
FPT algorithms  \bmp{for \textsc{Feedback Vertex Set}} parameterized by the solution size are known with running times $\bigO^*(3.490^k)$ (deterministic) and $\bigO^*(2.7^k)$ (randomized)~\cite{3592,3460,27}.
The parameterization by treewidth has \bmp{led} to deterministic $\bigO^*(11.36^k)$ and randomized $\bigO^*(3^k)$ algorithms~\cite{cutandcount,treewidthdeterm}.
Other FPT algorithms include parameterizations by deletion distance to a chordal graph, by cliquewidth and by the number of vertices of degree at least $4$~\cite{jansen2014,improvedcvd,cliquewidth,fvsparam2}.
Recently, Donkers and Jansen proposed a parameterization based on the complexity of the solution structure~\cite{antlers}.
They introduced the notion of \emph{antlers}, similar to the notion of crown decompositions for \prob{Vertex Cover}.

\paragraph{Organization}
Section~\ref{sec:prelim} introduces all relevant terminology.
Section~\ref{sec:elim-dist-proof} presents our kernel and thereby proves the `if' direction of Theorem~\ref{thm:main}.
Then Section~\ref{sec:lower-bound} contains the proof of Theorem~\ref{thm:main-lower-bound}, thereby also proving the `only if' direction of Theorem~\ref{thm:main}.
Lastly, Section~\ref{sec:discussion} contains our conclusions and discusses future work.

    \section{Preliminaries}\label{sec:prelim}

    For a positive integer $n$, we use the shorthand $[n]$ for the set of all natural numbers $i$ with $1 \le i \le n$.
All graphs we consider are finite, undirected and simple.
When $G$ is a graph, we let $V(G)$ denote the vertex set of $G$ and $E(G)$ the edge set.
For $S \subseteq V(G)$, the graph $G-S$ is the graph where all vertices in $S$ and all incident edges are removed, and the graph $G[S]$ is the subgraph of $G$ induced by the vertices in $S$.
When an edge exists between two vertices in $G$, we say that these vertices are \emph{adjacent}.
The \emph{neighbors} of $v$ in $G$, denoted with $N_G(v)$, are the vertices adjacent to a vertex $v \in V(G)$ in $G$.
For $S \subseteq V(G)$, we say that $v \in V(G-S)$ is adjacent to $S$ if there exists some edge between $v$ and a vertex in $S$.
The set $N_G(S)$ contains all vertices $v \in V(G - S)$ for which this holds.
We will sometimes slightly abuse notation and speak of a vertex being adjacent to some subgraph, rather than to the vertices in that subgraph.
We say that two vertices are \emph{connected} in $G$ when they are in the same connected component.
The set $\cc(G)$ denotes the set of connected components (or shortly components) of $G$.
For sets $S, T \subseteq (V)$, we say that $S$ \emph{separates} $T$ if each component of $G-S$ contains at most one vertex from $T$.
Notice that we do not require $S$ and $T$ to be disjoint.
Such a set $S$ is a \emph{vertex multiway cut} of $T$ in $G$.
We use the notation $\bigO_{\eta}(f(n))$ to describe the functions in $\eta$ and $n$ which can be bounded by $g(\eta) \cdot f(n)$ for some computable function $g$.

A concept that will be used extensively is the concept of graph minors.
This uses the notion of \emph{edge contraction}.
When $uv$ is an edge in a graph $G$, contracting this edge replaces vertices $u$ and $v$ by a new vertex whose set of neighbors is $N_G(\set{u, v})$.
Now $H$ is a \emph{minor} of $G$ if $H$ can be obtained from $G$ by removing vertices, removing edges and contracting edges.
Alternatively, one can define $H$ to be a minor of $G$ if there exists a \emph{minor model} $\phi \colon V(H) \to 2^{V(G)}$, such that for any $v \in V(H)$ the graph $G[\phi(u)]$ is connected, for any distinct $u, v \in V(H)$ we have $\phi(u) \cap \phi(v) = \emptyset$, and for any edge $uv$ in $H$ there exists an edge between a vertex in $\phi(u)$ and a vertex in $\phi(v)$ in $G$.

A graph $G$ has an $\mathcal H$-\emph{minor} for a set of graphs $\mathcal H$ if $G$ contains some graph $H \in \mathcal H$ as a minor.
For a minor model $\phi$ of $H$ in $G$ and a set $S \subseteq V(H)$, we use the shorthand notation $\phi(S) \coloneqq \bigcup_{v \in S} \phi(v)$.
We say that a minor model $\phi$ of $H$ in $G$ is \emph{minimal}, if there does not exist a minor model $\phi'$ of $H$ in $G$ with $\phi'(V(H)) \subsetneq \phi(V(H))$.
A minor model $\phi$ of $H$ in $G$ and a minor model $\phi'$ of $H'$ in $G$ \emph{intersect} if $\phi(V(H)) \cap \phi'(V(H')) \neq \emptyset$.

\subsection{Elimination distance and treewidth}\label{subsec:elim-distances}
Our work relies crucially on the concept of elimination distance as introduced by Bulian and Dawar~\cite{eldist}.
\begin{definition}[Elimination distance]
    \label{def:eldist}
    Let $G$ be a graph and $\mathcal G$ a graph family.
    Then the elimination distance of $G$ to $\mathcal G$ is
    \[
        \ed_{\mathcal G}(G) = \begin{cases}
                                  0 &\text{if $G \in \mathcal G$,} \\
                                  \max_{G' \in \cc(G)} \ed_{\mathcal G}(G') &\text{if $\abs{\cc(G)} > 1$,} \\
                                  \min_{v \in V(G)} \ed_{\mathcal G}(G - \set v) + 1 &\text{otherwise.}\\
        \end{cases}
    \]
\end{definition}
We only consider graph families $\mathcal G$ that are minor-closed.
We use the shorthand $\gf$ for the graph family containing precisely all forests.
We say that a graph family $\mathcal G$ has bounded elimination distance to some graph class $\mathcal H$ if all graphs in $\mathcal G$ have bounded elimination distance to $\mathcal H$.
The elimination distance of a graph $G$ to the empty graph is called the \emph{treedepth} of $G$ and is denoted with $\td(G)$.
More intuitively, the elimination distance to a graph class $\mathcal G$ can be interpreted as the minimum number of `elimination iterations' that are necessary to obtain a graph where every connected component is in $\mathcal G$.
In such an iteration, one is allowed to remove one vertex from each connected component.
This interpretation leads to the notion of an elimination forest.

\begin{definition}[$\mathcal G$-elimination forest]
    \label{def:elim-dist-decomp}
    Let $G$ be a graph and $\mathcal G$ a graph family.
    A $\mathcal G$-elimination forest of $G$ is a tuple $\left(F, (B_u)_{u \in V(F)}\right)$ where $F$ is a rooted forest and where each vertex $v \in V(F)$ has a bag $B_v \subseteq V(G)$ such that:
    \begin{itemize}
        \item The bags define a partition of $V(G)$, i.e.\ for any vertex $v \in V(G)$ there is a unique vertex $u \in V(F)$ with $v \in B_u$.
        \item For any non-leaf $u$ of $F$, its bag $B_u$ contains precisely one vertex.
        \item For any leaf $u$ of $F$, the graph $G[B_u]$ is connected and $G[B_u] \in \mathcal G$.
        \item For any edge $uv$ in $G$, let $s, t \in V(F)$ be the vertices such that $u \in B_s$ and $v \in B_t$.
        Then $s$ is an ancestor of $t$, or $t$ is an ancestor of $s$ in $F$.
    \end{itemize}
\end{definition}

We define the height of an elimination forest $F$ to be the maximum number of edges on a path from the root to a leaf in $F$.
One can prove with induction that this height is equal to the elimination distance we defined earlier.

We will use these elimination forests extensively for our kernel and therefore introduce some shorthand notation.
Let $(F, (B_u)_{u \in V(F)})$ be a $\mathcal G$-elimination forest.
Let $v$ be a vertex in $F$.
The \emph{tail} of $v$, denoted with $\tail(v)$, is defined as the union of $B_u$ over all proper ancestors $u$ of $v$.
The closed tail $\tail[v]$ also includes $B_v$.
Similarly, $\tree(v)$ denotes the union of $B_u$ over all proper descendants $u$ of $v$ and $\tree[v]$ also includes $B_v$.
The subgraph of $G$ induced by all vertices in $\tree[v]$ is denoted with $G_v$.
We will sometimes slightly abuse notation and use $G_v$ as a vertex set.
We use the shorthand $G_v^+$ to describe the induced subgraph on the vertices in $\tree[v] \cup \tail[v]$.

We will also introduce the notion of \emph{bridge-depth} as introduced by Bougeret, Jansen and Sau~\cite{bridgedepth}.
A \emph{bridge} in a graph $G$ is an edge whose removal increases the number of connected components of $G$.
The concept of bridge-depth now allows us to delete a set of vertices $S$ as long as $G[S]$ is connected and each edge in $G[S]$ is a bridge in $G$.
Such a structure $G[S]$ is called a \emph{tree of bridges}. 
Observe that a single vertex is always a tree of bridges.
\begin{definition}[Bridge-depth]
    \label{def:bd}
    Let $G$ be a graph.
    The bridge-depth of $G$ is defined as
    \[
        \bd(G) = \begin{cases}
                     0 &\text{if $G$ is the empty graph,} \\
                     \displaystyle \max_{G' \in \cc(G)} \bd(G') &\text{if $\abs{\cc(G)} > 1$,} \\
                     \displaystyle \min_{\substack{S \subseteq V(G): \\ \text{$G[S]$ is a tree of bridges}}} \bd(G - S) + 1 &\text{otherwise.}\\
        \end{cases}
    \]
\end{definition}
Cf.~\cite{bridgedepth} for equivalent definitions.
Lastly, we sometimes use the more common concept of \emph{treewidth} which can be defined using a \emph{tree decomposition}.

\begin{definition}[Tree decomposition]
    Let $G$ be a graph.
    A tree decomposition of $G$ is a tuple $(R, (B_u)_{u \in V(R)})$ where $R$ is a rooted tree and where each node $u \in V(R)$ has a bag $B_u \subseteq V(G)$ with the following properties:
    \begin{itemize}
        \item For every vertex $v \in V(G)$, there exists a node $u \in V(R)$ with $v \in B_u$.
        \item For every edge $vw \in E(G)$, there exists a node $u \in V(R)$ with $\set{v, w} \subseteq B_u$.
        \item For every vertex $u \in V(G)$, the subgraph of $R$ induced by all vertices whose bag contains $u$ is a tree.
    \end{itemize}
\end{definition}

For $u \in V(R)$ in the setting above, we let $G_u$ be the subgraph of $G$ induced by the union of $B_v$ over all descendants $v$ of $u$ in the rooted tree $R$, i.e.
\[
    G_u = G\left[ \bigcup_{v \in V(R_u)} B_v \right].
\]
The \emph{width} of a tree decomposition is defined as the size of its largest bag minus 1.
The \emph{treewidth} of a graph $G$, denoted with $\tw(G)$, can now be defined as the minimum width over all tree decompositions of $G$.
We mention some useful properties of these concepts in Proposition~\ref{prop:ed-properties}.

\begin{proposition}
    \label{prop:ed-properties}
    Let $G$ be a graph with $\mathcal G_F$-elimination forest $\left(F, (B_u)_{u \in V(F)}\right)$ and let $\eta$ be an integer such that $\ed_{\mathcal G_F}(G) \le \eta$.
    Let $X$ be a minimum feedback vertex set in $G$ and let $v$ be a leaf in $F$.
    Then the following claims hold. 
    \begin{enumerate}
        \item\label{prop:tw} $\tw(G) \le \bd(G) \le \ed_{\gf}(G) + 1$.
        \item\label{prop:X-cap-Bv} $X$ contains at most $\eta$ vertices from $B_v$.
        \item\label{prop:path-via-tail} If there exists a path in $G$ from a vertex in $B_v$ to a vertex outside $B_v$, then this path contains a vertex in $\tail(v)$.
        \item\label{prop:X-cap-Gc} Let $u \in V(F)$, then $u$ has at most $\eta$ children $c$ where $X \cap G_c$ is not a minimum feedback vertex set in $G_c$.
    \end{enumerate}
\end{proposition}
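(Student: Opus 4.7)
The plan is to tackle the four items in the order (1), (3), (2), (4), using each part as a lemma for the later ones. For (1), the bound $\tw(G)\le \bd(G)$ is due to Bougeret, Jansen and Sau~\cite{bridgedepth}, so it remains to show $\bd(G)\le \ed_{\gf}(G)+1$ by induction on $\ed_{\gf}(G)$: a forest has bridge-depth at most $1$ since each tree component is itself a tree of bridges; the multi-component case follows directly from the definition; and in the connected case a singleton witness vertex $v$ for $\ed_{\gf}(G)$ is always a tree of bridges, so $\bd(G)\le\bd(G-v)+1$. For (3), I examine the first vertex $x$ on the path that lies outside $B_v$ together with its predecessor $y\in B_v$; the edge $xy$ forces the bags of $x$ and $y$ to be comparable in $F$, and since $v$ is a leaf the bag of $x$ must be a proper ancestor of $v$, whence $x\in\tail(v)$.

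For (2) I use an exchange argument. Given a minimum feedback vertex set $X$, replace $X\cap B_v$ by $\tail(v)$, setting $X' = (X\setminus B_v)\cup\tail(v)$. Because $v$ is a leaf of a $\gf$-elimination forest, $G[B_v]\in\gf$ is a forest, so no cycle of $G$ lies entirely in $B_v$; by (3) every cycle that touches $B_v$ also meets $\tail(v)\subseteq X'$; and cycles avoiding $B_v$ are still broken by $X\setminus B_v$. Hence $X'$ is a feedback vertex set. Since the proper ancestors of the leaf $v$ all have singleton bags and $v$ has depth at most $\eta$, we have $|\tail(v)|\le\eta$, so the assumption $|X\cap B_v|>\eta$ would yield $|X'|<|X|$, contradicting the minimality of $X$.

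For (4) I apply the same idea at a non-leaf $u$: suppose $c_1,\dots,c_k$ are children of $u$ such that $X\cap V(G_{c_i})$ is not a minimum feedback vertex set of $G_{c_i}$, and let $Y_{c_i}$ be a minimum one. I replace $X$ by $X' = (X\setminus\bigcup_i V(G_{c_i}))\cup\bigcup_i Y_{c_i}\cup \tail[u]$. A non-leaf analogue of (3), proved identically, shows that every edge from $V(G_{c_i})$ to its complement has its endpoint outside $V(G_{c_i})$ in $\tail(c_i)=\tail[u]$; therefore any cycle with vertices both inside and outside some $V(G_{c_i})$ must contain a vertex of $\tail[u]\subseteq X'$, while cycles inside a single $G_{c_i}$ are broken by $Y_{c_i}$ and cycles disjoint from $\bigcup_i V(G_{c_i})$ by $X\setminus\bigcup_i V(G_{c_i})\subseteq X'$. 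Since $u$ is a non-leaf at depth at most $\eta-1$, $|\tail[u]|\le\eta$, and each $|Y_{c_i}|$ is strictly less than $|X\cap V(G_{c_i})|$; the exchange therefore saves $k$ vertices at a cost of at most $\eta$, yielding $|X'|<|X|$ as soon as $k>\eta$. The main technical care lies in (4), where one must verify that every crossing cycle, in particular one traversing the subtrees of two distinct children without visiting $B_u$ directly, is hit by $\tail[u]$; this is precisely where the recursive edge-discipline of the $\gf$-elimination forest is used.
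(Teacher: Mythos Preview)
Your proof is correct and follows essentially the same route as the paper: the same citation for $\tw\le\bd$, the same single-vertex-as-tree-of-bridges observation for $\bd\le\ed_{\gf}+1$, the same boundary-edge argument for (3), and the same exchange arguments $X'=(X\setminus B_v)\cup\tail(v)$ for (2) and $X'=(X\setminus\bigcup_i V(G_{c_i}))\cup\bigcup_i Y_{c_i}\cup\tail[u]$ for (4). You are in fact slightly more careful than the paper in two places: you explicitly note that part~(3) as stated concerns only a leaf bag $B_v$ while part~(4) requires the analogous statement for a whole subtree $G_{c_i}$ (the paper invokes (3) directly there), and you make explicit that $u$ must be a non-leaf for $|\tail[u]|\le\eta$ to hold, which the paper leaves implicit.
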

\begin{proof}
    \textbf{Proof of~\ref{prop:tw}.}
    Observe that a single vertex is also a tree of bridges.
    One can therefore derive that $\bd(G) \le \ed_{\gf}(G) + 1$.
    There is a difference of one, since a forest has elimination distance 0 to a forest while its bridgedepth is 1.
    Besides, it is known that $\tw(G) \le \bd(G)$ for any graph $G$~\cite[Proposition 3.2]{bridgedepth}, thereby concluding the proof.
    
    \textbf{Proof of~\ref{prop:X-cap-Bv}.}
    Suppose towards a contradiction that $X$ is a minimum feedback vertex set in $G$, but $\abs{B_v \cap X} > \eta$.
    We claim that $X' \coloneqq (X \setminus B_v) \cup \tail(v)$ is a feedback vertex set in $G$ of smaller size.
    Since $\abs{\tail(v)} \le \eta$, it follows that $\abs{X'} < \abs X$.
    Furthermore, suppose towards a contradiction that $X'$ is not a feedback vertex set, then it contains some cycle disjoint from $\tail(v)$.
    However, then this cycle must also be disjoint from $B_v$, as $G[B_v]$ is a forest and $N(B_v) \subseteq \tail(v)$ which are all in $X'$.
    Therefore, this cycle must also be contained in $G-X$ which is a contradiction with $X$ being a feedback vertex set.
    We conclude that any minimum feedback vertex set contains at most $\eta$ vertices from $B_v$.
    
    \textbf{Proof of~\ref{prop:path-via-tail}.}
    If there exists path from a vertex in $B_v$ to a vertex outside $B_v$, then there also exists an edge between a vertex in $B_v$ to a vertex outside $B_v$.
    By definition of an elimination forest and since $v$ is a leaf, this edge has an endpoint in the bag of a proper ancestor of $v$, so it is in $\tail(v)$. 
    
    \textbf{Proof of~\ref{prop:X-cap-Gc}.}
    Let $C$ denote the set of children of $u$ and let $C' \subseteq C$ be those children $w$ where $X \cap G_w$ is not a minimum feedback vertex set in $G_w$.
    Assume towards a contradiction that $\abs{C'} > \eta$, then we are going to construct a feedback vertex set with strictly fewer vertices.
    For all children $w \in C'$, let $X'_w$ be a minimum feedback vertex set in $G_w$.
    We now claim that
    \[
        X' \coloneqq \left( X \setminus \bigcup_{w \in C'} (X \cap G_w) \right) \cup \tail[u] \cup \bigcup_{w \in C'} X'_w
    \]
    is a feedback vertex set in $G$ with less vertices than $X$.
    Since $\abs{X \cap G_w} > \abs{X'_w}$ for each $w \in C'$, $\abs{C'} > \eta$ and $\abs{\tail[u]} \le \eta$, it follows that $\abs{X'} < \abs X - \eta + \eta = \abs X$.

    Observe that any cycle that does not contain vertices in $G_w$ for some child $w \in C'$ must be broken by $X'$, as no vertex from this cycle was removed from $X$ while $X$ is a feedback vertex set.
    Similarly, any cycle that only contains vertices in $G_w$ for some $w \in C'$ must be broken by $Y$, because $X'_w$ is a feedback vertex set in $G_w$.
    Any remaining cycle in $G-X'$ therefore must contain a vertex in $G_w$ for some $w \in C'$ and a vertex outside this subgraph.
    By Proposition~\ref{prop:ed-properties}.\ref{prop:path-via-tail}, any path in this component between a vertex in $G_w$ and a vertex outside $G_w$ must contain a vertex from $\tail(w) = \tail[u]$.
    This implies directly that also any cycle of the last category must be broken by $X'$.
    We can conclude that $X'$ is a feedback vertex set with strictly fewer vertices than $X$, thereby contradicting that $X$ is a minimum feedback vertex set.
\end{proof}

    \section{Kernelization upper bounds}\label{sec:elim-dist-proof}

    We present a kernelization algorithm for Feedback Vertex Set in the deletion distance to a graph of constant elimination distance to a forest.
The structure follows the kernelization algorithm for the generalized $\mathcal F$-Minor Free Deletion problem parameterized by deletion distance to a graph of constant treedepth, as presented by Jansen and Pieterse~\cite[Lemma 3]{jansenpieterse2018}.
The proof of their algorithm requires a lemma on labeled minors.
Since we restrict ourselves to the feedback vertex set problem, our reformulation in Lemma~\ref{lemma3} in the introduction suffices for our purposes.
We repeat and prove this formulation in Section~\ref{subsec:main-lemma-proof} and describe the kernelization algorithm in Section~\ref{subsec:kernel}.
We first need to address some preliminaries.

\subsection{Additional preliminaries for Section~\ref{sec:elim-dist-proof}}\label{app:prelim-elim-dist-proof}
Let $G$ be a graph and $X \subseteq V(G)$.
For our kernel, it will be important to bound the number of relevant components of $G - X$ when $X$ is a modulator to a graph with constant elimination distance to a forest.
This will be done by characterizing how a component $C \in \cc(G-X)$ is connected to $X$.
To this end, we define the following notation.
\begin{definition}[$\mathcal S_G$ and $\mathcal T_G$]
    \label{def:st}
    Let $G$ be a graph with $X \subseteq V(G)$ and $C \in \cc(G-X)$.
    Then
    \begin{gather*}
        \mathcal S_G(C, X) \coloneqq \set{v \in V(C) | \abs{N_G(v) \cap X} \ge 2}, \\
        \mathcal T_G(C, X) \coloneqq \set{v \in V(C) | \abs{N_G(v) \cap X} = 1}.
    \end{gather*}
\end{definition}

These sets characterize how $C$ is connected to the remainder of $G$.
The vertices in $\mathcal S_G(C, X)$ are those vertices in $C$ with at least two neighbors in $X$, while the vertices in $\mathcal T_G(C,X)$ are those with exactly one neighbor.
We will often use these sets in the context of a feedback vertex set missing vertices from $\mathcal S_G(C, X)$ or leaving a pair in $\mathcal T_G(C,X)$ connected.
We therefore use the notation with $\mathcal S$ and $\mathcal T$ in analogy with sets $S$ and $T$ in the formulation of Lemma~\ref{lemma3}.
The following proposition presents such a relation between these sets and feedback vertex sets.
Recall that we say that a vertex set $X$ separates a vertex set $T$ in a graph $G$ if each connected component of $G-X$ contains at most one vertex from $T$.

\begin{proposition}
    \label{prop:samen-fvs}
    Let $G$ be a graph, $X \subseteq V(G)$ and $C^* \in \cc(G-X)$.
    Define $\widehat G = G - V(C^*)$.
    Let $\widehat Y$ be a feedback vertex set in $\widehat G$ and let $Y^*$ be a feedback vertex set in $C^*$.
    Define $X_0 = X \setminus \widehat Y$.
    If $Y^*$ contains $\mathcal S_G(C^*, X_0)$ and separates $\mathcal T_G(C^*, X_0)$, then $\widehat Y \cup Y^*$ is a feedback vertex set in $G$.
\end{proposition}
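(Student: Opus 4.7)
I would argue by contradiction: suppose $G - (\widehat Y \cup Y^*)$ contains a cycle $Z$, and show that each configuration of $Z$ contradicts one of the three hypotheses. Note first that $\widehat Y \subseteq V(\widehat G) = V(G) \setminus V(C^*)$ and $Y^* \subseteq V(C^*)$ are disjoint, so removing one of them does not affect the other. I would split on whether $V(Z) \subseteq V(\widehat G)$, $V(Z) \subseteq V(C^*)$, or $V(Z)$ meets both sides.

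The two ``entirely in'' cases dispatch immediately: if $V(Z) \subseteq V(\widehat G)$ then $Z$ is a cycle in $\widehat G - \widehat Y$, contradicting that $\widehat Y$ is a feedback vertex set in $\widehat G$; if $V(Z) \subseteq V(C^*)$ then $Z$ is a cycle in $G[V(C^*)] - Y^*$, contradicting that $Y^*$ is a feedback vertex set in $C^*$. The substantive case is when $Z$ visits both sides. Because $C^*$ is a connected component of $G - X$, every edge of $G$ between $V(C^*)$ and $V(\widehat G)$ has its $\widehat G$-endpoint in $X$. Hence any vertex of $Z$ in $V(\widehat G)$ that is a cyclic neighbor of a $V(C^*)$-vertex must lie in $X$, and in fact in $X_0 = X \setminus \widehat Y$, since $Z$ avoids $\widehat Y$.

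The main step is to examine the maximal subpaths $P$ of $Z$ whose vertices lie in $V(C^*)$; at least one such $P$ exists in the crossing case. If $P$ is a single vertex $v$, then both cyclic neighbors of $v$ on $Z$ are distinct vertices of $X_0$, whence $|N_G(v) \cap X_0| \ge 2$ and $v \in \mathcal S_G(C^*, X_0) \subseteq Y^*$, contradicting $v \in V(Z)$. Otherwise $P$ has two distinct endpoints $u, v$, each contributing one cyclic neighbor in $X_0$; the same $\mathcal S$-argument rules out $|N_G(u) \cap X_0| \ge 2$ (and symmetrically for $v$), so $\{u,v\} \subseteq \mathcal T_G(C^*, X_0)$. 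But $P \subseteq C^* - Y^*$ witnesses that $u$ and $v$ lie in the same component of $C^* - Y^*$, contradicting that $Y^*$ separates $\mathcal T_G(C^*, X_0)$. The only delicate point is the bookkeeping: consistently working with $X_0$ rather than $X$, so that avoidance of $\widehat Y$ by $Z$ lines up exactly with the hypotheses, stated in terms of $X_0$.
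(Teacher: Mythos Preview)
Your proposal is correct and follows essentially the same approach as the paper's proof: argue by contradiction, dispose of the two ``entirely on one side'' cases trivially, and in the crossing case analyze a maximal subpath of the cycle inside $C^*$, using the $\mathcal S$-hypothesis to rule out the single-vertex case and the $\mathcal T$-separation hypothesis to rule out the multi-vertex case. Your write-up is in fact slightly more explicit than the paper's in justifying why the endpoints of the longer subpath land in $\mathcal T_G(C^*,X_0)$ rather than $\mathcal S_G(C^*,X_0)$, and in emphasizing the $X_0$ versus $X$ bookkeeping.
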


One should observe that this statement is slightly stronger than requiring that $Y^*$ contains $\mathcal S_G(C^*, X)$ and separates $\mathcal T_G(C^*, X)$.
We will rely on our weakened assumption when applying Proposition~\ref{prop:samen-fvs}.

\begin{proof}[Proof of Proposition~\ref{prop:samen-fvs}]
    Assume towards a contradiction that $\widehat Y \cup Y^*$ is not a feedback vertex set in $G$.
    Then observe that any cycle must contain vertices both from $C^*$ and from $\widehat G$, since any other cycle is broken trivially by either $\widehat Y$ or $Y^*$.
    This implies that there exists a path of at least two vertices in $C^* - Y^*$ whose endpoints have a neighbor in $X_0$: the path cannot contain only one vertex, since such a vertex would have two neighbors in $X$, while $\mathcal S_G(C^*, X_0)$ is included in $Y^*$.
    Therefore, the path has length at least one and both endpoints are in $\mathcal T_G(C^*, X_0)$.
    However, then a pair of vertices in $\mathcal T_G(C^*, X_0)$ is connected in $C^* - Y^*$, reaching a contradiction with the assumption that $Y^*$ separates $\mathcal T_G(C^*, X_0)$.
    It follows that $G - (Y^* \cup \widehat Y)$ is acyclic and that $Y^* \cup \widehat Y$ is a feedback vertex set in $G$.
\end{proof}

The following proposition explains a property of acyclic graphs which is useful in the context of feedback vertex sets.

\begin{proposition}
    \label{prop:niet-zo-veel-paths}
    Let $G$ be an acyclic graph with $X \subseteq V(G)$ and let $\mathcal C = \cc(G-X)$.
    Let $X^* \subseteq X$.
    There are at most $\abs{X^*} - 1$ components $C \in \mathcal C$ for which there exist distinct $u, v \in X^*$ such that some $u$--$v$ path contains a vertex in $C$.
\end{proposition}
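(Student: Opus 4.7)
The plan is to contract each component of $\mathcal{C}$ to a single vertex, observe that the resulting graph is a forest, and finish with a bipartite double-counting argument.

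First I would form the graph $G'$ obtained from $G$ by contracting each $C \in \mathcal{C}$ into one vertex, so that $V(G') = X \cup \mathcal{C}$. Acyclicity of $G$ forces every $x \in X$ to have at most one neighbor in any single component $C$, since combining two such edges with the unique $C$-internal path between those neighbors would yield a cycle in $G$. Hence $G'$ has no loops or parallel edges, and a similar lifting argument (expanding any candidate cycle in $G'$ by inserting a $C$-internal path at each $\mathcal{C}$-vertex of the cycle) shows that $G'$ is itself a forest. Moreover, $\mathcal{C}$ is an independent set in $G'$, since $G$ has no edges between distinct components of $G - X$.

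Second I would reinterpret the condition on $C$: there exist distinct $u, v \in X^*$ and a $u$--$v$ path through a vertex of $C$ precisely when both $u$ and $v$ are adjacent in $G$ to vertices of $C$, equivalently when the vertex representing $C$ in $G'$ has at least two neighbors in $X^*$. Let $\mathcal{C}_{\mathrm{rel}} \subseteq \mathcal{C}$ denote the set of such components, so the task becomes bounding $|\mathcal{C}_{\mathrm{rel}}|$.

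Third I would finish with an edge-count in the bipartite subgraph $B$ of $G'$ on vertex set $X^* \cup \mathcal{C}_{\mathrm{rel}}$, consisting of all edges of $G'$ between these two parts. As a subgraph of the forest $G'$, the graph $B$ is itself a forest, so $|E(B)| \leq |X^*| + |\mathcal{C}_{\mathrm{rel}}| - 1$ whenever $B$ is nonempty. Since $\mathcal{C}$ is independent in $G'$, every edge of $B$ has exactly one endpoint in $\mathcal{C}_{\mathrm{rel}}$, and each $C \in \mathcal{C}_{\mathrm{rel}}$ has degree at least $2$ in $B$ by definition; summing degrees over $\mathcal{C}_{\mathrm{rel}}$ gives $|E(B)| \geq 2|\mathcal{C}_{\mathrm{rel}}|$. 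Combining the two inequalities yields $|\mathcal{C}_{\mathrm{rel}}| \leq |X^*| - 1$.

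The main technical step is confirming that contracting the components of $G - X$ in the acyclic graph $G$ yields a forest, since the whole bipartite double-count relies on this structural fact; once $G'$ is identified as a forest and $\mathcal{C}$ as an independent set in it, the remainder is the standard edge count $|E| \leq |V| - 1$ applied to a bipartite forest whose $\mathcal{C}$-side has minimum degree~$2$.
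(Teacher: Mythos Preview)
Your approach differs from the paper's: the paper argues by contradiction, selecting for each offending component $C$ a minimal path $P_C$ with endpoints in $X^*$ passing through $C$, asserting that its interior lies entirely in $C$, building an auxiliary (multi)graph on vertex set $X^*$ with one edge per $P_C$, finding a cycle in it (since it would have $|X^*|$ edges on $|X^*|$ vertices), and lifting that cycle back to $G$. Your contraction-plus-bipartite-counting route is more direct and avoids the explicit cycle construction.

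There is, however, a gap in your second step: the forward direction of the claimed equivalence fails when $X^* \subsetneq X$. Take $G$ to be the path $u\text{--}a\text{--}b\text{--}c\text{--}v$ with $X = \{u,b,v\}$ and $X^* = \{u,v\}$. The component $\{a\}$ lies on the unique $u$--$v$ path, yet its only $X^*$-neighbour is $u$ (its other neighbour $b$ is in $X \setminus X^*$). In fact both $\{a\}$ and $\{c\}$ satisfy the proposition's hypothesis here, so the bound $|X^*|-1 = 1$ is already violated and the proposition as literally stated is false; the paper's proof contains the analogous slip where it asserts that minimality of $P_C$ forces all internal vertices into $C$. Your argument (and the paper's) becomes correct if one takes $\mathcal{C} = \cc(G - X^*)$ instead of $\cc(G - X)$, since then $N_G(C) \subseteq X^*$ and any path through $C$ must enter and leave via distinct $X^*$-vertices, making your equivalence valid.
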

\begin{proof}
    Assume towards a contradiction that there exist $\abs{X^*}$ of these components.
    For such a component $C$, let $P_C$ be a minimal path that has its endpoints in $X^*$ while it also contains a vertex in $C$.
    Observe that such a path contains at least three vertices and notice that the minimality of the path implies that all vertices other than the endpoints are in $C$.
    If two of these paths have the same endpoints, we directly obtain a cycle in $G$ consisting of these two paths.
    Otherwise, we create an auxiliary graph $H$ on vertex set $X^*$.
    We add an edge between $z_1, z_2 \in V(X^*)$ if for some component $C$ the path $P_C$ has endpoints $z_1$ and $z_2$.
    Since we add $\abs{X^*}$ unique edges to a graph of at most $\abs{X^*}$ vertices, $H$ contains a cycle.
    Now consider the cyclic walk in $G$ traversing all paths corresponding to the edges in the cycle in $H$.
    We claim that each vertex occurs only once in this walk.
    Observe that no vertex in $X^*$ occurs twice, as only the endpoints of each path corresponding to a child are in $Z$.
    Furthermore, a vertex outside $X^*$ cannot occur on multiple paths since those paths contain vertices from different components.
    Therefore all vertices are unique in the walk, so we obtain a cycle in $G$ which contradicts it being acyclic.
    We conclude that there exist at most $\abs{X^*} - 1$ of these components.
\end{proof}

A relation between Proposition~\ref{prop:niet-zo-veel-paths}, feedback vertex sets and the earlier defined functions $\mathcal S$ and $\mathcal T$ is described in Proposition~\ref{prop:max-gamma-components}.

\begin{proposition}
    \label{prop:max-gamma-components}
    Let $G$ be a graph, $X \subseteq V(G)$ and $\mathcal C = \cc(G-X)$.
    Let $Y$ be a feedback vertex set in $G$ and let $X^* \subseteq X \setminus Y$.
    Then there are at most $\abs{X^*} - 1$ components $C \in \mathcal C$ for which $Y \cap V(C)$ misses a vertex in $\mathcal S_G(C, X^*)$ or leaves a pair of vertices in $\mathcal T_G(C, X^*)$ connected in $C - Y$.
\end{proposition}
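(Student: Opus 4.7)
My plan is to reduce this to Proposition~\ref{prop:niet-zo-veel-paths} applied to the acyclic graph $G' \coloneqq G - Y$. Since $X^* \subseteq X \setminus Y$, every vertex of $X^*$ lies in $G'$, and the components of $G' - (X \setminus Y) = G - (X \cup Y)$ are exactly the connected components obtained by cutting each $C \in \mathcal C$ by $Y \cap V(C)$. In particular, every such subcomponent $C'$ is contained in a unique $C \in \mathcal C$. The strategy is: for every ``bad'' component $C \in \mathcal C$ (one witnessing the failure condition with respect to $X^*$), I exhibit a component $C'$ of $G' - (X \setminus Y)$ contained in $C$ and two distinct vertices $u, v \in X^*$ connected by a path in $G'$ that passes through $C'$. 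Since the map $C \mapsto C'$ sends distinct bad components to disjoint (hence distinct) subcomponents, Proposition~\ref{prop:niet-zo-veel-paths} applied with parameter $X^* \subseteq X \setminus Y$ bounds the number of witnesses, and thus of bad $C$'s, by $\abs{X^*} - 1$.

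To produce the path for a bad $C$, I split into the two failure cases. If $Y \cap V(C)$ misses some $w \in \mathcal S_G(C, X^*)$, then $w \in V(C) \setminus Y$ has at least two neighbors $u, v \in X^*$, so $u$--$w$--$v$ is a path in $G'$ through the subcomponent of $G' - (X \setminus Y)$ that contains $w$. If instead $Y$ leaves a pair $u', v' \in \mathcal T_G(C, X^*)$ connected in $C - Y$, let $u, v \in X^*$ be the unique $X^*$-neighbors of $u'$ and $v'$; concatenating $u$, the $u'$--$v'$ path in $C - Y$, and $v$ yields a $u$--$v$ walk in $G'$ passing through a single subcomponent of $G' - (X \setminus Y)$.

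The one subtle point, which is really the only step needing care, is ruling out $u = v$ in the second case. If the two $X^*$-neighbors coincide, then the closed walk $u \to u' \to \cdots \to v' \to u$ is a cycle of $G$ whose vertex set is disjoint from $Y$ (since $u \in X^* \subseteq X \setminus Y$ and the intermediate vertices lie in $C - Y$), contradicting the hypothesis that $Y$ is a feedback vertex set. Hence $u \neq v$, and the path is a genuine witness as required.

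Combining the two cases, each bad $C$ contributes a distinct subcomponent of $G' - (X \setminus Y)$ that carries a $u$--$v$ path for some distinct $u, v \in X^*$ in the acyclic graph $G'$. Proposition~\ref{prop:niet-zo-veel-paths} then caps the number of such subcomponents by $\abs{X^*} - 1$, which is exactly the bound claimed.
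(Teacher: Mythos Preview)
Your argument is correct and follows essentially the same approach as the paper: reduce to Proposition~\ref{prop:niet-zo-veel-paths} in the acyclic graph $G-Y$, producing for each bad component a path between two distinct vertices of $X^*$, and handle the $u=v$ degeneracy in the $\mathcal T$-case via the feedback-vertex-set hypothesis. Your write-up is in fact more careful than the paper's on one point: you make explicit the injection from bad components $C\in\cc(G-X)$ into components of $(G-Y)-(X\setminus Y)$, whereas the paper simply says ``by applying Proposition~\ref{prop:niet-zo-veel-paths}'' and leaves that correspondence implicit.
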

\begin{proof}
    Let $\mathcal C'$ denote the set of all components $C \in \mathcal C$ for which $Y \cap V(C)$ misses a vertex in $\mathcal S_G(C, X^*)$ or leaves a pair of vertices in $\mathcal T_G(C, X^*)$ connected in $C - Y$.
    Observe that for none of these components $C$, the graph $C - Y$ contains a path of length one or more where both endpoints are adjacent to the same vertex $u \in X^*$ in $G - Y$, as this contradicts $Y$ being a feedback vertex set.
    Therefore for each $C \in \mathcal C'$, there exists a path in $G-Y$ containing a vertex in $C$ such that both endpoints are distinct vertices from $X^*$.
    By applying Proposition~\ref{prop:niet-zo-veel-paths}, we now obtain the required bound on $\mathcal C'$.
\end{proof}

We will also use the following lemma on the time complexity of determining minimum feedback vertex sets in a graph of bounded elimination distance to a forest.
We should note that both results can also be derived from Courcelle's theorem as explained by Jansen and Pieterse in the context of graphs of bounded treedepth~\cite{jansenpieterse2018}.

\begin{lemma}
    \label{lemma5}
    Let $G$ be a graph and $\eta$ an integer such that $\ed_{\mathcal G_F}(G) \le \eta$.
    Let $S, T \subseteq V(G)$.
    Then one can compute $\FVS(G)$ in time $n^{\bigO_\eta(1)}$.
    Furthermore, one can determine whether there exists a minimum feedback vertex set in $G$ that contains $S$ and separates $T$ in time $n^{\bigO_\eta(1)}$.
\end{lemma}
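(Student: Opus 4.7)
The plan is to exploit Proposition~\ref{prop:ed-properties}.\ref{prop:tw}, which gives $\tw(G) \le \eta + 1$, so that both claims reduce to standard dynamic programming on a tree decomposition of width $\bigO(\eta)$. I would first compute such a decomposition in time $n^{\bigO_\eta(1)}$ by any standard treewidth algorithm, and turn it into a rooted, normalized decomposition with introduce, forget and join nodes.

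For the first claim I would run the textbook DP for \prob{Feedback Vertex Set} on this decomposition: at each bag $B$ the state records a subset $X_B \subseteq B$ (the intersection of a candidate FVS with $B$) together with a partition of $B \setminus X_B$ describing which of its vertices are already connected in the processed part of $G - X$, and stores the minimum number of FVS-vertices chosen so far. The number of states per bag is bounded by a function of $\eta$ and each transition costs polynomial time in $n$, so $\fvs(G)$ can be read off at the root in total time $n^{\bigO_\eta(1)}$. For the second claim I would augment this DP in two ways: to enforce $S \subseteq X$, simply discard any state with $B \cap S \not\subseteq X_B$; to enforce that $X$ separates $T$, refine each class of the partition by attaching either a distinguished $T$-vertex that has been absorbed into it or the symbol ``none'', and reject any transition that would merge two classes carrying distinct $T$-labels or insert a second $T$-vertex into an already labelled class. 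The number of annotated states per bag stays bounded in $\eta$, so the running time remains $n^{\bigO_\eta(1)}$; comparing the resulting optimum with $\fvs(G)$ from the first part answers the decision question.

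The only delicate point in the plan is the $T$-separation bookkeeping: one has to verify that the local rule ``never merge two classes with distinct $T$-labels'' captures exactly the global requirement that every connected component of $G-X$ contains at most one vertex of $T$. This is straightforward from the standard invariant that the partition stored at each bag describes precisely the connected components of the processed part of $G-X$ intersected with the bag, so two $T$-vertices eventually lie in the same component of $G-X$ if and only if at some transition their classes are forced to merge. Alternatively, as already noted in the remark preceding the lemma, one can bypass the explicit DP by phrasing both tasks as optimization problems over an MSO formula with $S$ and $T$ as set parameters and invoking the optimization variant of Courcelle's theorem on a tree decomposition of width $\bigO(\eta)$, in the same way as was done for the treedepth setting in~\cite{jansenpieterse2018}.
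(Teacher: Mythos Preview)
Your proposal is correct and matches the paper's own proof sketch almost exactly: both invoke Proposition~\ref{prop:ed-properties}.\ref{prop:tw} to bound the treewidth by~$\eta+1$, build a tree decomposition, run the standard FVS dynamic program with states $(Z,\mathcal P)$ for the first claim, and augment the state with per-class $T$-information for the second claim, while also noting the Courcelle's-theorem alternative. The only cosmetic differences are that the paper deletes~$S$ up front rather than filtering states, and it stores a boolean ``contains a $T$-vertex'' per partition class instead of the identity of that vertex; your variant still fits within the target $n^{\bigO_\eta(1)}$ bound, though your sentence ``the number of annotated states per bag stays bounded in~$\eta$'' is literally true only for the boolean encoding, not if you actually store the $T$-vertex identity.
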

\begin{proof}[Proof sketch]
    As the approach is standard while working out the details would be tedious, we only sketch the high-level ideas for completeness.
    Observe that by Proposition~\ref{prop:ed-properties}.\ref{prop:tw}, $G$ also has treewidth at most $\eta+1$ and we can determine a tree decomposition of width $\eta+1$ in polynomial time for fixed $\eta$~\cite{treedecompositions}.
    We let $\left(R, (B_u)_{u \in V(R)}\right)$ denote such a tree decomposition.
    The size of a minimum feedback vertex set can be obtained with dynamic programming over this tree decomposition.
    The subproblem for node $t \in V(R)$ with $Z \subseteq B_t$ and a partition $\mathcal P$ of $B_t - Z$ can be defined as the size of a minimum feedback vertex set $Y$ in $G_t$, such that
    \begin{itemize}
        \item $Y \cap B_t$ equals $Z$.
        \item Two vertices of $B_t$ are in the same set in $\mathcal P$ if and only if they are in the same connected component in $G_t - Y$.
    \end{itemize}
    Since $\abs{B_t} \le \eta+2$, the number of partitions can be bounded by $(\eta+2)^{\eta+2}$, which will yield a polynomial running time for constant $\eta$.
    We refer to~\cite[Section 7.3.3]{parameterizedAlgorithms} for a description of the recurrences for these type of problems.

    With a minor modification, we can also derive the size of a minimum set of vertices $Y$ in $G$ such that $Y$ is a feedback vertex set, $Y$ contains $S$ and $Y$ separates $T$ in $G$.
    Observe that the minimum size of such a set is equal to the size of a minimum feedback vertex set, if and only if there exists a minimum feedback vertex set that also contains $S$ and separates $T$.
    After removing $S$ from the graph, we use a similar dynamic programming approach over a tree decomposition.
    The subproblem for node $t \in V(R)$ with $Z \subseteq B_t$ and a partition $\mathcal P$ of $B_t - Z$ now also takes a boolean value for each class of the partition and asks for the minimum size of a feedback vertex set $Y$ in $G_t$ that separates $T \cap V(G_t)$ such that
    \begin{itemize}
        \item $Y \cap B_t$ equals $Z$.
        \item Two vertices of $B_t$ are in the same set in $\mathcal P$ if and only if they are in the same connected component in $G_t - Y$.
        \item A connected component of $G_t - Y$ contains no vertices in $T$ if the boolean value for the corresponding class of the partition is \textsc{False}.
        \item A connected component of $G_t - Y$ contains precisely one vertex in $T$ if the boolean value for the corresponding class of the partition is \textsc{True}.
    \end{itemize}
    When merging two subtrees, we must ensure that a connected component contains only one vertex in $T$ after merging.
    We leave the details of the recursion to the reader.
\end{proof}

Lastly, we will use the concept of a \emph{vertex multiway cut}~\cite{bergougnoux2021node}.
Given a graph $G$ and a set of vertices $T \subseteq V(G)$ (often called the \emph{terminals}), the minimum vertex multiway cut problem asks for a minimum set $S \subseteq V(G)$ such that no connected component of $G-S$ contains more than one terminal.
We say that such a set $S$ \emph{separates} $T$.
It is important that this vertex set $S$ is allowed to intersect the set $T$.
Notice that the edge multiway cut problem is a more common variant, which is for example explained in detail by Garg, Vazirani and Yannakakis~\cite{gargnaveenvizirani1993}.
We will explore a special case of covering/packing duality on trees regarding these cuts, for which we use the following definition.

\begin{definition}[$T$-path]
    Let $G$ be a graph and let $T \subseteq V(G)$ be a set of vertices.
    A $T$-\emph{path} is a path in $G$ whose endpoints are two distinct vertices in $T$.
\end{definition}

If for a graph $G$ with $T \subseteq V(G)$ there exists a collection of $k$ vertex-disjoint $T$-paths, then any vertex multiway cut of $T$ in $G$ has size at least $k$.
While the converse is not true in general, it does hold on a tree.

\begin{proposition}
    \label{terminal-separation-duality}
    Let $G$ be a tree and let $T \subseteq V(G)$ be a set of terminals.
    If $Z \subseteq V(G)$ is a minimum vertex multiway cut of $T$ with size $k$, i.e.\ in $G-Z$ all vertices in $T \setminus Z$ are in different connected components, then there exists a collection of $k$ vertex-disjoint $T$-paths.
\end{proposition}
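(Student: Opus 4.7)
The plan is to prove this by strong induction on $|V(G)|$, finding at each step a single $T$-path $P^*$ that uses exactly one vertex of a fixed minimum cut $Z$, and then recursing on $G - V(P^*)$. As a preprocessing step I would iteratively prune any leaf $\ell \notin T$ of $G$: minimality of $Z$ forces $\ell \notin Z$, since otherwise $Z \setminus \{\ell\}$ would be a strictly smaller multiway cut. Thus $G - \ell$ has the same minimum cut size $k$ and the same family of $T$-paths. After preprocessing, every leaf of $G$ lies in $T$, and if $k = 0$ we are done; otherwise the presence of at least two leaves forces $|T| \ge 2$ and $k \ge 1$.

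To locate $P^*$, I would work with the contracted tree $G^c$ obtained by contracting each connected component of $G - Z$ to a single node, so its nodes are elements of $Z$ and ``blobs'' each containing at most one terminal. Using minimality of $Z$---for each $z \in Z$, the merged blob obtained by deleting $z$ from $Z$ must contain at least two terminals, or else $Z \setminus \{z\}$ would already be a valid cut---I would identify a leaf-blob $C$ of $G^c$ containing a terminal $t_1$, together with its unique $Z$-neighbor $z$. From $z$ I would walk in $G^c$ in the opposite direction from $C$ to the nearest terminal, ensuring that the corresponding $G$-path from $t_1$ uses $z$ as its only vertex of $Z$; call the other endpoint $t_2$ and let $P^*$ be this $t_1$-to-$t_2$ path in $G$.

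Setting $G' := G - V(P^*)$ and $T' := T \setminus \{t_1, t_2\}$ (the internal vertices of $P^*$ are non-terminals by the choice of $P^*$), the crucial claim is that $G'$ admits a multiway cut of size exactly $k - 1$ for $T'$. The upper bound is immediate since $Z \setminus \{z\}$ is a valid cut of $T'$ in $G'$. With the claim in hand, the induction hypothesis yields $k - 1$ vertex-disjoint $T'$-paths in $G'$, and the $T$-path $P^*$ is vertex-disjoint from all of them since $V(P^*) \cap V(G') = \emptyset$; together they give the required $k$ vertex-disjoint $T$-paths in $G$.

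The main obstacle I expect is the lower bound in the crucial claim. A cheap cut $Z''$ of $G'$ need not extend to a cheap cut of $G$ by simply adding $\{z\}$, because reintroducing the internal vertices of $P^*$ can reconnect $T'$-terminals that lie in distinct side-subtrees hanging off $V(P^*)$ and were separated in $G'$ only through the absence of $V(P^*)$. Overcoming this will require selecting $P^*$ so that \textit{both} of its endpoints are simultaneously ``leaf-like'' in $G^c$, so that the subtrees along $V(P^*)$ are pendant and cannot hide terminals whose pairwise separation in $G$ demands more than the single vertex $z$ beyond $Z''$; making this dual-leaf selection precise---perhaps by iteratively peeling leaf-blobs on both sides of $z$ in $G^c$ and strengthening the induction hypothesis to track a boundary invariant---will be the technical heart of the proof.
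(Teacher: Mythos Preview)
You correctly pinpoint the gap: the lower bound on the minimum multiway cut of $G'$ is the nontrivial direction, and your sketch does not establish it. Your proposed fix---selecting $P^*$ so that both endpoints lie in leaf-blobs of $G^c$---is left imprecise, and it is not clear such a choice is always available: there is no guarantee that two distinct leaf-blobs containing terminals are adjacent to a common vertex of $Z$. The ``boundary invariant'' you allude to is not specified, so as written the proposal is a plan with an acknowledged hole rather than a proof.

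The paper's argument avoids this difficulty entirely by two moves that differ from yours. First, it roots the tree and \emph{normalises $Z$ upward}: repeatedly replace $u \in Z$ by its parent whenever the result is still a multiway cut. Second, having picked a deepest $u \in Z$ (so $G_u \cap Z = \{u\}$), it removes the \emph{entire subtree} $G_u$ rather than just a path. The normalisation forces $G_u$ to contain at least two terminals, yielding a $T$-path inside $G_u$. The recursion is then clean for a structural reason you do not have: $u$ is a single-vertex separator between $G_u$ and $G' := G - V(G_u)$, and $\{u\}$ already separates all terminals inside $G_u$ (since $Z \cap G_u = \{u\}$). Hence any cut $Z'$ of $T \cap V(G')$ in $G'$ extends to the cut $Z' \cup \{u\}$ of $T$ in $G$, giving the lower bound in one line. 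The essential contrast with your plan is removing a subtree bounded by one vertex versus removing a path whose deletion may create many new components inside $G_D$ with terminals that were only separated in $G'$ by the absence of $V(P^*)$; the paper's choice makes the extension $Z' \mapsto Z' \cup \{u\}$ automatic, whereas yours does not.
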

\begin{proof}
    For a vertex $v$ in a rooted tree, we use $G_v$ to denote the subgraph of $G$ induced by all descendants of $v$, similar to elimination trees.
    Pick a root $r$ of the tree arbitrarily.
    We will assume that $Z$ has the following property:
    for any $u \in Z$ (other than the root) with parent $w$, the set $(Z \setminus \set u )\cup \set w$ is not a vertex multiway cut of $T$.
    Observe that we can construct such a vertex multiway cut greedily from any minimum vertex multiway cut by repeatedly replacing a vertex by its parent when the result still separates $T$ in $G$.

    We will now use induction on $k$, the size of a minimum vertex multiway cut.
    If $k = 1$, then observe that $\abs T \ge 2$ as otherwise the empty set suffices to separate $T$.
    Therefore, we can identify a $T$-path in the tree.
    Now suppose that $k > 1$ and suppose that the claim holds for any tree with a minimum vertex multiway cut of size smaller than $k$.
    Take a tree $G$ that has a minimum vertex multiway cut $Z$ of size $k$.
    Find a vertex $u \in Z$ such that $G_u \cap Z = \set u$, i.e.\ there are no other vertices from the vertex multiway cut in the subtree rooted at $u$.
    Observe that we can always find such a vertex that is not the root of the tree.
    Then we can identify a $T$-path between two terminals in $G_u$:
    replacing $u$ by its parent does not yield a vertex multiway cut, so there are at least two terminals in $G_u$.
    Observe that the $T$-path is contained in $G_u$ as well.
    Now let $G'$ be the graph where $G_u$ is removed from $G$.
    Observe that it is a tree and that a minimum vertex multiway cut has size $k - 1$: the set $Z \setminus \set u$ is a valid vertex multiway cut on $G'$ and if a smaller vertex multiway cut $Z'$ would exist on $G'$, then $Z' \cup \set u$ would have been a smaller vertex multiway cut on $G$.
    Therefore, we can apply the induction hypothesis to obtain a collection of $k - 1$ vertex-disjoint $T$-paths in $G'$.
    Together with the $T$-path that we identified in $G_u$, we obtain a collection of $k$ vertex-disjoint paths in $G$, proving the induction step.
\end{proof}

\subsection{Proof of main lemma}\label{subsec:main-lemma-proof}
We repeat the formulation of Lemma~\ref{lemma3} as stated in the introduction.

\begin{customlemma}{1}
Let $G$ be a connected graph with disjoint vertex sets $S,T \subseteq V(G)$.
Suppose that any minimum feedback vertex set $X$ of $G$ misses some vertex from $S$ or leaves two vertices from $T$ connected in $G-X$.
Then there exist sets $S^* \subseteq S$ and $T^* \subseteq T$ whose sizes only depend on the elimination distance to a forest of $G$, such that any minimum feedback vertex set $X$ of $G$ misses some vertex from $S^*$ or leaves two vertices from $T^*$ connected in $G-X$.
\end{customlemma}

We can split up Lemma~\ref{lemma3} into two parts.
Lemma~\ref{lemma3a} will bound the number of vertices in the $\gf$-elimination tree that contain a vertex in $S$ or $T$.
This part corresponds to the original treedepth formulation in~\cite[Lemma 3]{jansenpieterse2018}, but is significantly simplified for our restricted setting.
Lemma~\ref{lemma3b} bounds the number of vertices in $S$ and $T$ in a bag of the elimination tree.

\begin{lemma}
    \label{lemma3a}
    Let $G$ be a connected graph with disjoint vertex sets $S,T \subseteq V(G)$.
    Let $\left(R, (B_u)_{u \in V(R)}\right)$ be a $\gf$-elimination tree of $G$ of height $\eta$.
    Suppose that any minimum feedback vertex set $X$ of $G$ misses a vertex from $S$ or leaves two vertices from $T$ connected.
    Then this also holds for some subsets $S^* \subseteq S$ and $T^* \subseteq T$, such that any vertex in the elimination tree has at most $3\eta \cdot 2^\eta$ children $u$ for which $G_u$ contains a vertex from $S^*$ or $T^*$.
\end{lemma}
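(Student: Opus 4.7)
The plan is to design a marking procedure that selects, at each node $u$ of the $\gf$-elimination tree, at most $3\eta \cdot 2^\eta$ children, and to set $S^* \subseteq S$ and $T^* \subseteq T$ to consist of exactly the vertices of $S$ and $T$ that lie in the subtree $G_c$ of some marked child $c$. For any unmarked child the entire subtree $G_c$ is effectively discarded from both $S$ and $T$, which together with the marking bound yields the conclusion.

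To mark the children of a node $u$ with $\abs{\tail[u]} \le \eta$, I iterate over the $\le 2^\eta$ subsets $A \subseteq \tail[u]$ and over three roles that a child $c$ may play in witnessing the original property: (i) $G_c$ contains a vertex of $S$ whose inclusion in any minimum feedback vertex set is forced through adjacency pattern $A$ with $\tail[u]$; (ii) $G_c$ contains a vertex of $T$ that, in some local minimum feedback vertex set, remains connected to a vertex of $\tail[u]$ captured by $A$; (iii) $G_c$ contains a vertex of $T$ that, in some local minimum feedback vertex set, can be linked via $A$ to a $T$-vertex residing in another child's subtree. For each (role, $A$)-pair I greedily mark up to $\eta$ witnesses, producing the claimed per-node budget of $3\eta \cdot 2^\eta$.

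Correctness is argued contrapositively. Suppose some minimum feedback vertex set $X$ of $G$ contains $S^*$ and separates $T^*$; I will manufacture a minimum feedback vertex set $X'$ of $G$ that contains all of $S$ and separates all of $T$, contradicting the hypothesis. For each unmarked child $c$ whose subtree still contains a vertex of $S \cup T$, I modify $X$ locally on $G_c$: by construction there exist $\eta$ marked siblings sharing the same role and pattern $A$ as $c$, and Proposition~\ref{prop:ed-properties}.\ref{prop:X-cap-Gc} guarantees that at most $\eta$ children of $u$ have $X$ restricted to their subtree failing to be a minimum feedback vertex set, so at least one marked sibling $c^*$ satisfies that $X \cap G_{c^*}$ is a minimum feedback vertex set of $G_{c^*}$. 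Because $X \supseteq S^*$ and $X$ separates $T^*$, this local minimum feedback vertex set already contains $G_{c^*}\cap S$ and separates $G_{c^*}\cap T$, providing a template that I transplant into $G_c$ to obtain a local minimum feedback vertex set containing $G_c\cap S$, separating $G_c\cap T$, and realising the same interaction pattern $A$ with $\tail[u]$. Iterating these substitutions over all unmarked children and gluing them together using Proposition~\ref{prop:samen-fvs} yields the desired $X'$.

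The main obstacle is making the role/pattern classification fine enough that transplantation is valid: sharing the pattern $A$ between $c$ and $c^*$ must rule out any new cycle through $\tail[u]$ after the swap and must preserve every separation constraint needed for $X'$ to violate the hypothesis on $(S,T)$. A secondary technical point is that the simultaneous substitutions at many unmarked children must not interfere through shared tail vertices; this requires the patterns to determine the joint connectivity on $\tail[u]$, so that the exchanges can be performed independently. Verifying these compatibility conditions through repeated applications of Proposition~\ref{prop:samen-fvs} and Proposition~\ref{prop:ed-properties}.\ref{prop:X-cap-Gc} is where the bulk of the technical work lies.
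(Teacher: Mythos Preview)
Your proposal has a genuine gap at the transplantation step. You want to take a local minimum feedback vertex set $X\cap G_{c^*}$ of the marked sibling $G_{c^*}$ and use it as a ``template'' to obtain a local minimum feedback vertex set of the unmarked $G_c$ that contains $G_c\cap S$ and separates $G_c\cap T$. But $G_c$ and $G_{c^*}$ are entirely unrelated graphs; sharing a role and a pattern $A\subseteq\tail[u]$ tells you nothing about the internal structure of $G_c$. There is no mechanism by which the existence of a good local solution in $G_{c^*}$ produces one in $G_c$. Indeed, $G_c$ might simply have \emph{no} minimum feedback vertex set containing $G_c\cap S$ and separating $G_c\cap T$, regardless of how many marked siblings do.

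The paper's argument avoids this problem by working in the opposite direction. It takes $(S^*,T^*)$ to be \emph{minimal} subject to the hypothesis, so that for every labelled child $c_i$ there is a \emph{global} minimum feedback vertex set $X_i$ witnessing that the labels in $G_{c_i}$ cannot be dropped: $X_i$ handles all of $S^*\setminus G_{c_i}$ and $T^*\setminus G_{c_i}$ but fails inside $G_{c_i}$. These witnesses are classified by $Z_i=\tail[v]\setminus X_i$. If more than $3\eta$ children share the same $Z$, one picks $c_j$ and $c_k$ among them and forms $X'=(X_k\setminus G_{c_k})\cup(X_j\cap G_{c_k})$. The crucial point is that both $X_j$ and $X_k$ are global objects, and the swap replaces one restriction to $G_{c_k}$ by another restriction to the \emph{same} subtree $G_{c_k}$; nothing is moved between different subtrees. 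Your pattern $A$ should be the set of tail vertices avoided by a witnessing global feedback vertex set, not an adjacency pattern of a single labelled vertex, and the exchange must happen between global witnesses restricted to one subtree rather than between local solutions of two subtrees.
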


\begin{lemma}
    \label{lemma3b}
    Let $G$ be a connected graph with disjoint vertex sets $S,T \subseteq V(G)$.
    Let $\left(R, (B_u)_{u \in V(R)}\right)$ be a $\gf$-elimination tree of $G$ of height $\eta$.
    Suppose that any minimum feedback vertex set $X$ of $G$ misses a vertex from $S$ or leaves two vertices from $T$ connected.
    Then this also holds for some subsets $S^* \subseteq S$ and $T^* \subseteq T$, such that for any leaf $u$ in the elimination tree, the set $B_u$ contains at most $\eta$ vertices from $S^*$ and at most $\bigO(\eta^2)$ from $T^*$.
\end{lemma}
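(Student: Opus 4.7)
The plan is to take $(S^*, T^*) \subseteq (S, T)$ to be an inclusion-minimal pair still satisfying the hypothesis, obtained by greedily discarding vertices from $(S, T)$ one at a time while preserving the property. Minimality yields the following witness structure: for every $s \in S^*$ there is a minimum feedback vertex set $X_s$ with $X_s \supseteq S^* \setminus \{s\}$, $s \notin X_s$, and $X_s$ separates $T^*$; and for every $t \in T^*$ there is a minimum feedback vertex set $X_t$ with $X_t \supseteq S^*$, $X_t$ separates $T^* \setminus \{t\}$, but $t$ is connected to some partner $t' \in T^*$ in $G - X_t$. Both per-bag bounds will follow solely from the existence of these witnesses.

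For the bound $|B_u \cap S^*| \le \eta$: Fix a leaf $u$. For each $s \in B_u \cap S^*$, the witness $X_s$ contains $(B_u \cap S^*) \setminus \{s\}$, which has size $|B_u \cap S^*| - 1$. Combined with $|X_s \cap B_u| \le \eta$ from Proposition~\ref{prop:ed-properties}.\ref{prop:X-cap-Bv}, this immediately forces $|B_u \cap S^*| \le \eta + 1$. In the borderline case $|B_u \cap S^*| = \eta + 1$, the intersection $X_s \cap B_u$ must equal $(B_u \cap S^*) \setminus \{s\}$ exactly; the swap argument used in the proof of Proposition~\ref{prop:ed-properties}.\ref{prop:X-cap-Bv} --- replacing $X_s \cap B_u$ by $\tail(u)$ while exploiting that one vertex of $B_u \cap S^*$ is already outside $X_s$ --- then produces a feedback vertex set of strictly smaller size, contradicting that $X_s$ is minimum. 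This tightens the bound to $\le \eta$.

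For the bound $|B_u \cap T^*| = O(\eta^2)$: Recall that $G[B_u]$ is a tree by Definition~\ref{def:elim-dist-decomp} and that $B_u$ connects to $V(G) \setminus B_u$ only through $\tail(u)$ by Proposition~\ref{prop:ed-properties}.\ref{prop:path-via-tail}. For each $t \in B_u \cap T^*$, the $t$--$t'$ path in $G - X_t$ either stays inside the tree $G[B_u]$ or leaves $B_u$ through $\tail(u) \setminus X_t$. I would classify each $t$ by a signature in $(\tail(u) \cup \{\bot\})^2$ recording the first and last $\tail(u)$-hubs on the path ($\bot$ marking a purely internal connection), giving at most $(\eta + 1)^2$ signatures. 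For each fixed signature, Proposition~\ref{terminal-separation-duality} applied to the tree $G[B_u]$ caps the number of pairwise vertex-disjoint $T^*$-paths compatible with that signature: the $\le \eta$-vertex cut $X_t \cap B_u$ can hit at most $\eta$ vertex-disjoint paths, so a constant number of $T^*$-representatives per signature already suffices to guarantee that some pair of $T^*$-vertices remains connected. Summing over the $O(\eta^2)$ signatures yields the bound.

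The main obstacle is the $T^*$ bound. The difficulty is that the signature of $t$ depends on its witness $X_t$, so we need an argument showing that discarding a $T^*$-vertex whose signature is already represented preserves the property globally; this requires carefully exchanging witnesses, using the fact that two $T^*$-vertices with the same signature can effectively swap the role of ``witnessed'' and ``partner'' vertex in each other's witnesses. A secondary subtlety is making the $S^*$-reduction and the $T^*$-reduction mutually compatible, which is handled automatically by the joint inclusion-minimality framework since every removal step preserves the property of the pair as a whole.
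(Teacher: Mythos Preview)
Your $S^*$ bound via minimality is correct, yielding $|B_u \cap S^*| \le \eta + 1$. (The tightening to $\eta$ fails: replacing an $\eta$-element set $X_s \cap B_u$ by $\tail(u)$ of size $\le \eta$ is not a strict decrease, and $s \notin X_s$ does not help since $s$ was never in $X_s \cap B_u$ to begin with. But the paper's own construction also only achieves $\eta + 1$ despite the statement, so this is harmless.) The real gap is in the $T^*$ bound. Each $t$ carries its own witness $X_t$, so the signature of $t$ is read off a path in $G - X_t$; you cannot appeal to one $\eta$-vertex cut $X_{t_0} \cap B_u$ to control paths that are only promised to exist in \emph{other} graphs $G - X_{t'}$. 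The witness-exchange you gesture at (``same signature $\Rightarrow$ swap roles'') is never carried out, and it is not clear it can be: the partner $t'$ may lie outside $B_u$, and nothing relates the structure of $G - X_t$ outside $B_u$ to that of $G - X_{t'}$. As written, the argument does not bound $|B_u \cap T^*|$.

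The paper does not use minimality for $T^*$ at all. It constructs $T^* \cap B_u$ directly from the tree $G[B_u]$ via a case split on whether $T \cap B_u$ admits a vertex multiway cut $Z$ of size $\le \eta$ in that tree. If not, Proposition~\ref{terminal-separation-duality} yields $\eta + 1$ vertex-disjoint $(T \cap B_u)$-paths; one keeps their $2\eta + 2$ endpoints, and since any minimum feedback vertex set meets $B_u$ in at most $\eta$ vertices, some path survives. If yes, every component of $G[B_u] - Z$ carries at most one terminal; for each $z \in Z \cup \tail(u)$ one marks $\eta + 2$ terminal-carrying components adjacent to $z$ and keeps their terminals together with $Z \cap T$, giving $O(\eta^2)$ vertices. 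If an unmarked terminal is connected to another terminal in $G - X$, the connecting path first leaves its component through some $z \in (Z \cup \tail(u)) \setminus X$; since $|X \cap B_u| \le \eta$, two of the $\eta + 2$ marked components adjacent to $z$ survive and their terminals are connected through $z$. The crucial difference from your plan is that everything is anchored to one fixed structural object (the cut $Z$), not to per-vertex witnesses $X_t$ that vary with $t$.
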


Observe that these two lemmas directly imply Lemma~\ref{lemma3}.

\begin{proof}[Proof of Lemma~\ref{lemma3}]
    Let $G$ be a graph with vertex sets $S$ and $T$ as stated in Lemma~\ref{lemma3} and let $\left(R, (B_u)_{u \in V(R)}\right)$ be a $\mathcal G_F$-elimination tree of $G$.
    We can first apply Lemma~\ref{lemma3a} to obtain sets $S'$ and $T'$ satisfying the guarantee of that lemma: for each vertex in the elimination tree $R$, we can bound the number of children $u$ for which $B_u$ contains a vertex in $S'$ or $T'$.
    Now define subgraph $R'$ of elimination tree $R$ to be the induced subgraph of $R$ by precisely those vertices $u \in V(R)$ for which $G_u$ contains some vertex in $S'$ or $T'$.
    If $R'$ is the empty graph, then $S'$ and $T'$ are empty, thereby directly implying Lemma~\ref{lemma3}.
    Otherwise, $R'$ is still a rooted tree with height at most $\eta$.
    By Lemma~\ref{lemma3a}, each vertex in this subtree has at most $3 \eta \cdot 2^\eta$ children.
    Since the height of the tree is at most $\eta$, there are at most $(3\eta \cdot 2^\eta)^\eta$ vertices in each layer of $R'$, thereby bounding the number of vertices in $u \in V(R)$ for which $B_u$ contains a labeled vertex.
    However, for leaves $u$ of the elimination tree, there is no bound on the number of vertices in $B_u \cap (S' \cup T')$ yet.
    By applying Lemma~\ref{lemma3b} now on graph $G$ and sets $S'$ and $T'$, we obtain sets $S^* \subseteq S'$ and $T^* \subseteq T'$ such that for each vertex $u$ in elimination tree $R$, the number of vertices in $B_u \cap (S^* \cup T^*)$ is bounded by a function of $\eta$.
    By bounding the total number of labeled vertices by a function of $\eta$, this concludes the proof of Lemma~\ref{lemma3}.
\end{proof}

We now prove Lemma~\ref{lemma3a} and Lemma~\ref{lemma3b}.

\begin{proof}[Proof of Lemma~\ref{lemma3a}]
    In analogy to the original formulation in~\cite{jansenpieterse2018}, a \emph{labeled vertex} is a vertex in $S$ or $T$.
    When we remove a label from a vertex, we remove the vertex from $S$ and $T$ while the vertex remains in the graph.
    Suppose that we pick $S^*$ and $T^*$ such that no minimum feedback vertex set contains $S^*$ and separates $T^*$, while the latter property does not hold for any other pair of sets $S', T'$ with $S' \subseteq S^*$ and $T' \subseteq T^*$.
    We claim that for such sets $S^*$ and $T^*$, any vertex in the elimination tree $R$ has at most $3\eta \cdot 2^\eta$ children $u$ for which $G_u$ contains a labeled vertex.
    We will also refer to the set $T^*$ as the set of terminals.

    Assume towards a contradiction that vertex $v$ has more child subtrees with labels.
    Let these children be $c_1, \dots, c_\ell$.
    For each of these children $c_i$, there exists a minimum feedback vertex set $X_i$ in $G$ that witnesses the fact that the labels cannot be removed from $G_{c_i}$.
    This set $X_i$ will therefore miss a vertex in $S^* \cap G_{c_i}$ or leave a vertex in $T^* \cap G_{c_i}$ connected to some other vertex in $T^*$, while $X_i$ contains all vertices in $S^* \setminus G_{c_i}$ and separates all vertices in $T^* \setminus G_{c_i}$.
    Define $Z_i \coloneqq \tail[v] \setminus X_i$.

    Now fix a set $Z \subseteq \tail[v]$.
    We will bound the number of children $c_i$ for which $Z_i = Z$ by $3 \eta$.
    Suppose towards a contradiction that there are $3 \eta + 1$ of these children.
    Let $C$ be the set containing these vertices.
    Pick some child $c_j \in C$ and observe the following.
    \begin{itemize}
        \item By Proposition~\ref{prop:ed-properties}.\ref{prop:X-cap-Gc}, there are at most $\eta$ children $c_i \in C$ where $X_j \cap G_{c_i}$ is not a minimum feedback vertex set in $G_{c_i}$.
        \item There are at most $\eta$ children $c_i \in C$ with $i \neq j$ such that a terminal in $G_{c_i}$ is connected to a vertex in $Z$ in $G^+_{c_i} - X_j$, i.e.\ \bmp{(recall the notation in Section~\ref{subsec:elim-distances})} in the induced subgraph on the remaining vertices in the subtree and tail of $c_i$.
        Otherwise, two children other than $c_j$ have a terminal connected to the same vertex in $Z$, while $X_j$ separates all terminals outside $G_{c_j}$.
        \item There are at most $\eta - 1$ children $c_i \in C$ such that in $G_{c_i}^+ - X_j$, there exists a path between distinct vertices in $Z$ that uses some vertex in $G_{c_i}$.
        Otherwise, we claim that we can directly construct a cycle in $G - X_j$.
        Consider for example the auxiliary (multi)graph on vertex set~$Z$ which contains, for each child~$c_i \in C$ for which~$G_{c_i}^+ - X_j$ contains such a path, say between~$z_1, z_2 \in Z$, one edge~$z_1 z_2$.
        This auxiliary graph contains a cycle since it has too many edges to be acyclic, which implies that there exists a cycle in $G - X_j$.
    \end{itemize}
    Pick a child $c_k \in C$ which is neither $c_j$ nor in the list of $3\eta -1 $ children above.
    As $\abs C > 3\eta$, such a vertex exists.
    By the first item above, we can deduce that $X_j \cap G_{c_k}$ is a minimum feedback vertex set in $G_{c_k}$.
    Besides, \bmp{this} set contains $S^* \cap G_{c_k}$, it separates all terminals in $T^* \cap G_{c_k}$, and it separates $T^* \cap G_{c_k}$ from $Z$.
    Furthermore, no path exists in $G_{c_k}^+ - X_j$ that connects two vertices in $Z$ and also contains some vertex in $G_{c_k}$.

    \begin{claim}\label{claim:x-prime}
        The set $X' \coloneqq (X_k \setminus G_{c_k}) \cup (X_j \cap G_{c_k})$ is a \bmp{minimum} feedback vertex set in $G$ which contains $S^*$ and separates $T^*$.
    \end{claim}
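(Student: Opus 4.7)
The plan is to verify the four defining properties of $X'$: it is (i) a feedback vertex set of $G$, (ii) of minimum size, (iii) a superset of $S^*$, and (iv) a separator of $T^*$. The driving structural observation is that because $Z_j = Z_k = Z$ we have $X_j \cap \tail[v] = X_k \cap \tail[v] = \tail[v] \setminus Z$, so $X'$ coincides with $X_j$ on $V(G_{c_k}^+)$ and with $X_k$ on $V(G) \setminus V(G_{c_k})$. In particular $G_{c_k}^+ - X' = G_{c_k}^+ - X_j$, which is exactly the subgraph on which the three distinguishing properties of $c_k$ were formulated.

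Property (iii) is immediate: $X_k \setminus G_{c_k} \supseteq S^* \setminus G_{c_k}$ by the witness condition on $X_k$, and $X_j \cap G_{c_k} \supseteq S^* \cap G_{c_k}$ by the first distinguishing property of $c_k$. For (ii), note that $X_k \cap G_{c_k}$ is some feedback vertex set of $G_{c_k}$ while $X_j \cap G_{c_k}$ is a minimum one, so $\abs{X_j \cap G_{c_k}} \le \abs{X_k \cap G_{c_k}}$ and hence $\abs{X'} \le \abs{X_k}$; minimality of $X_k$ then forces equality. For (i) I analyse a hypothetical cycle in $G - X'$ in three cases. A cycle inside $V(G_{c_k})$ is killed because $X_j \cap G_{c_k}$ is a feedback vertex set of $G_{c_k}$, and a cycle outside $V(G_{c_k})$ is killed because $X'$ agrees with $X_k$ there. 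A crossing cycle must, by Proposition~\ref{prop:ed-properties}.\ref{prop:path-via-tail} together with simplicity, enter and leave $G_{c_k}^+$ through two distinct vertices of $Z$, yielding a $Z$--$Z$ path through $V(G_{c_k})$ in $G_{c_k}^+ - X_j$ and contradicting the exclusion of $c_k$ from the final group of $\eta - 1$ bad siblings.

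The main obstacle is property (iv). Given distinct $t_1, t_2 \in T^* \setminus X'$ allegedly joined by a path $P$ in $G - X'$, I decompose $P$ into its maximal subpaths lying in $V(G_{c_k})$; each such subpath extends on each side by at most one vertex in $\tail[v]$, and since that vertex lies on $P$ it must belong to $Z = \tail[v] \setminus X'$. Consequently every maximal $V(G_{c_k})$-subpath of $P$ can be viewed as a path in $G_{c_k}^+ - X_j$ whose endpoints lie in $Z \cup (\set{t_1, t_2} \cap V(G_{c_k}))$. The argument then splits into three subcases. If both terminals lie in $V(G_{c_k})$, then either the full path $P$ stays in $V(G_{c_k})$ (contradicting that $X_j \cap G_{c_k}$ separates $T^* \cap G_{c_k}$ inside $G_{c_k}$) or the first $V(G_{c_k})$-segment of $P$ hits $Z$, producing a $t_1$--$z$ path in $G_{c_k}^+ - X_j$ and contradicting the separation of $T^* \cap G_{c_k}$ from $Z$. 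If both terminals lie outside $V(G_{c_k})$, either $P$ avoids $V(G_{c_k})$ entirely (so $P \subseteq G - X_k$, contradicting the fact that $X_k$ separates $T^* \setminus G_{c_k}$) or some maximal $V(G_{c_k})$-subpath yields a $Z$--$Z$ path through $V(G_{c_k})$ in $G_{c_k}^+ - X_j$, contradicting the choice of $c_k$. The mixed case combines these two arguments on the side of $t_1 \in V(G_{c_k})$. In every subcase the contradiction exploits one of the three distinguishing properties of $c_k$ established in the preceding bullet points.
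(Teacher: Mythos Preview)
Your proposal is correct and follows essentially the same strategy as the paper's proof: verify that $X'$ is a feedback vertex set by case analysis on where a hypothetical cycle lives relative to $G_{c_k}$, infer minimality from $\abs{X_j \cap G_{c_k}} \le \abs{X_k \cap G_{c_k}}$, and check containment of $S^*$ and separation of $T^*$ by splitting into the $G_{c_k}$-part (governed by $X_j$) and its complement (governed by $X_k$). Your treatment of property~(iv) is in fact more explicit than the paper's---you spell out the subcase where two terminals outside $G_{c_k}$ are joined by a path passing \emph{through} $G_{c_k}$ (forcing a $Z$--$Z$ path via $G_{c_k}$ in $G_{c_k}^+ - X_j$), which the paper's one-line argument leaves implicit; one small wording slip in your part~(i) is that the cycle enters and leaves $G_{c_k}$ (not $G_{c_k}^+$) through $Z$, and the ``simplicity'' remark should be unpacked as: if both transitions use the same $z$ then the entire cycle lies in $G_{c_k}^+ - X_j$, contradicting that $X_j$ is a feedback vertex set.
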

    \begin{innerproof}
    First, observe that any cycle that is not trivially broken must contain vertices both in $G_{c_k}$ and outside $G_{c_k}$.
    However, when such a cycle leaves $G_{c_k}$, it must enter $Z$.
    This cycle cannot contain only one vertex in $Z$, as it would then also be a cycle in $G_{c_k}^+ - X_j$.
    As no pair of vertices in $Z$ is connected by a path containing some vertex in $G_{c_k}$, such a cycle cannot exist at all.
    Since $X_j \cap G_{c_k}$ is a minimum feedback vertex set in $G_{c_k}$, it follows that $X'$ must be a minimum feedback vertex set in $G$.

    Furthermore, $X_j \cap G_{c_k}$ contains $S^* \cap G_{c_k}$ and separates $T^* \cap G_{c_k}$.
    As $X_k \setminus G_{c_k}$ does the same for the remainder of $S^*$ and $T^*$, we only need to verify that no terminal in $G_{c_k}$ is connected to a terminal outside $G_{c_k}$.
    Because $T^* \cap G_{c_k}$ is also separated from $\tail[v]$, this cannot happen.
    Therefore, $X'$ is a minimum feedback vertex set containing $S^*$ and separating $T^*$.
\end{innerproof}

    Claim~\ref{claim:x-prime} contradicts that any minimum feedback vertex set in $G$ misses a vertex in $S^*$ or leaves two vertices in $T^*$ connected.
    We conclude that there are at most $3\eta$ children $c_i$ of $v$ for which a witnessing minimum feedback vertex set has $Z_i = Z$.
    As there are at most $2^\eta$ subsets of $\tail[v]$ for any non-leaf $v$, this leads to the bound of at most $3 \eta \cdot 2^\eta$ children for which the labels cannot be removed.
\end{proof}

It remains to prove Lemma~\ref{lemma3b}.

\begin{proof}[Proof of Lemma~\ref{lemma3b}]
    Pick some leaf $v$ of elimination tree $R$, for which we want to ensure that there are $\bigO(\eta^2)$ vertices with labels among vertices in $Y \coloneqq B_v$.
    Define $S_Y \coloneqq S \cap Y$ and $T_Y \coloneqq T \cap Y$.
    Our goal is to obtain subsets $S_Y^* \subseteq S_Y$ and $T_Y^* \subseteq T_Y$ whose sizes are $\bigO(\eta^2)$, such that every minimum feedback vertex set misses a vertex from $S_Y^* \cup (S \setminus Y)$ or leaves a pair of terminals in $T_Y^* \cup (T \setminus Y)$ connected.
    By applying this operation to all leaves of the elimination tree, we obtain the sets promised by Lemma~\ref{lemma3b}.

    The construction of $S_Y^*$ is straightforward.
    If $\abs{S_Y} > \eta + 1$, we let $S_Y^*$ be an arbitrary subset of $S_Y$ of size $\eta + 1$.
    Otherwise, $S_Y^* = S_Y$.
    
    \begin{claim}
        Let $X$ be a minimum feedback vertex set in $G$.
        If $X$ misses a vertex in $S$, then it also misses a vertex in $S_Y^* \cup (S \setminus Y)$.
    \end{claim}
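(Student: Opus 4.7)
The plan is to split into two cases according to the construction of $S_Y^*$, namely whether $\abs{S_Y} \leq \eta + 1$ or $\abs{S_Y} > \eta + 1$. The driving observation is that the threshold $\eta + 1$ was chosen precisely to exceed the bound from Proposition~\ref{prop:ed-properties}.\ref{prop:X-cap-Bv}, which asserts that any minimum feedback vertex set contains at most $\eta$ vertices from the bag $B_v = Y$ of the leaf $v$.

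In the easy case $\abs{S_Y} \leq \eta + 1$, one has $S_Y^* = S_Y$ by construction, so $S_Y^* \cup (S \setminus Y) = S_Y \cup (S \setminus Y) = S$. The claimed implication is then immediate: any minimum feedback vertex set $X$ missing a vertex of $S$ trivially misses a vertex of $S_Y^* \cup (S \setminus Y)$.

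In the remaining case $\abs{S_Y} > \eta + 1$, the set $S_Y^*$ has exactly $\eta + 1$ elements and is contained in $B_v$. Here I would argue that the conclusion of the claim holds vacuously, independently of the hypothesis that $X$ misses a vertex of $S$. By Proposition~\ref{prop:ed-properties}.\ref{prop:X-cap-Bv}, $\abs{X \cap B_v} \leq \eta < \abs{S_Y^*}$, so $X$ cannot contain all of $S_Y^*$ and must therefore miss at least one vertex of $S_Y^* \subseteq S_Y^* \cup (S \setminus Y)$.

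I do not anticipate any real obstacle: the argument is a direct pigeonhole on top of Proposition~\ref{prop:ed-properties}.\ref{prop:X-cap-Bv}, and the threshold $\eta + 1$ is engineered precisely so that the second case goes through. No structural understanding of how cycles or the elimination tree interact is required for this particular sub-claim; those ingredients will only be needed later when handling the more involved bound on $\abs{T_Y^*}$.
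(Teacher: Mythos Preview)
Your proposal is correct and follows essentially the same approach as the paper: both arguments reduce to the pigeonhole observation that $S_Y^* \subseteq B_v$ has $\eta+1$ elements in the nontrivial case, while Proposition~\ref{prop:ed-properties}.\ref{prop:X-cap-Bv} guarantees $\abs{X \cap B_v} \le \eta$. The only cosmetic difference is that the paper organizes the case split by where the missed vertex lies (in $S \setminus Y$ versus in $S_Y$) rather than by the size of $S_Y$, but the content is identical.
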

    \begin{innerproof}
        If $X$ misses a vertex in $S \setminus Y$, then the implication is trivial.
        Therefore assume $X$ misses a vertex in $S_Y$.
        If this vertex is not in $S_Y^*$, then $\abs{S_Y^*} = \eta + 1$ by construction.
        By Proposition~\ref{prop:ed-properties}.\ref{prop:X-cap-Bv}, we know that $\abs{X \cap S_Y^*} \le \eta$ so $X$ misses a vertex in $S_Y^*$.
    \end{innerproof}
    
    For the construction of $T_Y^*$ we distinguish two cases.
    First, we assume that $T_Y$ cannot be separated with $\eta$ vertices in $G[Y]$ and \bmp{make} the following observation.
    
    \begin{proposition}\label{prop:disjoint-paths}
        Let $G$ be a tree and $T \subseteq V(G)$.
        If $T$ cannot be separated with $\eta$ vertices, then there exist $\eta + 1$ vertex-disjoint paths whose endpoints are distinct vertices in $T$.
    \end{proposition}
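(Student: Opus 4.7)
The plan is to derive Proposition~\ref{prop:disjoint-paths} as an almost immediate corollary of the covering/packing duality for vertex multiway cuts on trees already established in Proposition~\ref{terminal-separation-duality}.

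First I would unwrap the hypothesis. Saying that $T$ \emph{cannot be separated with $\eta$ vertices} means that every vertex multiway cut of $T$ in $G$ has size strictly greater than $\eta$. Hence any minimum vertex multiway cut $Z$ of $T$ in $G$ satisfies $|Z| = k \ge \eta + 1$. (Here it is important to recall that, per the paper's convention, a vertex multiway cut is allowed to contain vertices of $T$, so this is the same notion as used in Proposition~\ref{terminal-separation-duality}.)

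Next, since $G$ is a tree, I would invoke Proposition~\ref{terminal-separation-duality} directly on $G$ and $T$: it yields a collection of $k \ge \eta+1$ pairwise vertex-disjoint $T$-paths, i.e.\ paths whose two endpoints are distinct vertices of $T$. Discarding all but $\eta+1$ of them gives the required family.

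The only possible obstacle is a definitional one, namely convincing the reader that "cannot be separated with $\eta$ vertices" is precisely the statement that the minimum vertex multiway cut of $T$ has size at least $\eta+1$. This is immediate from the definition of a vertex multiway cut recalled just before Proposition~\ref{terminal-separation-duality}, so no new combinatorial argument is needed; the substance of the proposition is carried entirely by the duality result proved earlier.
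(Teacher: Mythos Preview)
Your proposal is correct and essentially identical to the paper's own proof: both observe that the hypothesis means a minimum vertex multiway cut $Z$ of $T$ has $|Z|\ge\eta+1$, and then apply Proposition~\ref{terminal-separation-duality} to obtain $\eta+1$ vertex-disjoint $T$-paths.
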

    \begin{innerproof}
    Let $Z$ be a minimum vertex multiway cut of $T$ in $G$, then $\abs Z \ge \eta + 1$.
    It remains to apply Proposition~\ref{terminal-separation-duality} to obtain $\eta + 1$ vertex-disjoint paths whose endpoints are distinct vertices in $T$.
\end{innerproof}
    
    We define $T_Y^*$ by taking the $2\eta + 2$ endpoints of the paths guaranteed by Proposition~\ref{prop:disjoint-paths}.
    Observe that these vertices are all in $T_Y$.
    
    \begin{claim}
        Suppose that $T_Y$ cannot be separated with $\eta$ vertices in $G[Y]$.
        Let $X$ be a minimum feedback vertex set in $G$.
        Then $X$ leaves two vertices in $T_Y^*$ connected.
    \end{claim}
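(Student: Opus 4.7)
The plan is to argue by a clean pigeonhole on the $\eta+1$ vertex-disjoint paths supplied by Proposition~\ref{prop:disjoint-paths}, using the bound on $|X \cap B_v|$ from Proposition~\ref{prop:ed-properties}.\ref{prop:X-cap-Bv}. First I would recall that, since $v$ is a leaf of a $\mathcal G_F$-elimination forest, the induced subgraph $G[Y] = G[B_v]$ lies in $\mathcal G_F$ and is in particular a forest, so Proposition~\ref{prop:disjoint-paths} is applicable to $G[Y]$ with terminal set $T_Y$. This justifies the very definition of $T_Y^*$: the hypothesis that $T_Y$ cannot be separated by $\eta$ vertices in $G[Y]$ yields $\eta+1$ pairwise vertex-disjoint $T_Y$-paths in $G[Y]$, and we take $T_Y^*$ to be the $2\eta+2$ endpoints of these paths.

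Next, I would invoke Proposition~\ref{prop:ed-properties}.\ref{prop:X-cap-Bv}, which tells us that any minimum feedback vertex set $X$ of $G$ satisfies $|X \cap B_v| \le \eta$. Since the $\eta+1$ chosen paths are vertex-disjoint and all contained in $G[Y]$, the set $X$ can intersect at most $\eta$ of them. Hence at least one of the paths, call it $P$, is entirely disjoint from $X$.

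The two endpoints of $P$ lie in $T_Y^*$ by construction, and the path $P$ itself survives in $G[Y] - X$, and therefore in $G - X$. This exhibits two distinct vertices of $T_Y^*$ that are connected in $G - X$, which is precisely the conclusion of the claim.

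There is essentially no hard step here: the only point that requires a moment's care is verifying that $G[Y]$ is a forest so that Proposition~\ref{prop:disjoint-paths} (which was stated for trees via Proposition~\ref{terminal-separation-duality}) applies on each component of $G[Y]$, and this follows directly from the definition of a $\mathcal G_F$-elimination forest for the leaf $v$. The rest is a one-line pigeonhole argument, so this case of the construction of $T_Y^*$ is the easy one; the harder work in Lemma~\ref{lemma3b} will be in the complementary case, where $T_Y$ \emph{can} be separated by $\eta$ vertices inside $G[Y]$.
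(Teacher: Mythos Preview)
Your proof is correct and follows exactly the same pigeonhole argument as the paper: bound $|X\cap B_v|\le \eta$ via Proposition~\ref{prop:ed-properties}.\ref{prop:X-cap-Bv}, so $X$ misses one of the $\eta+1$ vertex-disjoint $T_Y$-paths from Proposition~\ref{prop:disjoint-paths}, whose endpoints lie in $T_Y^*$. One minor remark: by the definition of a $\mathcal G_F$-elimination forest the leaf bag $G[B_v]$ is required to be \emph{connected}, so $G[Y]$ is in fact a tree and Proposition~\ref{prop:disjoint-paths} applies directly without the component-wise detour you mention.
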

    \begin{innerproof}
        By Proposition~\ref{prop:ed-properties}.\ref{prop:X-cap-Bv}, $X$ can only intersect $\eta$ of the $\eta + 1$ vertex-disjoint paths that were obtained through Proposition~\ref{prop:disjoint-paths}.
        Therefore, at least one path is disjoint from $X$, so its endpoints in $T_Y^*$ are connected in $G-X$.
    \end{innerproof}
    
    It remains to consider the case where $T_Y$ can be separated with $\eta$ vertices.
    Let $Z$ be a vertex multiway cut of $T_Y$ in $G[Y]$ with $\abs Z \le \eta$ and let $\mathcal C \coloneqq \cc(G[Y] - Z)$.
    Observe that each of these connected components is a tree with at most one vertex in $T_Y$.
    Let $\mathcal C_T \subseteq \mathcal C$ be the set of components that contain a vertex in $T_Y$.
    We are now going to mark components.
    For each $z \in Z$, mark $\eta + 2$ components in $\mathcal C_T$ that are adjacent to $z$ in $G[Y]$, or all if there are fewer.
    Similarly, for each $u \in \tail(v)$ we mark up to $\eta + 2$ components in $\mathcal C_T$ that are adjacent to $v$ in $G_v^+$.
    Then we define $T_Y^*$ to be the union of all vertices in $T_Y$ in the marked components, together with $Z \cap T_Y$.
    These are at most $\eta (\eta + 2) + \eta (\eta + 2) + \eta = \bigO(\eta^2)$ vertices.
    
    \begin{claim}
        Suppose that $T_Y$ can be separated with $\eta$ vertices in $G[Y]$.
        Let $X$ be a minimum feedback vertex set in $G$ and suppose that $X$ leaves two vertices in $T$ connected.
        Then $X$ also leaves two vertices in $T_Y^* \cup (T \setminus Y)$ connected.
    \end{claim}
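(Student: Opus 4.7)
The plan is to trace any witness path $P$ in $G - X$ between two vertices of $T$ back to one endpoint, and show that when that endpoint lies in $T_Y \setminus T_Y^*$, the path is forced to exit through a vertex $w$ near which many marked components sit. Once $w$ is pinned down, a simple counting argument using $|X \cap Y| \le \eta$ produces two untouched marked components and hence two terminals of $T_Y^*$ connected through $w$.

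First I would fix a path $P$ in $G - X$ joining two distinct $t_1, t_2 \in T$ and, if $\{t_1, t_2\} \subseteq T_Y^* \cup (T \setminus Y)$ is already satisfied, stop; otherwise assume without loss of generality that $t_1 \in T_Y \setminus T_Y^*$. Since $Z \cap T_Y \subseteq T_Y^*$, the vertex $t_1$ lies in some component $C \in \mathcal C_T$, and $C$ must be \emph{unmarked}, for otherwise $t_1$ would be placed into $T_Y^*$ by construction. Because $G[Y]$ is a forest and $Z$ is a multiway cut of $T_Y$ in $G[Y]$, the component $C$ is a tree containing at most one vertex of $T_Y$, so $t_2 \notin C$; hence $P$ leaves $C$ at some first vertex $w$, which satisfies $w \notin X$ (it lies on $P$).

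Next I would argue that $w \in Z \cup \tail(v)$. If $w \in Y$, then $w \in Z$ since $C$ is a connected component of $G[Y] - Z$. Otherwise $w \notin B_v$, and Proposition~\ref{prop:ed-properties}.\ref{prop:path-via-tail}, applied to the edge of $P$ entering $w$ from its predecessor in $C \subseteq B_v$, forces $w \in \tail(v)$. Either way, $C$ is adjacent to $w$ (in $G[Y]$ if $w \in Z$, in $G_v^+$ if $w \in \tail(v)$). Since $C$ is unmarked, the corresponding marking rule must have marked $\eta + 2$ \emph{other} components $C_1, \dots, C_{\eta + 2} \in \mathcal C_T$ adjacent to $w$ (otherwise all components adjacent to $w$, including $C$, would have been marked). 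Each $C_i$ contributes its unique vertex of $T_Y$ to $T_Y^*$.

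To close the argument I would use Proposition~\ref{prop:ed-properties}.\ref{prop:X-cap-Bv}, which gives $|X \cap Y| \le \eta$. Because $C_1, \dots, C_{\eta+2}$ are pairwise vertex-disjoint subsets of $Y$, at least two of them, say $C_i$ and $C_j$, are disjoint from $X$; each is connected in $G - X$ and has a vertex adjacent to $w \notin X$, so both of their $T_Y^*$-terminals are connected to $w$, and thus to each other, in $G - X$. The main obstacle is the case split on $w \in Z$ versus $w \in \tail(v)$: the two halves of the marking rule must be set up so that exactly this dichotomy is covered, and the counting in the last step must use the bound $|X \cap Y| \le \eta$ on the bag rather than any bound involving $\tail(v)$, since $X$ could a priori contain many vertices outside $Y$ that do not help us hit the relevant components.
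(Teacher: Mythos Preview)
Your proposal is correct and follows essentially the same route as the paper's proof: pick the first vertex $w$ on the witness path that exits the component $C$ of $G[Y]-Z$ containing $t_1$, argue $w\in Z\cup\tail(v)$, then use that $C$ being unmarked forces $\eta+2$ marked components adjacent to $w$ and apply $|X\cap Y|\le\eta$ to find two untouched ones. Your write-up is in fact slightly more careful than the paper's (you make explicit that $w\notin X$ and that the two surviving terminals are connected through $w$ in $G-X$, whereas the paper's phrase ``a path between two terminals in $T_Y^*$ in $G[Y]$'' leaves the avoidance of $X$ implicit).
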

    \begin{innerproof}
        Let $Z$ be the vertex multiway cut used in the construction of $T_Y^*$ and let $t_1, t_2 \in T$ be two terminals that are connected in $G - X$.
        If they are both in $T_Y^* \cup (T \setminus Y)$, then the implication is trivial, so assume that $t_1 \in T_Y$ but not in $T_Y^*$.
        Observe that therefore $t_1 \not \in Z$.
        Let $P$ be a path from $t_1$ to $t_2$ in $G - X$ and let $z$ be the first vertex on this path that is not in $G[Y] - Z$.
        We now distinguish \bmp{two} cases.
        If $z \in Z$, then observe that $t_1$ was in a component in $\mathcal C_T$ that was not marked.
        Then there are $\eta + 2$ marked components in $\mathcal C_T$ adjacent to $z$ in $G[Y]$ of which the terminals are in $T_Y^*$.
        Only $\eta$ of these components can be intersected by $X$ by Proposition~\ref{prop:ed-properties}.\ref{prop:X-cap-Bv}, so there exists a path between two terminals in $T_Y^*$ in $G[Y]$.
        If $z \not \in Z$, then we obtain that $z \in \tail(v)$ by Proposition~\ref{prop:ed-properties}.\ref{prop:path-via-tail} and the case follows analogously.
    \end{innerproof}

    This concludes the construction of the sets $S_Y^*$ and $T_Y^*$.
    If any minimum feedback vertex set in $G$ misses a vertex in $S$ or leaves a pair of terminals in $T$ connected, then it also misses a vertex in $S_Y^* \cup (S \setminus Y)$ or leaves a pair of terminals in $T_Y^* \cup (T \setminus Y)$ connected.
    By applying this operation to all leaves of the elimination tree, we obtain the promised sets $S^*$ and $T^*$ which concludes the proof of Lemma~\ref{lemma3b}.
\end{proof}

With Lemma~\ref{lemma3a} and Lemma~\ref{lemma3b} proven, we conclude the proof of Lemma~\ref{lemma3}: if any minimum feedback vertex set in a graph $G$ misses a vertex from a set $S \subseteq V(G)$ or leaves two terminals in a set $T \subseteq V(G)$ connected, then this property also holds for sets $S^* \subseteq S$ and $T^* \subseteq T$ whose sizes only depend on $\ed_{\gf}(G)$.

\subsection{Kernelization algorithm}\label{subsec:kernel}

Our kernel follows the structure of the polynomial kernel for $\mathcal F$-\prob{Minor Free Deletion} when parameterized by a treedepth-$\eta$ modulator for some integer $\eta$~\cite{jansenpieterse2018}.
Our kernel relies crucially on the reduction rule specified in Lemma~\ref{lemma6}, whose correctness relies on Lemma~\ref{lemma3}.

\begin{lemma}[{Cf.~\cite[Lemma 6]{jansenpieterse2018}}]\label{lemma6}
There is a polynomial-time algorithm that, given a graph $G$ with modulator $X \subseteq V(G)$ such that $\ed_{\gf}(G - X) \le \eta$ for a constant $\eta$, outputs an induced subgraph $G'$ of $G$ together with an integer $\Delta$ such that $\FVS(G) = \FVS(G') + \Delta$ and $G' - X$ has at most $\abs{X}^{\bigO_\eta(1)}$ components.
\end{lemma}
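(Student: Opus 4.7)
The argument follows the outline of Lemma~6 in Jansen and Pieterse~\cite{jansenpieterse2018}, with our Lemma~\ref{lemma3} replacing their labelled-minor reduction. The algorithm operates component-by-component on $G-X$, deleting those components whose interaction with $X$ is ``standard'' and keeping a bounded number of ``witness'' representatives of each remaining type. At every step I would maintain an integer $\Delta$ collecting the feedback-vertex contributions of the deleted components so that the identity $\FVS(G)=\FVS(G')+\Delta$ is invariant.

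I would call a component $C$ of $G-X$ \emph{conforming} if the minimum size of a feedback vertex set of $C$ equals the minimum size of a feedback vertex set of $C$ that contains $\mathcal{S}_G(C,X)$ and separates $\mathcal{T}_G(C,X)$. Both quantities can be computed in time $n^{\bigO_\eta(1)}$ by Lemma~\ref{lemma5}, since $\ed_{\gf}(C)\le\eta$. For a conforming $C$, Proposition~\ref{prop:samen-fvs} together with the trivial lower bound $\FVS(G)\ge\FVS(G-V(C))+\FVS(C)$ yields equality, so I may delete $C$ from $G$ and add $\FVS(C)$ to $\Delta$. After exhausting all conforming components, every remaining component $C$ is non-conforming, so by Lemma~\ref{lemma3} applied with $S=\mathcal{S}_G(C,X)$ and $T=\mathcal{T}_G(C,X)$ there exist witness sets $S^\star_C$ and $T^\star_C$ of size bounded by some computable $f(\eta)$ such that every minimum feedback vertex set of $C$ still misses a vertex of $S^\star_C$ or leaves two vertices of $T^\star_C$ connected.

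Next I would assign each surviving component a \emph{type}, namely the labelled bipartite graph on $(S^\star_C\cup T^\star_C,\,X)$ that records, for each vertex of $S^\star_C\cup T^\star_C$, whether it is an $S^\star$- or $T^\star$-witness together with its neighbours in $X$. Since $|S^\star_C\cup T^\star_C|\le f(\eta)$, there are only $|X|^{\bigO_\eta(1)}$ possible types. For each type I would retain $f(\eta)\cdot(|X|+1)$ representatives and delete the surplus, again incrementing $\Delta$ by the $\FVS$-value of each deleted component. The resulting graph $G'$ then has $|X|^{\bigO_\eta(1)}$ components in $G'-X$, and the entire procedure runs in polynomial time by repeated invocation of Lemma~\ref{lemma5}.

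The main difficulty is correctness of the bucketing step: one must show that deleting surplus components of a common type preserves $\FVS(G)$ exactly. The intuition is that any fixed minimum feedback vertex set of $G$ can behave non-conformingly on at most $|X|-1$ components simultaneously by Proposition~\ref{prop:max-gamma-components}, while the witness sets certified by Lemma~\ref{lemma3} have size only $f(\eta)$. An exchange argument along the lines of~\cite{jansenpieterse2018} then lets us reroute the role of any deleted surplus component onto a kept representative of the same type without altering the optimum's size, because a representative of the same type admits a feedback vertex set realising exactly the same pattern of obstructions towards $X$. Turning this intuition into a clean equality $\FVS(G)=\FVS(G')+\Delta$ is the step I expect to require the most care.
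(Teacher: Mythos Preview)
Your proposal contains a genuine gap in the type-counting step. You define the type of a surviving component $C$ as the labelled bipartite graph recording, for each witness vertex in $S^\star_C\cup T^\star_C$, its full neighbourhood in $X$. But a vertex of $S^\star_C\subseteq\mathcal S_G(C,X)$ may be adjacent to an arbitrary subset of $X$ of size at least two, so the number of possible neighbourhoods for a single such vertex is already $2^{|X|}$, and the total number of types is exponential in $|X|$, not $|X|^{\bigO_\eta(1)}$ as you claim. The bucketing step therefore does not yield the promised polynomial bound on the number of surviving components.

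The paper avoids this blow-up by never computing per-component types. Instead it enumerates all subsets $X_0\subseteq X$ with $|X_0|\le 3\gamma$ (there are $|X|^{\bigO_\eta(1)}$ of these), and for each such $X_0$ marks up to $\tau=|X|+1+3\gamma$ components that are non-conforming with respect to $X_0$. The point is that the small witness sets $S^\star,T^\star$ from Lemma~\ref{lemma3} determine a small subset $X^\star\subseteq X$ of size at most $3\gamma$ (two neighbours per $S^\star$-vertex, one per $T^\star$-vertex) such that non-conformity already shows relative to $X^\star$; hence iterating over small subsets of $X$ suffices. The correctness proof is then a contradiction argument rather than an exchange: if a deleted component $C^\star$ were essential, the subset $X^\star$ it produces would have had at least $\tau$ non-conforming components marked, and Proposition~\ref{prop:max-gamma-components} together with a counting argument forces the optimum on $\widehat G$ to be locally suboptimal on more than $|X|$ components, contradicting minimality. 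Your sketched exchange (``reroute the role of a deleted component onto a kept representative of the same type'') is not only unproven but also sits on top of the faulty type bound, so it cannot be salvaged as written.
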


We can use this reduction rule to obtain a graph $G'$ where $G' - X$ has a bounded number of connected components.
We can then identify a set of vertices $Y \subseteq V(G'-X)$ with $\abs Y \le \abs{X}^{\bigO_\eta(1)}$ such that $\ed_{\gf}(G' - X - Y) < \eta$.
By definition of elimination distance, every connected component $C$ of $G'-X$ contains a vertex whose removal decreases $\ed_{\gf}(C)$.
As we limited the number of connected components by applying Lemma~\ref{lemma6}, these vertices constitute a suitable set $Y$.
Now observe that $X \cup Y$ is a modulator to a graph with elimination distance to a forest $\eta - 1$ and that $\abs{X \cup Y}$ is bounded by a polynomial in $\abs X$.
One can therefore provide an inductive argument which repeatedly applies Lemma~\ref{lemma6} and increases the modulator such that the elimination distance to a forest of the remaining graph decreases every iteration.
Once we obtain a modulator to a graph with elimination distance to a forest 1 (which is a forest), we can apply a known polynomial kernel in the size of a feedback vertex set~\cite{fvsbetterquadratickernel}.
We formalize these ideas in Theorem~\ref{thm:main-kernel}.

\begin{theorem}
    \label{thm:main-kernel}
    Let $\eta \ge 0$ be an integer.
    Let $\mathcal G_\eta$ be the set of graphs with elimination distance to a forest at most $\eta$.
    Then \prob{Feedback Vertex Set} admits a polynomial kernel in the size of a $\mathcal G_{\eta}$-modulator $X$.
\end{theorem}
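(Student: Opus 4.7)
The plan is to establish Theorem~\ref{thm:main-kernel} by induction on $\eta$, using Lemma~\ref{lemma6} to power each inductive step while the known kernel for \prob{Feedback Vertex Set} parameterized by the feedback vertex number serves as the base case. For $\eta = 0$, the family $\mathcal G_0$ consists precisely of the forests, so a $\mathcal G_0$-modulator is a feedback vertex set, and the polynomial kernel from~\cite{fvsbetterquadratickernel} directly gives a kernel of size $|X|^{O(1)}$.

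For the inductive step, assume the statement holds for $\eta - 1$ and consider an instance $(G, \ell)$ equipped with a $\mathcal G_\eta$-modulator $X$. First I would invoke Lemma~\ref{lemma6} to obtain an induced subgraph $G'$ together with an offset $\Delta$ satisfying $\FVS(G) = \FVS(G') + \Delta$, such that $G' - X$ has at most $|X|^{O_\eta(1)}$ connected components; replacing the target $\ell$ by $\ell - \Delta$ preserves equivalence. Because $G'$ is induced in $G$ and elimination distance to a forest is monotone under vertex deletion, $\ed_{\gf}(G' - X) \le \eta$. By the recursive definition of elimination distance (Definition~\ref{def:eldist}), for every component $C$ of $G' - X$ with $\ed_{\gf}(C) \ge 1$ there exists a vertex $v_C \in V(C)$ with $\ed_{\gf}(C - v_C) \le \eta - 1$; such a $v_C$ can be identified in polynomial time for constant $\eta$ by computing a $\gf$-elimination forest of $C$ and taking its root. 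Collecting these vertices into a set $Y$ produces $|Y| \le |X|^{O_\eta(1)}$ such that every component of $G' - (X \cup Y)$ has elimination distance to a forest at most $\eta - 1$, so by the disconnected case of Definition~\ref{def:eldist} we conclude $\ed_{\gf}(G' - (X \cup Y)) \le \eta - 1$. Hence $X \cup Y$ is a $\mathcal G_{\eta-1}$-modulator of $G'$ whose size is polynomial in $|X|$.

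Applying the inductive hypothesis to $(G', \ell - \Delta)$ with modulator $X \cup Y$ then yields an equivalent instance whose size is polynomial in $|X \cup Y|$, and therefore polynomial in $|X|$. The heart of the argument has already been dispatched inside Lemma~\ref{lemma6}, whose correctness itself rests on the main Lemma~\ref{lemma3}; what remains is essentially bookkeeping. The main care point is to verify that after the $\eta$ layers of recursion, each multiplying the effective modulator size by a factor of $|X|^{O_\eta(1)}$, the final kernel still has size $|X|^{O_\eta(1)}$. This composition stays polynomial in $|X|$ precisely because $\eta$ is a constant, and the same argument shows that the total running time remains polynomial. No obstacle beyond this bookkeeping is anticipated, since every combinatorial ingredient—component-count reduction, monotonicity of $\ed_{\gf}$, and base-case kernelization—is already at hand.
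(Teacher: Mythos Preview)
The proposal is correct and follows essentially the same approach as the paper: induction on $\eta$, with the base case handled by the known kernel for \prob{Feedback Vertex Set} parameterized by a feedback vertex set, and the inductive step driven by Lemma~\ref{lemma6} followed by adding one vertex per remaining component to lower the elimination distance. The only cosmetic difference is how the vertex $v_C$ is located (you take the root of a computed $\gf$-elimination forest, while the paper tries all vertices and checks $\ed_{\gf}(C-\{v\})$ via~\cite{eldist}); both are polynomial for constant $\eta$.
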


For our kernel, we will assume that such a $\mathcal G_{\eta}$-modulator is provided together with the input.
This has the technical reason that we otherwise cannot verify efficiently whether such a modulator actually exists.
In other words, we cannot determine efficiently for a graph $G$ with integers $k$ and $\ell$ whether there exists a feedback vertex set of size $\ell$ and a $\mathcal G_{\eta}$-modulator of size $k$ as the latter is an NP-hard problem.
By providing the modulator $X$ with the parameterized instance, we only need to verify that $X$ is indeed such a modulator.
This can be done efficiently when $G-X$ has bounded elimination distance to a forest, since determining elimination distances is fixed parameter tractable in the target value~\cite{eldist}.
In practice, one can determine a $\log(c+1)$-approximation for the deletion distance to a minor-closed graph family with elimination distance to a forest bounded by $c$~\cite{treewidthapproximation}, since the treewidth is bounded as well by Proposition~\ref{prop:ed-properties}.\ref{prop:tw}.
This approximation with an effectively constant approximation ratio improves the $\abs{X_{\text{OPT}}} \log^{3/2} \abs{X_{\text{OPT}}}$-approximation which was used in earlier papers~\cite{approxalgforbiddenminors,bridgedepth}.
Here, $X_{\text{OPT}}$ denotes a minimum-sized modulator.

\begin{proof}[Proof of Theorem~\ref{thm:main-kernel}]
    Recall that $X$ is a subset of $V(G)$ such that $\ed_{\mathcal G_F}(G - X) \le \eta$.
    We will prove the theorem using induction on $\eta$.

    If $\eta = 0$ then $G - X$ is a forest.
    It is known that feedback vertex set admits a quadratic kernel in the size of a solution~\cite[Section 9.1]{parameterizedAlgorithms}, concluding the base case.

    If $\eta > 0$, we apply Lemma~\ref{lemma6} to obtain a graph $G'$ and an integer $\Delta$ such that $G' - X$ has at most $\abs X^{\bigO_\eta(1)}$ connected components and $\FVS(G') + \Delta = \FVS(G)$.
    Observe that $G'$ and $\Delta$ can be obtained in polynomial time for fixed $\eta$.
    We will now extend the modulator $X$ to a modulator $X'$ such that $G' - X'$ has $\ed_{\mathcal G_F}(G' - X') < \eta$.
    Consider each connected component $C$ of $G' - X$ for which $\ed_{\mathcal G_F}(C) = \eta$.
    By definition, for each such component $C$ there must exist a vertex $v \in V(C)$ such that $\ed_{\mathcal G_F}(C - \set v) < \eta$.
    We can find such a vertex by trying all options for $v$ and determining $\ed_{\mathcal G_F}(C - \set v)$.
    This can be done in polynomial time for constant $\eta$ since determining elimination distances to minor-closed graph classes is fixed parameter tractable in the target value~\cite{eldist}.
    By adding this vertex to $X$ for each considered component $C$, we obtain a set $X'$ whose size is still bounded by a polynomial in $\abs X$ for constant $\eta$ since the number of components to consider is polynomial in $\abs X$.
    Since each connected component $C$ of $G' - X'$ has $\ed_{\mathcal G_F}(C) < \eta$, it follows that $\ed_{\mathcal G_F}(G' - X') < \eta$.
    Then by induction, the \prob{Feedback Vertex Set} problem on $G'$ admits a polynomial kernel in the size of $X'$, which is a modulator to a graph with elimination distance to a forest at most $\eta - 1$.
    Since $\FVS(G') + \Delta = \FVS(G)$ by Lemma~\ref{lemma6}, we know that $G'$ has a feedback vertex set of size at most $\ell$ if and only if $G$ has a feedback vertex set of size at most $\ell + \Delta$.
    This concludes the induction step.
\end{proof}

It remains to prove Lemma~\ref{lemma6}.
\begin{proof}[Proof of Lemma~\ref{lemma6}]
    Pick $\gamma \in \bigO_\eta(1)$ such that for any graph with elimination distance to a forest at most $\eta$ and vertex sets $S$ and $T$, Lemma~\ref{lemma3} returns subsets $S^* \subseteq S$ and $T^* \subseteq T$ of size at most $\gamma$.
    Define $\tau = \abs X + 1 + 3 \gamma$ and $\mathcal C = \cc(G-X)$.

    We are now going to add components from $\mathcal C$ to a set $\mathcal C'$.
    We will ensure that the size of $\mathcal C'$ is polynomial in $\abs X$ for constant $\eta$ and use this set to prove the lemma.
    Consider each set $X_0 \subseteq X$ with $\abs{X_0} \le 3\gamma$.
    Observe that these are $\bigO(\abs X^{3 \gamma}) = \abs X^{\bigO_\eta(1)}$ subsets to consider.
    For each of these sets $X_0 \subseteq X$, we do the following for each component $C \in \mathcal C$:
    we consider the sets $\mathcal S(C, X_0)$ and $\mathcal T(C, X_0)$ as defined in Definition~\ref{def:st}.
    We then determine whether there exists a minimum feedback vertex set in $C$ containing $\mathcal S(C, X_0)$ and separating $\mathcal T(C, X_0)$.
    By Lemma~\ref{lemma5}, this can be verified in polynomial time in graphs with bounded elimination distance to a forest.
    If there are at most $\tau$ components in $\mathcal C$ where such a minimum feedback vertex set does not exist, then we add all components to $\mathcal C'$;
    otherwise we pick $\tau$ of them.
    Observe that $\abs {\mathcal C'} = \abs X^{\bigO_\eta(1)}$, since we add at most $\tau$ components for each considered subset of $X$.

    We now claim the following regarding components that are not in $\mathcal C'$.
    \begin{claim}\label{claim:lemma6}
    For any component $C^* \in \mathcal C \setminus \mathcal C'$, $\FVS(G) = \FVS(G - V(C^*)) + \FVS(C^*)$.
    \end{claim}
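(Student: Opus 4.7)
My plan is to prove the equality by establishing both inequalities separately. The direction $\FVS(G) \geq \FVS(G - V(C^*)) + \FVS(C^*)$ is routine: any feedback vertex set $Y$ of $G$ restricts to a feedback vertex set of $C^*$ on $V(C^*)$ and to a feedback vertex set of $G - V(C^*)$ on its complement, yielding $\abs{Y} \geq \FVS(C^*) + \FVS(G - V(C^*))$. The substantive direction is the upper bound. My strategy there is to fix an arbitrary minimum feedback vertex set $\hat Y$ of $G - V(C^*)$, set $X_0 \coloneqq X \setminus \hat Y$, and produce a minimum feedback vertex set $Y^*$ of $C^*$ that contains $\mathcal S_G(C^*, X_0)$ and separates $\mathcal T_G(C^*, X_0)$. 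Proposition~\ref{prop:samen-fvs} would then make $\hat Y \cup Y^*$ a feedback vertex set of $G$ of size $\FVS(G - V(C^*)) + \FVS(C^*)$.

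To produce $Y^*$ I argue by contradiction, supposing every minimum feedback vertex set of $C^*$ misses a vertex of $\mathcal S_G(C^*, X_0)$ or leaves two vertices of $\mathcal T_G(C^*, X_0)$ connected. Applying Lemma~\ref{lemma3} inside the connected graph $C^*$ (whose elimination distance to a forest is at most $\eta$) with the two disjoint vertex sets $\mathcal S_G(C^*, X_0)$ and $\mathcal T_G(C^*, X_0)$ yields subsets $S^*$ and $T^*$ of size at most $\gamma$ each with the same property. From these I construct a set $X_0' \subseteq X_0$ with $\abs{X_0'} \leq 3\gamma$, consisting of two $X_0$-neighbors for each vertex of $S^*$ and the unique $X_0$-neighbor of each vertex of $T^*$. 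A direct verification gives $S^* \subseteq \mathcal S_G(C^*, X_0')$ and $T^* \subseteq \mathcal T_G(C^*, X_0')$, so $C^*$ becomes a failing component for $X_0'$ in the sense used when constructing $\mathcal C'$.

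The crux of the argument, and the step I expect to require the most care, is extracting a contradiction from $C^* \notin \mathcal C'$. Since the construction considered $X_0'$ and did not include $C^*$, there must be at least $\tau$ other failing components for $X_0'$ that were added to $\mathcal C'$. I apply Proposition~\ref{prop:max-gamma-components} to the feedback vertex set $\tilde Y \coloneqq \hat Y \cup V(C^*)$ of $G$, noting that $X_0'$ is disjoint from $\tilde Y$. This bounds by $\abs{X_0'} - 1 \leq 3\gamma - 1$ the number of components $C$ where $\tilde Y \cap V(C)$ misses a vertex of $\mathcal S_G(C, X_0')$ or leaves a pair of $\mathcal T_G(C, X_0')$ connected in $C - \tilde Y$. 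The component $C^*$ trivially satisfies both conditions since $\tilde Y \cap V(C^*) = V(C^*)$, and for every $C \neq C^*$ we have $\tilde Y \cap V(C) = \hat Y \cap V(C)$. Therefore at least $\tau - (3\gamma - 1) = \abs{X} + 2$ of the failing components $C \neq C^*$ have $\hat Y \cap V(C)$ satisfying the $X_0'$-property; since such a $C$ is failing, $\hat Y \cap V(C)$ is not a minimum feedback vertex set of $C$ and hence $\abs{\hat Y \cap V(C)} \geq \FVS(C) + 1$.

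I close with a simple counting argument. Summing over components yields $\abs{\hat Y} \geq \sum_{C \in \mathcal C \setminus \{C^*\}} \FVS(C) + \abs{X} + 2$. On the other hand, the union of $X$ with a minimum feedback vertex set of each $C \in \mathcal C \setminus \{C^*\}$ is itself a feedback vertex set of $G - V(C^*)$ of size $\abs{X} + \sum_{C \neq C^*} \FVS(C)$, so the minimality of $\hat Y$ gives the reverse inequality $\abs{\hat Y} \leq \abs{X} + \sum_{C \neq C^*} \FVS(C)$. Combined, these two bounds on $\abs{\hat Y}$ yield $0 \geq 2$, the desired contradiction, completing the proof.
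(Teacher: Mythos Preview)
Your proof is correct and follows essentially the same route as the paper: both split the equality into two inequalities, use Proposition~\ref{prop:samen-fvs} for the nontrivial direction, invoke Lemma~\ref{lemma3} to pass from $(\mathcal S_G(C^*,X_0),\mathcal T_G(C^*,X_0))$ to small sets $(S^*,T^*)$, build a set $X_0'\subseteq X_0$ of size at most $3\gamma$ witnessing that $C^*$ is ``failing'' for $X_0'$, and then exploit that $\mathcal C'$ already contains $\tau$ such failing components together with Proposition~\ref{prop:max-gamma-components} to contradict the minimality of $\widehat Y$. The only cosmetic differences are that you apply Proposition~\ref{prop:max-gamma-components} in $G$ to the feedback vertex set $\tilde Y=\widehat Y\cup V(C^*)$ (the paper implicitly works in $\widehat G$ with $\widehat Y$), and you phrase the final contradiction as a two-sided bound on $|\widehat Y|$ rather than explicitly exhibiting a smaller feedback vertex set $\widehat Y'$; these are equivalent presentations of the same argument.
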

    \begin{innerproof}
        Let $\widehat G = G - V(C^*)$.
        Let $Y$ be a minimum feedback vertex set in $G$.
        Then $Y$ can be split into a feedback vertex set $Y_1$ in $\widehat G$ and a feedback vertex set $Y_2$ in $C^*$.
        It follows that $\FVS(G) = \abs Y = \abs{Y_1} + \abs{Y_2} \ge \FVS(\widehat G) + \FVS(C^*)$, and it remains to prove the other inequality.

        Let $\widehat Y$ be a minimum feedback vertex set in $\widehat G$ and let $X_0 = X \setminus \widehat Y$.
        We again consider sets $\mathcal S(C^*, X_0)$ and $\mathcal T(C^*, X_0)$.
        If there exists a minimum feedback vertex set $Y^*$ in $C^*$ that contains $\mathcal S(C^*, X_0)$ and separates $\mathcal T(C^*, X_0)$, then by Proposition~\ref{prop:samen-fvs}, $Y^* \cup \widehat Y$ is a minimum feedback vertex set in $G$.
        It follows that $\FVS(G) \le \FVS(\widehat G) + \FVS(C^*)$ under these circumstances.
        Therefore, suppose towards a contradiction that each minimum feedback vertex set in $C^*$ misses a vertex in $\mathcal S(C^*, X_0)$ or leaves a pair of vertices in $\mathcal T(C^*, X_0)$ connected.
        Then by Lemma~\ref{lemma3}, this also holds for sets $S^* \subseteq \mathcal S(C^*, X_0)$ and $T^* \subseteq \mathcal T(C^*, X_0)$ whose sizes are bounded by $\gamma$.

        Now consider the following set $X^* \subseteq X_0$:
        for each vertex in $S^*$, add two arbitrary neighbors in $X_0$ to $X^*$.
        For each vertex in $T^*$, add its neighbor in $X_0$ to $X^*$.
        Now consider the sets $\mathcal S(C^*, X^*)$ and $\mathcal T(C^*, X^*)$.
        Observe that $S^* \subseteq \mathcal S(C^*, X^*)$ and $T^* \subseteq \mathcal T(C^*, X^*)$.
        It follows that any minimum feedback vertex set in $C^*$ misses a vertex in $\mathcal S(C^*, X^*)$ or leaves a pair of vertices in $\mathcal T(C^*, X^*)$ connected.
        Combined with the fact that $\abs{X^*} \le 3 \gamma$, it means that $C^*$ could have been added to $\mathcal C'$ during its construction when we considered $X^*$ as a subset of $X$.
        However, the component was not added, implying that $\mathcal C'$ contains at least $\tau$ components $C$ where every minimum feedback vertex set misses a vertex from $\mathcal S(C, X^*)$ or leaves a pair of vertices in $\mathcal T(C, X^*)$ connected.
        Let $\mathcal C''$ be the set of these components.
        By Proposition~\ref{prop:max-gamma-components}, there can be at most $3\gamma$ components $C \in \mathcal C''$ where $\widehat Y \cap V(C)$ misses a vertex from $\mathcal S(C, X^*)$ or leaves a pair of vertices in $\mathcal T(C, X^*)$ connected in $C - \widehat Y$.
        Since $\tau = \abs X + 1 + 3\gamma$, there are at least $\abs X + 1$ components $C \in \mathcal C''$ where $\widehat Y \cap V(C)$ contains $\mathcal S(C, X^*)$ and separates $\mathcal T(C, X^*)$, implying that $\widehat Y \cap V(C)$ is not a minimum feedback vertex set in $C$.
        We will now construct a different feedback vertex set $\widehat Y'$ on $\widehat G$:
        \begin{itemize}
            \item For each of the at least $\abs X + 1$ components $C \in \mathcal C \setminus \set{C^*}$ where $\widehat Y \cap V(C)$ is not a minimum feedback vertex set of $G[C]$, we add a minimum feedback vertex set in $C$.
            \item For every other component $C \in \mathcal C \setminus \set{C^*}$, we take $\widehat Y \cap V(C)$.
            \item We add $X$ to $\widehat Y'$.
        \end{itemize}
        Observe that $\widehat Y'$ is a feedback vertex set in $\widehat G$: since $X$ is included in $\widehat Y'$, any cycle must be contained in a component in $\mathcal C \setminus \set{C^*}$, but for each of these components we added a feedback vertex set in $C$ to $\widehat Y'$.
        Furthermore, $\abs{\widehat Y'} \le \abs{\widehat Y} - (\abs X + 1) + \abs X < \abs{\widehat Y}$, contradicting the assumption that $\widehat Y$ is a minimum feedback vertex set.

        We conclude that there exists a minimum feedback vertex set in $C^*$ that contains $\mathcal S(C^*, X_0)$ and separates $\mathcal T(C^*, X_0)$, proving that $\FVS(G) \le \FVS(\widehat G) + \FVS(C^*)$ as we saw earlier.
    \end{innerproof}
    We can now complete the proof of Lemma~\ref{lemma6}.
    Let $G'$ be the subgraph of $G$ induced by $X$ and all vertices in components in $\mathcal C'$.
    Observe that the number of components of $G' - X$ is polynomial in $\abs X$.
    Let $\Delta$ be the size of a minimum feedback vertex set in the subgraph consisting of all components in $\mathcal C \setminus \mathcal C'$, which we can obtain in polynomial time by Lemma~\ref{lemma5}.
    By applying Claim~\ref{claim:lemma6} to all components in $\mathcal C \setminus \mathcal C'$, we conclude that $\FVS(G) = \FVS(G') + \Delta$.
\end{proof}

    \section{Kernelization lower bounds}\label{sec:lower-bound}

    We first introduce some relevant terminology and its properties.

\subsection{Additional preliminaries for Section~\ref{sec:lower-bound}}\label{app:prelim-lower-bound}
We sometimes slightly abuse the $\bigO_{\eta}$ notation here and write $\bigO_{\mathcal F}(f(n))$ for a set of graphs $\mathcal F$, rather than writing the size of the largest graph in $\mathcal F$ as subscript.
For a graph $G$ and a set of connected graphs $\mathcal F$, we write $\ed_{\neg \mathcal F}(G)$ to denote the elimination distance to an $\mathcal F$-minor free graph.

We introduce the notion of \bmp{a necklace}, which turns out to be a crucial structure.
\begin{definition}
    Let $G$ be a graph and let $\mathcal F$ be \bmp{a} collection of connected graphs.
    $G$ is an $\mathcal F$-\emph{necklace} of length $t$ if there exists a partition of $V(G)$ into $S_1, \dots, S_t$ such that
    \begin{itemize}
        \item $G[S_i] \in \mathcal F$ for each $i \in [t]$ (these subgraphs are the \emph{beads} of the necklace),
        \item $G$ has precisely one edge between $S_i$ and $S_{i+1}$ for each $i \in [t-1]$,
        \item $G$ has no edges between any other pair of sets $S_i$ and $S_j$.
    \end{itemize}
\end{definition}

When the length of the necklace is not relevant, we simply speak of an $\mathcal F$-necklace.
The following definition specifies a special type of necklace.

\begin{definition}
    \label{def:uniform-necklace}
    Let $\mathcal F$ be a collection of connected graphs.
    Let $G$ be an $\mathcal F$-necklace of length $t$.
    We say that $G$ is a \emph{uniform necklace} if it satisfies two additional conditions.
    \begin{itemize}
        \item There exists a graph $H \in \mathcal F$ such that each bead $G[S_i]$ is isomorphic to $H$.
        \item There exist $x, y \in V(H)$ and graph isomorphisms $f_i \colon V(H) \rightarrow V(G[S_i])$ for each bead $G[S_i]$, such that for each $i \in [t-1]$, the edge between $G[S_i]$ and $G[S_{i+1}]$ has precisely the endpoints $f_i(x)$ and $f_{i+1}(y)$.
    \end{itemize}
\end{definition}
Notice that for a collection of graphs $\mathcal F$, we can specify a uniform necklace uniquely by a graph $H \in \mathcal F$, its length $t$ and the two vertices in $x, y \in V(H)$ indicating the endpoints of the edges between two consecutive beads.
It will be useful to consider all uniform necklaces of different lengths with this structure.
We speak of the \emph{uniform necklace structure} $(H, x, y)$ to describe all uniform $\set{H}$-necklaces where there exist isomorphisms as described in Definition~\ref{def:uniform-necklace}:
for each pair of consecutive beads $S_i$ and $S_{i+1}$ with edge $e$ connecting them, the endpoint of $e$ in $S_i$ equals the image of $x$ under the $i$th isomorphism and the endpoint in $S_{i+1}$ equals the image of $y$ under the $(i+1)$st.

The following proposition explains why we consider this special type of necklaces.
\begin{proposition}
    \label{prop:always-uniform-necklaces}
    Let $\mathcal F$ be a finite collection of connected graphs.
    Let $\mathcal G$ be a minor-closed graph family that contains arbitrarily long $\mathcal F$-necklaces.
    Then $\mathcal G$ also contains arbitrarily long uniform $\mathcal F$-necklaces.
    Moreover, there exists a graph $H \in \mathcal F$ and vertices $x, y \in V(H)$ such that all uniform necklaces with structure $(H, x, y)$ are contained in $\mathcal G$.
\end{proposition}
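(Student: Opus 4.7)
The plan is to combine two pigeonhole arguments with the fact that $\mathcal G$ is minor-closed. Starting from an $\mathcal F$-necklace $G$ of length $t$ in $\mathcal G$ with beads $S_1,\ldots,S_t$, I assign to each middle bead $S_i$ ($2\le i\le t-1$) the isomorphism class of the triple $(G[S_i], a_i, b_i)$ as its \emph{type}, where $a_i, b_i \in S_i$ are the endpoints of the unique edges to $S_{i+1}$ and from $S_{i-1}$ respectively. Because $\mathcal F$ is finite and each of its members has boundedly many vertices, only constantly many types exist, so taking $t$ sufficiently large and applying pigeonhole yields $k$ middle beads at indices $i_1<\cdots<i_k$ all sharing a single type, which we represent by some $(H,x,y)$ with $H\in\mathcal F$ and $x,y\in V(H)$.

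Next I would realize these $k$ beads as a uniform necklace of length $k$ with structure $(H,x,y)$ by taking a minor of $G$. First delete every bead $S_j$ with $j<i_1$ or $j>i_k$. Then for each consecutive pair $i_j, i_{j+1}$ of selected indices with $i_{j+1}>i_j+1$, contract the (connected) intermediate region $S_{i_j+1}\cup\cdots\cup S_{i_{j+1}-1}$ into a single auxiliary vertex $v$; by the necklace structure the only edges leaving this region were $a_{i_j} b_{i_j+1}$ and $a_{i_{j+1}-1} b_{i_{j+1}}$, so $v$ has exactly the two neighbours $a_{i_j}$ and $b_{i_{j+1}}$. Contract the edge $v b_{i_{j+1}}$: the effect is to introduce the single fresh edge $a_{i_j} b_{i_{j+1}}$, while the induced subgraphs on the preserved beads $S_{i_j}$ and $S_{i_{j+1}}$ remain unchanged. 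Choosing, for each $j$, an isomorphism $f_{i_j}\colon V(H)\to V(S_{i_j})$ that witnesses the common type (so $f_{i_j}(x)=a_{i_j}$ and $f_{i_j}(y)=b_{i_j}$) exhibits the resulting graph as a uniform necklace of length $k$ with structure $(H,x,y)$, which lies in $\mathcal G$ by minor-closure.

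The construction above produces, for every $k$, some uniform $\mathcal F$-necklace of length $k$ in $\mathcal G$ with a structure $(H_k,x_k,y_k)$ that a priori depends on $k$. Only finitely many candidate structures $(H,x,y)$ with $H\in\mathcal F$ and $x,y\in V(H)$ exist, so a final pigeonhole step extracts a single structure $(H^*,x^*,y^*)$ realized by uniform necklaces in $\mathcal G$ of arbitrarily large length. Deleting the last bead of such a necklace produces one of the same structure but shorter by one; iterating this observation and using minor-closure of $\mathcal G$ shows that \emph{every} uniform necklace with structure $(H^*,x^*,y^*)$ belongs to $\mathcal G$, as required.

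The most delicate step is the minor construction: one must verify that the two successive contractions create no self-loops or parallel edges (which follows because consecutive beads are vertex-disjoint and not originally adjacent), that the unique new edge between preserved beads has precisely the prescribed endpoints $a_{i_j}$ and $b_{i_{j+1}}$, and that the induced subgraphs on the retained beads are preserved verbatim. A secondary subtlety, handled by restricting attention to middle beads of the original necklace, is that the first and last beads of the source necklace lack one of the endpoints $a_i, b_i$ and would therefore live in a slightly different type space.
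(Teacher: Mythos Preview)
Your proposal is correct and follows essentially the same approach as the paper's proof: a pigeonhole argument on bead types combined with contracting the intermediate segments, followed by a second pigeonhole over the finitely many structures $(H,x,y)$. The only cosmetic differences are that the paper splits the first pigeonhole into two stages (first isolating a single $H\in\mathcal F$, then the pair $(x,y)$) and phrases the ``moreover'' part as a proof by contradiction rather than via your explicit bead-deletion argument; both variants are equivalent.
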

\begin{proof}
    We first argue that there exists a graph $H \in \mathcal F$ such that $\mathcal G$ contains arbitrarily long $\set H$-necklaces.
    Observe that if only a finite number $c$ of beads in necklaces in $\mathcal G$ can be isomorphic to each graph in $\mathcal F$, then $\mathcal G$ cannot contain necklaces of length more than $c \cdot \abs{\mathcal F}$.
    It follows that there exists a graph $H \in \mathcal F$ such that $\mathcal G$ contains necklaces where arbitrarily many beads are isomorphic to $H$.
    If we take such a necklace, contract all edges in beads which are not isomorphic to $H$ and then contract the resulting paths between beads sufficiently, we can create arbitrarily long $\set H$-necklaces.
    These are in $\mathcal G$ as well since it is a minor-closed graph family.
    We can now take an $\set H$-necklace $G$ of arbitrary length $t$ in $\mathcal G$ with beads $G[S_1], \dots, G[S_t]$.
    Let $f_i \colon V(H) \to S_i$ be a graph isomorphism between $H$ and the $i$th bead for each $i \in [t]$.
    Now consider a bead $G[S_i]$ with $2 \le i \le t-1$ where $u \in S_i$ is the endpoint of the edge to $S_{i-1}$ and $v \in S_i$ is the endpoint of the edge to $S_{i+1}$.
    There are $\abs{V(H)}$ vertices in $H$ that can be the preimage of $u$ under the $i$th isomorphism and, similarly, there are $\abs{V(H)}$ options for $v$.
    Therefore, by the pigeonhole principle there exist vertices $x, y \in V(H)$ such that for at least $\lceil \frac{t-2}{\abs{V(H)}^2} \rceil$ beads $G[S_i]$, the vertex in $S_i$ adjacent with $S_{i-1}$ equals the image of $x$ under the $i$th isomorphism and the vertex in $S_i$ adjacent to $S_{i+1}$ equals the image of $y$ under that isomorphism.
    By contracting the edges within all other beads and by contracting the resulting paths between beads sufficiently, we obtain a uniform $\mathcal F$-necklace of length $\lceil \frac{t-2}{\abs{V(H)}^2} \rceil$ which is contained in $\mathcal G$ as it is minor-closed.
    By picking $t$ arbitrarily large, we conclude that $\mathcal G$ contains arbitrarily long uniform $\mathcal F$-necklaces.

    Furthermore, suppose that for some graph $H$ and vertices $x, y \in V(H)$, $\mathcal G$ does not contain all uniform $\mathcal F$-necklaces with structure $(H, x, y)$.
    Since $\mathcal G$ is minor-closed, this implies that there exists a constant $c$ bounding the length of all uniform $\mathcal F$-necklaces with structure $(H, x, y)$ in $\mathcal G$.
    Now suppose towards a contradiction that for no $H \in \mathcal F$ and $x, y \in V(H)$, $\mathcal G$ contains all uniform $\mathcal F$-necklaces with structure $(H, x, y)$.
    As the number of structures is finite for a finite collection of finite graphs, we can take the maximum of these constants.
    This constant then bounds the length of any uniform necklace in $\mathcal G$, while we proved earlier that $\mathcal G$ contains arbitrarily long uniform necklaces.
    We obtain a contradiction and thereby conclude that there exists an $H \in \mathcal F$ and $x, y \in V(H)$ such that $\mathcal G$ contains all uniform necklaces with structure $(H, x, y)$.
\end{proof}

The following observation follows directly from the definitions of a graph minor.

\begin{observation}
    \label{obs:minor-conn}
    Let $G$ be a graph and $H$ be a connected graph.
    If $\phi$ is a minor model of $H$ in $G$, then the graph $G[\phi(V(H))]$ is connected.
\end{observation}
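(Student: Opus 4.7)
The plan is to show that any two vertices of $G[\phi(V(H))]$ can be joined by a walk that stays entirely inside $\phi(V(H))$, which is exactly connectivity of the induced subgraph. I would pick arbitrary vertices $a, b \in \phi(V(H))$ and use the definition of a minor model to locate them: say $a \in \phi(u)$ and $b \in \phi(v)$ for some (not necessarily distinct) $u, v \in V(H)$.

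Next, since $H$ is connected by assumption, I would fix a path $u = w_0, w_1, \dots, w_k = v$ in $H$. For each edge $w_{i-1}w_i$ of this path, the definition of a minor model provides an edge of $G$ with one endpoint $p_i \in \phi(w_{i-1})$ and the other endpoint $q_i \in \phi(w_i)$. I would then stitch together a walk in $G$: start at $a$ inside $\phi(w_0)$, move to $p_1$ through a path in $G[\phi(w_0)]$ (which exists because $G[\phi(w_0)]$ is connected by the first property of a minor model), cross the edge $p_1 q_1$, move inside $G[\phi(w_1)]$ from $q_1$ to $p_2$, and so on, finally ending at $b$ inside $G[\phi(w_k)]$ after reaching it from $q_k$.

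Every vertex used in this walk lies in some $\phi(w_i) \subseteq \phi(V(H))$, so the whole walk stays inside $G[\phi(V(H))]$. Hence $a$ and $b$ lie in the same connected component of $G[\phi(V(H))]$, and since they were arbitrary, $G[\phi(V(H))]$ is connected.

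There is no real obstacle here; the statement is essentially unpacking the definition of a minor model together with the hypothesis that $H$ is connected. The only point that requires a line of justification is that the branch sets $\phi(w_i)$ are pairwise disjoint, so that the intra-bag subpaths and the inter-bag edges do not overlap in problematic ways; this follows from the disjointness clause in the definition of a minor model, but it is only needed to make the walk well-defined and is not used to rule out any obstruction.
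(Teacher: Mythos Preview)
Your argument is correct and is precisely the natural unpacking of the definition of a minor model; the paper itself does not give a proof but simply remarks that the observation ``follows directly from the definitions of a graph minor.'' One small comment: the disjointness of the branch sets is not actually needed anywhere in your argument---a walk whose vertices all lie in $\phi(V(H))$ certifies connectivity regardless of overlaps---so the final paragraph can be dropped.
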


We will often restrict ourselves to necklaces where each graph $H \in \mathcal F$ is \emph{biconnected}, meaning that $H$ is connected and for any $v \in V(H)$ the graph $H - \set v$ is still connected.
A \emph{biconnected component} of a graph $G$ is a maximal biconnected subgraph of $G$, i.e. it is a biconnected subgraph of $G$ which is not a proper subgraph of any other biconnnected subgraph of $G$.
The following proposition gives a necessary condition for a graph to contain a biconnected graph as a minor.
\begin{proposition}
    \label{prop:biconn}
    Let $H$ be a biconnected graph and let $G$ be a graph which contains $H$ as a minor.
    Then for any minimal minor model $\phi$ of $H$ in $G$, the graph $G[\phi(V(H))]$ is biconnected.
    Furthermore, this graph is a minor of a biconnected component.
\end{proposition}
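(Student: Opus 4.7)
The plan is to prove both claims by contradiction, leveraging minimality of $\phi$ to rule out articulation points and then concluding the ``furthermore'' assertion via standard block theory.

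Suppose towards a contradiction that $G' \coloneqq G[\phi(V(H))]$ is not biconnected. By Observation~\ref{obs:minor-conn}, $G'$ is connected, so there exists a cut vertex $v \in V(G')$ whose removal disconnects $G'$; let $u^* \in V(H)$ be the unique vertex with $v \in \phi(u^*)$. The first sub-step is to show that all branch sets other than $\phi(u^*)$ lie in a single component of $G' - v$: since $H$ is biconnected, $H - u^*$ is connected, so I can lift any path in $H - u^*$ between $u_1, u_2 \in V(H) \setminus \{u^*\}$ to a walk in $G' - v$ by stitching together the branch sets $\phi(w_i)$---each of which is a connected subgraph of $G' - v$ since $v \notin \phi(w_i)$---via the edges guaranteed by the minor model, none of which are incident to $v$. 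Hence all $\phi(u)$ with $u \neq u^*$ lie in a single component $C$ of $G' - v$, and any other component $D$ of $G' - v$, which exists because $v$ is a cut vertex, satisfies $D \subseteq \phi(u^*) \setminus \{v\}$.

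Next, I would construct a strictly smaller minor model by setting $\phi'(u) = \phi(u)$ for $u \neq u^*$ and $\phi'(u^*) = \phi(u^*) \setminus D$, deriving the desired contradiction from minimality. The only non-trivial checks are that $G[\phi'(u^*)]$ remains connected and that each edge $u^* b$ of $H$ is still witnessed by an edge of $G$. Both follow from a single observation: any edge of $G$ with one endpoint in $D$ and the other outside $D \cup \{v\}$ would be an edge of $G' - v$ joining $D$ to a different component of $G' - v$, which is impossible. Consequently, a shortest path in $G[\phi(u^*)]$ from any $w \in \phi(u^*) \setminus D$ to $v$ cannot enter $D$ before reaching $v$ and therefore lies in $G[\phi(u^*) \setminus D]$; similarly, the $\phi(u^*)$-endpoint of any edge witnessing an $H$-edge $u^* b$ (with $\phi(b) \subseteq C$) lies outside $D$. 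Thus $\phi'$ is a valid minor model of $H$ in $G$ with $\phi'(V(H)) = \phi(V(H)) \setminus D \subsetneq \phi(V(H))$, contradicting the minimality of $\phi$. For the ``furthermore'' part, I would invoke the standard fact that any biconnected subgraph of $G$ is contained in a unique maximal biconnected subgraph, i.e., a biconnected component; since $G[\phi(V(H))]$ is now known to be biconnected, it sits inside some biconnected component of $G$ and is therefore a subgraph, and hence a minor, of it.

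The step I expect to be the main obstacle is the connectivity verification for $G[\phi'(u^*)]$, because a priori a path in $G[\phi(u^*)]$ might wander in and out of $D$; the argument hinges on showing that the cut vertex $v$ is the only gateway between $D$ and the rest of $\phi(u^*)$, which is what allows paths to be rerouted or truncated so as to avoid $D$ entirely.
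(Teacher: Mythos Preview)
Your proof is correct and follows essentially the same strategy as the paper: locate a cut vertex $v$ of $G[\phi(V(H))]$, find a component of $G[\phi(V(H))]-v$ lying entirely inside the branch set $\phi(u^*)$ containing $v$, and delete it to contradict minimality. The only organizational difference is that the paper does a case split (either some component lies in $\phi(u^*)$, in which case shrink; or none does, in which case derive that $H-u^*$ is disconnected), whereas you use the biconnectedness of $H$ up front to force all other branch sets into a single component $C$, so that the existence of a removable $D\subseteq\phi(u^*)$ is immediate---and you are in fact more explicit than the paper about why $G[\phi(u^*)\setminus D]$ stays connected.
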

\begin{proof}
    Suppose that $G$ contains $H$ as a minor and let $\phi$ be a minimal minor model of $H$ in $G$.
    Assume towards a contradiction that $G[\phi(V(H))]$ is not biconnected.
    Let $v \in \phi(V(H))$ be a vertex such that $G[\phi(V(H)) \setminus \set v]$ is not connected.
    Let $u \in V(H)$ be the vertex such that $v \in \phi(u)$.
    Suppose that for some component $C$ of the disconnected graph $G[\phi(V(H)) \setminus \set v]$, its vertex set $V(C)$ is included in $\phi(u)$.
    Observe that any edge of $G[\phi(V(H))]$ with precisely one endpoint in $V(C)$ will have $v$ as its other endpoint, since $C$ is a component of $G[\phi(V(H)) \setminus \set v]$.
    It follows that any edge with at least one endpoint in $C$ is entirely contained in $\phi(u)$ in $G[\phi(V(H))]$.
    Now consider the minor model $\phi'$ of $H$ in $G$, defined by
    \[
        \phi'(w) = \begin{cases}
                       \phi(w) & \text{ if $w \neq u$,} \\
                       \phi(w) \setminus V(C) & \text{ if $w = u$.} \\
        \end{cases}
    \]
    Then observe that $\phi'(V(H))$ is still connected and for any edge $xy$ in $H$, there still exists an edge between $\phi'(x)$ and $\phi'(y)$ in $G[\phi'(V(H))]$, since such an edge does not contain an endpoint in $C$.
    It follows that $\phi$ is not a minimal minor model of $H$ in $G$ in this case, so no component $C$ of $G[\phi(V(H)) \setminus \set v]$ is included in $\phi(u)$.
    We now claim that $H - \set u$ is also not connected.
    To this end, we will consider the subgraph of $G[\phi(V(H))]$ where not only $v$ is removed, but the entire set $\phi(u)$.
    Observe that $G'$ still contains multiple components, since no component of $G[\phi(V(H)) \setminus \set v]$ is contained in $\phi(u)$.
    Now for each $x, y \in V(H) \setminus \set u$ for which $xy \not \in E(H)$, remove all edges between a vertex in $\phi(x)$ and a vertex in $\phi(y)$, and contract all edges between vertices in $\phi(x)$ for each $x \in V(H) \setminus \set u$.
    By definition of a minor model, we obtain the graph $H - \set u$.
    However, after contracting and removing edges in a disconnected graph, the remaining graph is still disconnected.
    Thereby $H - \set u$ is disconnected, so $H$ was not biconnected.
    We conclude that for any minimal minor model $\phi$ of $H$ in $G$, the graph $G[\phi(V(H))]$ is biconnected.
    It follows directly that $G[\phi(V(H))]$ is a subgraph (and thereby also a minor) of a biconnected component of $G$: if it is not a proper subgraph of any biconnected component of $G$, then it is not a proper subgraph of any biconnected subgraph of $G$.
    But then, by definition, $G[\phi(V(H))]$ is a biconnected component of $G$ itself.
\end{proof}

Proposition~\ref{prop:biconn} directly implies that if some graph $G$ contains a biconnected graph $H$ as a minor, then some biconnected component of $G$ contains an $H$-minor.
This provides us with a useful tool for arguing that a necklace does not contain some biconnected graph as a minor.
To this end, observe that we can characterize the biconnected components of a necklace in the following way.
Notice that we do not require the graphs in $\mathcal F$ to be biconnected here.

\begin{observation}
    \label{obs:biconn-comps-necklace}
    Let $\mathcal F$ be a collection of connected graphs and let $G$ be an $\mathcal F$-necklace.
    Then any biconnected component of $G$ with at least three vertices is a subgraph of a bead.
\end{observation}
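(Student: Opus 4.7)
The plan is to reduce the statement to the classical fact that a biconnected graph on at least three vertices contains no bridges, by first showing that every edge connecting two consecutive beads is a bridge of $G$. Once both facts are combined, no biconnected subgraph of $G$ with at least three vertices can use an inter-bead edge, so it must be confined to a single bead.

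First I would establish the bridge claim. By the definition of an $\mathcal{F}$-necklace, for each $i \in [t-1]$ there is exactly one edge $e_i$ between $S_i$ and $S_{i+1}$, and there are no edges between $S_i$ and $S_j$ when $\lvert i-j\rvert \geq 2$. Since each bead $G[S_j]$ is connected (the elements of $\mathcal F$ are connected), removing $e_i$ splits $G$ into the two disjoint subgraphs induced by $S_1 \cup \dots \cup S_i$ and $S_{i+1} \cup \dots \cup S_t$. Hence every $e_i$ is a bridge of $G$.

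Next I would invoke the standard characterisation that in a biconnected graph on at least three vertices every edge lies on a cycle. If $B$ is a biconnected component of $G$ with $\lvert V(B) \rvert \geq 3$, then each edge of $B$ lies on a cycle in $B$, and hence on a cycle in $G$, so it cannot be a bridge of $G$. Consequently no inter-bead edge $e_i$ belongs to $B$, which means every edge of $B$ lies inside some bead. Because $B$ is connected and has at least three vertices (so every vertex of $B$ is incident with some edge of $B$), walking along a spanning tree of $B$ places all vertices of $B$ within the same bead $G[S_i]$, giving $B \subseteq G[S_i]$. I do not anticipate any real obstacle; the only point of care is to interpret the word ``bridge'' at the level of $G$ rather than at the level of the bead or the biconnected component itself, after which both ingredients fit together immediately.
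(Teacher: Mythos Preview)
Your proof is correct and follows essentially the same approach as the paper, which justifies the observation in a single sentence by noting that for any pair of beads one can identify an edge lying on every path between them (i.e., the inter-bead edges are bridges). You simply spell out the same bridge argument in more detail.
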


The observation follows from the fact that for any pair of beads, we can identify an edge that is used on any path connecting those two beads.
Proposition~\ref{prop:biconn} and Observation~\ref{obs:biconn-comps-necklace} imply the following result.

\begin{lemma}
    \label{lemma:necklace-minors}
    Let $\mathcal F$ be a collection of connected graphs and let $G$ be an $\mathcal F$-necklace.
    Let $\mathcal H$ be a collection of biconnected graphs on at least three vertices.
    If no graph in $\mathcal F$ contains an $\mathcal H$-minor, then $G$ contains no $\mathcal H$-minor.
\end{lemma}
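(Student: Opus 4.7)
The plan is to argue by contradiction, chaining together Proposition~\ref{prop:biconn} and Observation~\ref{obs:biconn-comps-necklace} so that an alleged $\mathcal H$-minor of $G$ can be localized entirely inside a single bead, which will directly contradict the assumption that no graph in $\mathcal F$ contains an $\mathcal H$-minor.

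First, I would assume for contradiction that some $H \in \mathcal H$ is a minor of $G$ and fix a \emph{minimal} minor model $\phi$ of $H$ in $G$; passing to a minimal model is essential, since Proposition~\ref{prop:biconn} only gives structural information for minimal models. Because $H$ is biconnected, Proposition~\ref{prop:biconn} then yields that the induced subgraph $G[\phi(V(H))]$ is biconnected and is itself a minor of some biconnected component $B$ of $G$.

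Next, I would verify the size condition required to apply Observation~\ref{obs:biconn-comps-necklace}. Since the branch sets of $\phi$ are pairwise disjoint and nonempty, $|\phi(V(H))| \geq |V(H)| \geq 3$, and since $G[\phi(V(H))]$ is a minor of $B$, the number of vertices of $B$ is at least $3$. Observation~\ref{obs:biconn-comps-necklace} therefore guarantees that $B$ is contained in a single bead $G[S_i]$, which by definition of an $\mathcal F$-necklace is isomorphic to some graph $F \in \mathcal F$. Transitivity of the minor relation then gives that $H$ is a minor of $F$, contradicting the hypothesis.

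I do not foresee any real obstacle here: the entire argument is essentially a chained application of the two previous structural results, and the only subtlety is making sure to work with a minimal minor model and to track the ``at least three vertices'' hypothesis so that Observation~\ref{obs:biconn-comps-necklace} is applicable.
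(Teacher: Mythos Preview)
Your proposal is correct and follows essentially the same route as the paper: argue by contradiction, invoke Proposition~\ref{prop:biconn} to place the $H$-minor inside a biconnected component of $G$, use the ``at least three vertices'' hypothesis together with Observation~\ref{obs:biconn-comps-necklace} to confine that component to a single bead, and obtain a contradiction. Your version is slightly more explicit about fixing a minimal minor model and tracking vertex counts, but the underlying argument is identical.
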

\begin{proof}
    Assume towards a contradiction that $G$ contains an $H$-minor for some $H \in \mathcal H$.
    Then by Proposition~\ref{prop:biconn}, $G$ has a biconnected component with an $H$-minor.
    Since $H$ has at least three vertices, this biconnected component must have at least three vertices as well.
    By Observation~\ref{obs:biconn-comps-necklace}, such a biconnected component is a subgraph of a bead of $G$.
    We conclude that $G$ has a bead containing an $H$-minor, contradicting that no graph in $\mathcal F$ contains an $\mathcal H$-minor.
\end{proof}

Lemma~\ref{lemma:necklace-minors} argues that it suffices to hit the $\mathcal H$-minors in each bead of a necklace to ensure that a necklace has no $\mathcal H$-minors at all.
We will use these ideas when $\mathcal F$ is a collection of proper subgraphs of graphs in $\mathcal H$.
Lemma~\ref{lemma:necklace-minors} also implies the following corollary.

\begin{corollary}
    \label{corr:planar}
    Let $\mathcal F$ be a collection of planar graphs and let $G$ be an $\mathcal F$-necklace.
    Then $G$ is also planar.
\end{corollary}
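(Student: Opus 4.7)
The plan is to deduce this immediately from Lemma~\ref{lemma:necklace-minors} combined with Wagner's classical characterization of planar graphs. Recall that a graph is planar if and only if it contains neither $K_5$ nor $K_{3,3}$ as a minor. Both $K_5$ and $K_{3,3}$ are biconnected and have at least three vertices, so they satisfy the hypothesis of Lemma~\ref{lemma:necklace-minors} when used as the class $\mathcal H$.

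First I would set $\mathcal H = \set{K_5, K_{3,3}}$ and observe that, since every $H \in \mathcal F$ is planar, no graph in $\mathcal F$ contains any graph from $\mathcal H$ as a minor (otherwise $H$ would itself be nonplanar by minor-closure of planarity). Then I would apply Lemma~\ref{lemma:necklace-minors} to the $\mathcal F$-necklace $G$ with this choice of $\mathcal H$, which yields that $G$ contains neither $K_5$ nor $K_{3,3}$ as a minor. Invoking Wagner's theorem in the reverse direction then gives that $G$ is planar, completing the argument.

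There is essentially no obstacle here: the corollary is a direct specialization of Lemma~\ref{lemma:necklace-minors} to the forbidden-minor characterization of planarity. The only minor point worth stating explicitly in the write-up is why planarity of each bead forces $H$-minor-freeness for $H \in \set{K_5, K_{3,3}}$, which is immediate from the fact that planarity is closed under taking minors.
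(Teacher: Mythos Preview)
Your proposal is correct and follows essentially the same argument as the paper: set $\mathcal H = \{K_5, K_{3,3}\}$, note these are biconnected on at least three vertices, and apply Lemma~\ref{lemma:necklace-minors} together with the forbidden-minor characterization of planarity. The only cosmetic difference is that the paper attributes the characterization to Kuratowski while you (more precisely, for the minor version) cite Wagner.
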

\begin{proof}
    By Kuratowski's celebrated theorem, a graph is planar if and only if it contains no $K_5$ and no $K_{3,3}$ minor.
    Observe that the set $\mathcal H = \set{K_5, K_{3,3}}$ is a set of biconnected graphs on at least three vertices.
    Since no graph in $\mathcal F$ contains an $\mathcal H$-minor by Kuratowski's theorem, also $G$ contains no $\mathcal H$-minor by Lemma~\ref{lemma:necklace-minors}.
    We conclude that $G$ is planar by Kuratowski's theorem.
\end{proof}

\subsection{Proof setup}
We repeat the lower bound we aim to prove here as stated in the introduction.

\begin{customthm}{2}
    Let $\mathcal G$ be a minor-closed family of graphs and let $\mathcal F$ be a finite set of biconnected planar graphs on at least three vertices.
    If $\mathcal G$ has unbounded elimination distance to an $\mathcal F$-minor free graph, then $\mathcal F$-\prob{Minor Free Deletion} does not admit a polynomial kernel in the size of a $\mathcal G$-modulator, unless $\NP \subseteq \coNP / \poly$.
\end{customthm}

Notice that when we consider $\mathcal F$-\prob{Minor Free Deletion} for some collection of graphs $\mathcal F$, we can assume that there do not exist distinct graphs $G, H \in \mathcal F$ such that $G$ is a minor of $H$.
Our proof consists of two parts.
We first derive the following property, where we say that a set contains arbitrarily long necklaces if there does not exist a constant $c$ such that each necklace in the set has length at most $c$.

\begin{lemma}
    \label{lemma:unbounded-ed-implies-unbounded-necklaces}
    Let $\mathcal F$ be a finite collection of connected planar graphs.
    Any minor-closed graph family $\mathcal G$ with unbounded elimination distance to an $\mathcal F$-minor free graph contains arbitrarily long uniform $\mathcal F$-necklaces.
\end{lemma}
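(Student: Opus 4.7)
The plan is to reduce the claim to a direct structural statement and then prove that statement by induction on necklace length. By Proposition~\ref{prop:always-uniform-necklaces}, it suffices to show that $\mathcal G$ contains arbitrarily long (not necessarily uniform) $\mathcal F$-necklaces. Since $\mathcal G$ is minor-closed, this reduces further to exhibiting, for every $t$, some graph $G \in \mathcal G$ that contains an $\mathcal F$-necklace of length $t$ as a minor.

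To produce such graphs, I would prove by induction on $t$ that there exists a function $f \colon \mathbb{N} \to \mathbb{N}$ such that every connected graph $G$ with $\ed_{\neg \mathcal F}(G) \geq f(t)$ contains an $\mathcal F$-necklace of length $t$ as a minor. Combined with the hypothesis that $\mathcal G$ has unbounded $\ed_{\neg \mathcal F}$, this delivers the required necklaces. The base case $t = 1$ is immediate: if $\ed_{\neg \mathcal F}(G) \geq 1$, then $G$ is not $\mathcal F$-minor free, so it contains some $H \in \mathcal F$ as a minor, which is itself a one-bead $\mathcal F$-necklace.

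For the inductive step, I would fix an optimal $\neg \mathcal F$-elimination forest of $G$ whose root has singleton bag $\{r\}$, so that the components $C_1, \dots, C_m$ of $G - r$ satisfy $\ed_{\neg \mathcal F}(G) = 1 + \max_j \ed_{\neg \mathcal F}(C_j)$. By choosing $f(t+1)$ sufficiently larger than $f(t)$, the deepest component, say $C_1$, satisfies $\ed_{\neg \mathcal F}(C_1) \geq f(t)$ and therefore, by induction, contains an $\mathcal F$-necklace $N$ of length $t$ as a minor. To extend $N$ to length $t + 1$, I would route a new bead through $r$: if some sibling component $C_j$ with $j \neq 1$ contains an $\mathcal F$-minor $M$, I use $M$ as a new bead and attach it to $N$ by absorbing $r$ into the first bead of $N$ (merging the two edges at $r$ into a single inter-bead connector); otherwise every sibling is $\mathcal F$-minor free, the full elimination depth concentrates inside $C_1$, and I iterate the argument one level deeper inside $C_1$ before performing the attachment.

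The main obstacle I foresee is the attachment step: a length-$t$ necklace supplied by the inductive hypothesis need not have its first bead adjacent to $r$, so absorbing $r$ into that bead without disturbing the necklace structure is not automatic. I expect to handle this by strengthening the inductive hypothesis to produce an \emph{anchored} necklace, whose first bead's minor model can be enlarged along a connected path in its host component to include any designated ``exit'' vertex adjacent to the deleted root. Propagating this anchoring condition through the recursion then bounds $f(t+1)$ by a modest function of $f(t)$ and closes the induction, yielding $\mathcal F$-necklaces of every length inside $\mathcal G$, to which Proposition~\ref{prop:always-uniform-necklaces} may finally be applied.
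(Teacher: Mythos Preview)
Your reduction to Proposition~\ref{prop:always-uniform-necklaces} and the overall target (show that large $\ed_{\neg\mathcal F}$ forces long necklace minors) are correct, but the inductive mechanism you propose has a genuine gap that is not the anchoring issue you anticipate. The real problem is the existence of the extra bead. Consider $\mathcal F=\{K_3\}$ and $G=K_{d+2}$. Then $\ed_{\neg\mathcal F}(G)=d$, and every optimal elimination tree is a single path: removing any vertex from a clique leaves a smaller clique, so at every level there is exactly one component and hence no sibling at all. Your Case~A never occurs, and ``iterate one level deeper'' simply walks down the path until the remaining component is $K_2$, without ever producing a second bead. Yet $K_{d+2}$ does contain $\{K_3\}$-necklaces of length roughly $d/3$ as minors, so the statement is true while your construction fails to witness it. The same obstruction arises more generally whenever $G-r$ is connected at every level of the spine, or whenever all side components along the deepest branch happen to be $\mathcal F$-minor free; nothing in your setup bounds how long this situation persists.

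The paper avoids this by proving the contrapositive: bounded $\nm_{\mathcal F}$ implies bounded $\ed_{\neg\mathcal F}$. It first uses planarity of $\mathcal F$ together with the Excluded Grid Theorem to show that a graph with $\nm_{\mathcal F}(G)=t$ has treewidth polynomial in $t$ (Lemma~\ref{lemma:tw-bound-necklaces}), and then uses the fact that any two maximum-length necklace minors in a connected graph must intersect (Proposition~\ref{prop:longest-necklace-unique}) to locate a single bag of a tree decomposition whose removal strictly decreases $\nm_{\mathcal F}$ (Lemma~\ref{lemma:hitting-necklaces-with-tw-bound}). Iterating gives a bound $\ed_{\neg\mathcal F}(G)\le g(\nm_{\mathcal F}(G))$, from which the lemma follows immediately. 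If you want to salvage a direct inductive construction, you would need a replacement for the sibling-bead step that works in graphs like $K_n$; that essentially forces you to argue about treewidth or packing of $\mathcal F$-minors, which is what the paper's route provides.
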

Then we will prove the following lemma by giving a reduction from \prob{CNF Satisfiability} parameterized by the number of variables~\cite{dellmelkebeek2010}.
\begin{lemma}
    \label{lemma:unbounded-necklaces-implies-geen-kernel}
    Let $\mathcal F$ be a finite set of biconnected planar graphs on at least three vertices and let $\mathcal G$ be a minor-closed graph family.
    If $\mathcal G$ contains arbitrarily long uniform $\mathcal F$-necklaces, then $\mathcal F$-\prob{Minor Free Deletion} does not admit a polynomial kernel in the size of a $\mathcal G$-modulator, unless $\NP \subseteq \coNP / \poly$.
\end{lemma}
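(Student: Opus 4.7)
The plan is to prove Lemma~\ref{lemma:unbounded-necklaces-implies-geen-kernel} via an OR-cross-composition from \prob{CNF Satisfiability} parameterized by the number of variables~\cite{dellmelkebeek2010}, which under the assumption $\NP \not\subseteq \coNP/\poly$ rules out a polynomial kernel parameterized by modulator size. By Proposition~\ref{prop:always-uniform-necklaces}, I fix a graph $H \in \mathcal F$ and vertices $x, y \in V(H)$ such that every uniform $\mathcal F$-necklace with structure $(H, x, y)$ lies in $\mathcal G$; all non-modulator parts of the constructed graph will consist of such necklaces, so $G - X \in \mathcal G$ automatically.

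Given $t$ equivalent instances $\phi_1, \ldots, \phi_t$ of CNF-SAT sharing the same $n$ variables (achievable via a standard polynomial equivalence relation), the goal is to produce a single $\mathcal F$-\prob{Minor Free Deletion} instance $(G, \ell)$ with $\mathcal G$-modulator $X$ of size $\mathrm{poly}(n + \log t)$ whose answer is YES iff some $\phi_i$ is satisfiable. The core mechanism exploits the following budget argument: since every $H \in \mathcal F$ is biconnected and has at least three vertices, Proposition~\ref{prop:biconn} combined with Observation~\ref{obs:biconn-comps-necklace} implies that any $\mathcal F$-minor-free deletion set must delete at least one vertex per bead of a uniform $(H, x, y)$-necklace, and this bound is tight. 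Within each bead the solver has a free choice of which vertex to delete, and deleting the $x$-endpoint versus the $y$-endpoint versus an internal vertex has different effects on which connectors remain attached to neighboring beads or to modulator vertices. I use this per-bead choice as the encoding primitive.

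The construction assembles three families of gadgets built from uniform $(H, x, y)$-necklace segments attached to modulator vertices: a \emph{selection gadget} whose $\lceil \log t \rceil$ bead-choices pick an instance index $i^\star$; \emph{variable gadgets}, one per variable, whose bead-choices encode a truth assignment; and \emph{clause gadgets} (sharing structure with the selection gadget to activate only for the selected instance) that contain an unavoidable additional $H$-minor unless at least one literal of the corresponding clause is satisfied by the encoded assignment. The budget $\ell$ is set to exactly the mandatory per-bead cost plus a tight additive overhead, so that deletion choices must encode a satisfying assignment for some $\phi_i$. The main obstacle will be designing the clause gadgets uniformly in $(H, x, y)$: since $H$ is an arbitrary biconnected planar graph on at least three vertices, the gadgets cannot use $H$-specific structure beyond the two distinguished attachment vertices. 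I expect Corollary~\ref{corr:planar} and Lemma~\ref{lemma:necklace-minors} to be essential here: the former keeps the necklace part planar, and the latter confines any $\mathcal F$-minor of $G - X$ to within a single bead, making the per-bead budgeting argument robust against accidental minors created by combining gadgets. Once the gadgets are in place, verifying $|X| = \mathrm{poly}(n + \log t)$, that $G - X$ lies in $\mathcal G$, and the YES-iff-OR equivalence is routine.
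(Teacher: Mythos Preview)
The paper takes a simpler route than you propose: it gives a \emph{single-instance} polynomial-time reduction (Lemma~\ref{lemma:fvsreduction}) from one CNF formula on $n$ variables to an $\mathcal F$-\prob{Minor Free Deletion} instance with a $\mathcal G$-modulator of size $\bigO_{\mathcal F}(n)$, and then directly invokes the Dell--van~Melkebeek incompressibility of CNF-SAT in the number of variables (Theorem~\ref{thm:cnfsat}). No OR-composition over $t$ instances, no selection gadget, no $\log t$ term is needed. Your cross-composition plan is not wrong in principle, but it adds a layer (instance selection and clause activation) that the paper bypasses entirely by using the stronger incompressibility statement instead of the weaker ``no polynomial kernel from NP-hard OR-composition'' framework.

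The more serious issue is that your proposal leaves out precisely the part where the work lies: the gadget constructions for an \emph{arbitrary} biconnected $H$. Saying that within each bead one may delete the $x$-endpoint, the $y$-endpoint, or an internal vertex does not by itself force a \emph{consistent} choice across all beads representing the same variable, nor does it explain how a clause gadget becomes satisfiable exactly when some literal is true. In the paper these two points are handled by nontrivial dedicated constructions: the variable gadget $J_{H,v,x,y}$ (Definition~\ref{def:JHv}) has the rigidity property that every minimum-size $H$-deletion set is \emph{exactly} one of two designated $(c-1)$-sets (Lemmas~\ref{lemma:JHv-no-minor-then-X-or-Y} and~\ref{lemma:JHv-X-or-Y-then-no-minor}), which is what pins down a truth value; and the clause gadget threads an $H$ with one edge removed (the set $S_0$) through the necklace so that an $H$-minor reappears unless some bead is severed towards the neighbouring bead rather than towards the variable gadget. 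Your sketch gives no indication of how to obtain analogous rigidity, and in a cross-composition you would additionally need it for the selection mechanism. Until those gadgets are specified and their exactness properties proved, the argument is incomplete.
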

Lemma~\ref{lemma:unbounded-ed-implies-unbounded-necklaces} and Lemma~\ref{lemma:unbounded-necklaces-implies-geen-kernel} together directly imply Theorem~\ref{thm:main-lower-bound}.

\begin{proof}[Proof of Theorem~\ref{thm:main-lower-bound}]
    By Lemma~\ref{lemma:unbounded-ed-implies-unbounded-necklaces}, graph family $\mathcal G$ contains arbitrarily long uniform $\mathcal F$-necklaces.
    We can therefore apply Lemma~\ref{lemma:unbounded-necklaces-implies-geen-kernel} to conclude that $\mathcal F$-\prob{Minor Free Deletion} does not admit a polynomial kernel in the size of a $\mathcal G$-modulator, assuming $\NP \not \subseteq \coNP / \poly$.
\end{proof}

\subsection{Characterizing graph families with unbounded elimination distance to an $\mathcal F$-minor free graph}

We will prove Lemma~\ref{lemma:unbounded-ed-implies-unbounded-necklaces} here.
Our proof follows the proof by Bougeret et al.\ when they characterize graph families with unbounded bridge-depth~\cite{bridgedepth}.
Similar to their work, we define $\nm_{\mathcal F}(G)$ to be the length of the longest $\mathcal F$-necklace that a graph $G$ contains as a minor for a family of connected graphs $\mathcal F$. Our goal is now to prove the existence of a small set $X$ such that $\nm_{\mathcal F}(G - X) < \nm_{\mathcal F}(G)$ as described in Lemma~\ref{lemma:bounding-function}.

\begin{lemma}\label{lemma:bounding-function}
    Let $\mathcal F$ be a collection of connected planar graphs.
    Then there exists a polynomial function $f_{\mathcal F} \colon \naturals \to \naturals$ such that for any connected graph $G$ with $\nm_{\mathcal F}(G) = t$, there exists a set $X \subseteq V(G)$ with $\abs X \le f_{\mathcal F}(t)$ such that $\nm_{\mathcal F}(G - X) < t$.
\end{lemma}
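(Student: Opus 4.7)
The plan is to establish this lemma via two ingredients: the Erd\H{o}s--P\'{o}sa property for planar minors, together with a chaining argument exploiting the connectedness of $G$, mirroring the strategy of Bougeret, Jansen, and Sau for the analogous bridge-depth statement. The first step is to reduce to hitting \emph{uniform} $\mathcal F$-necklaces: any length-$t$ $\mathcal F$-necklace minor yields, via a pigeonhole argument as in the proof of Proposition~\ref{prop:always-uniform-necklaces}, a uniform length-$t'$ $\mathcal F$-necklace minor with some structure $(H, x, y)$, where $t' = \Omega_{\mathcal F}(t)$ and the number of possible structures $(H, x, y)$ is bounded by a constant $c_{\mathcal F}$ depending only on $\mathcal F$. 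Hence, to ensure $\nm_{\mathcal F}(G - X) < t$, it suffices to produce, for each of the $c_{\mathcal F}$ structures, a hitting set for all minor models of the length-$t'$ uniform necklace $N_{H,x,y,t'}$, and to union them.

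Next, for each uniform structure $(H, x, y)$, I would bound the maximum number $k$ of pairwise vertex-disjoint minor models $M_1, \ldots, M_k$ of $N_{H,x,y,t'}$ in $G$, aiming at $k = O_{\mathcal F}(1)$. Suppose towards contradiction that $k$ were larger than this bound. Because $G$ is connected, one can iteratively find a path in $G$ from $M_i$ to $M_j$ and absorb its internal vertices into an endpoint bead of $M_j$, thereby extending $M_i$ by one additional bead corresponding to a bead of $M_j$; by including further beads of $M_j$ one creates a necklace of length $t' + \Omega(t')$. Iterating this over all $k$ models should yield a single $\mathcal F$-necklace minor of length $\Omega(k \cdot t')$, contradicting $\nm_{\mathcal F}(G) = t$.

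Finally, since $N_{H,x,y,t'}$ is planar by Corollary~\ref{corr:planar}, applying the Erd\H{o}s--P\'{o}sa theorem for planar minors together with the constant packing bound above yields a hitting set of size polynomial in $t$ (concretely $O_{\mathcal F}(t)$, since the Erd\H{o}s--P\'{o}sa constant for a planar graph $H$ depends polynomially on $\abs{V(H)}$, which here is $O_{\mathcal F}(t')$) for all minor models of $N_{H,x,y,t'}$ in $G$. Unioning these hitting sets over the $c_{\mathcal F}$ structures produces the required $X$ with $\abs X \le f_{\mathcal F}(t)$ for some polynomial $f_{\mathcal F}$.

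The main obstacle is the chaining argument: given many pairwise vertex-disjoint length-$t'$ necklace minors, naively taking shortest paths between them can fail, because such a path may intersect other minor models or re-enter the models it connects. One clean way to circumvent this is to contract each $M_i$ to a single vertex in $G$, extract a spanning subtree of the resulting connected quotient, and traverse it in a path-like fashion (e.g., an Eulerian tour of its doubled edges) to chain the $M_i$'s, while carefully enforcing that the absorbed path vertices can be assigned to individual beads consistent with the necklace edge constraints. Verifying that the combined structure is a genuine $\mathcal F$-necklace minor --- in particular that one truly gains $\Omega(t')$ beads per chained model rather than losing length at each junction --- is the key technical step that the rest of the proof hinges on.
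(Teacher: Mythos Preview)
Your plan has a genuine gap: the reduction in Step~1 to uniform length-$t'$ necklaces is too lossy, and as a consequence the packing bound you aim for in Step~2 is false, not merely hard to prove. Take any $\mathcal F$ for which the pigeonhole forces $t' < t$; for instance $\mathcal F = \{K_3, C_4\}$, where an alternating necklace shows $t' \le t/2$. Let $G$ consist of $k \ge 3$ pairwise disjoint length-$(t-2)$ uniform $K_3$-necklaces $M_1,\dots,M_k$ together with one extra vertex $v$ adjacent to a vertex in the middle bead of each $M_i$. The only cycles in $G$ are triangles, so every $\mathcal F$-necklace minor is a $\{K_3\}$-necklace minor; and since $v$ is the unique connection between distinct $M_i$ and can lie in at most one super-bead (which then has at most two necklace-neighbours), at most three of the $M_i$ can contribute super-beads to any necklace minor. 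A short check gives $\nm_{\mathcal F}(G)=t$ for every $k \ge 3$. Yet each $M_i$ contains a length-$t'$ uniform $K_3$-necklace, so both the packing number and the minimum hitting number for length-$t'$ uniform $K_3$-necklaces in $G$ are at least $k$, which is unbounded in $t$. Thus your chaining claim that $k$ disjoint length-$t'$ models yield a necklace of length $\Omega(k\cdot t')$ is wrong, and more fundamentally, hitting all length-$t'$ uniform necklaces can be arbitrarily harder than what the lemma actually requires --- in the example above, $\{v\}$ alone already hits every length-$t$ necklace minor.

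The paper avoids this by never dropping below the maximal length. It first bounds $\tw(G)$ polynomially in $t$ via the Excluded Grid Theorem (a large enough grid would contain a length-$(t+1)$ $\mathcal F$-necklace minor, as $\mathcal F$-necklaces are planar). It then proves the chaining statement only once and only at the maximal length: any two minor models of length-$\nm_{\mathcal F}(G)$ necklaces in a connected graph must intersect, so the packing number at length $t$ is exactly $1$. Finally it walks down a tree decomposition of width $\mathrm{poly}(t)$ to locate a single bag whose removal destroys every length-$t$ necklace minor. The essential point is that the packing-number-$1$ property holds at length $t$ but fails at length $t' < t$, and your uniformisation step forces you into the latter regime.
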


Bougeret et al.\ showed that one can derive a bounding function when the considered structures satisfy the Erd\H{o}s-P\'osa property~\cite{bridgedepth}.
This also is the case for $\mathcal F$-necklaces when the graphs in $\mathcal F$ are connected and planar, so this approach would be suitable for our purposes as well.
To derive a polynomial bound on the size of $X$, we use a different argument that uses treewidth and grid minors.
We start with the following property of planar graphs.

\begin{proposition}\label{prop:planar-minor-of-grid}
    Any planar graph $G$ on $n$ vertices is a minor of the $4n \times 4n$ grid.
\end{proposition}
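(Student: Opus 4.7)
The plan is to realize $G$ as a minor of the $4n \times 4n$ grid by converting a visibility representation of $G$ into a minor model. Recall that a weak visibility representation of a planar graph assigns to each vertex a horizontal line segment and to each edge a vertical line segment whose endpoints lie on the horizontal segments of the edge's two endpoints, in such a way that no two segments intersect except at these prescribed incidences. By classical results of Rosenstiehl--Tarjan and Tamassia--Tollis, every planar graph on $n$ vertices admits such a representation with all endpoints at integer coordinates fitting inside a grid of dimensions at most $(2n-5) \times (n-1)$; if $G$ is not biconnected, one can first triangulate it by adding dummy edges and discard those edges at the end, which only strengthens minor containment. I would begin by invoking this representation and then scaling it by a factor of $2$ in each direction, so it fits inside a $4n \times 4n$ integer grid with a row/column of slack between adjacent features.

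Given such a scaled representation, I would define the minor model $\phi$ on the grid as follows. For each vertex $v \in V(G)$ let $\phi(v)$ consist of the grid vertices on $v$'s horizontal segment, together with, for every edge $e$ incident to $v$, the half of the vertical segment of $e$ closer to $v$'s row (including the grid vertex where the segment meets $v$'s horizontal segment). I would then verify the three defining properties of a minor model. Connectivity of $G[\phi(v)]$ is immediate because the horizontal segment is a grid path and each partial vertical segment is glued to it at a shared endpoint. Pairwise disjointness follows from three facts: distinct horizontal segments are disjoint in the visibility representation, distinct vertical segments use distinct $x$-coordinates, and each vertical segment is split between its two endpoints. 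Finally, for every edge $uv \in E(G)$ the two halves of the vertical segment of $uv$ lie in $\phi(u)$ and $\phi(v)$ and are joined by a single grid edge at the split point, giving the required minor-model edge.

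The main obstacle is the bookkeeping at the points where several vertical edges attach to a single horizontal segment: the split points on each vertical segment must be chosen so that branch sets remain disjoint while each partial vertical segment is still nonempty. The factor-$2$ scaling is used precisely to guarantee that every vertical segment has at least one interior grid cell on each side of its midpoint, so a uniform rule such as \emph{``assign each grid cell of a vertical segment to whichever endpoint's row it is strictly closer to, breaking ties toward the lower endpoint''} is well defined and produces disjoint branch sets. Once this is checked, $\phi$ is a valid minor model witnessing that $G$ is a minor of the $4n \times 4n$ grid, which establishes the proposition.
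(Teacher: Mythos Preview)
Your approach is correct and reaches the same conclusion as the paper, but via a genuinely different graph-drawing primitive. The paper first replaces every vertex of degree more than~$4$ by a short path to obtain a planar graph $G'$ of maximum degree~$4$ with $|V(G')| \le 4n$ (using that a planar graph has at most $3n$ edges), and then invokes the classical result that any $m$-vertex planar graph of maximum degree~$4$ admits an orthogonal drawing on an $m \times m$ grid with edges realized as non-crossing grid paths; this directly exhibits $G'$, and hence its minor $G$, as a minor of the $4n \times 4n$ grid. Your route through visibility representations avoids the degree-reduction step entirely and yields the minor model more explicitly, at the cost of a slightly more delicate disjointness check for the branch sets. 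Both arguments ultimately outsource the hard work to linear-area grid-drawing theorems, so neither is more elementary; the paper's version is marginally shorter because the orthogonal-drawing theorem already hands you the minor model, whereas you must assemble it from the horizontal and vertical segments by hand.
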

\begin{proof}
    We will first modify $G$ by splitting vertices of large degree.
    We do this to ensure that each vertex in the modified graph has degree at most 4.
    For each vertex $u \in V(G)$ with degree $d(u) > 4$, we replace $u$ by a path of $\lfloor d(u) / 2 \rfloor$ vertices.
    Then we connect each neighbor of $u$ to a vertex on this path, such that the maximum degree of each vertex on this path is 4.
    Observe that this is possible since both endpoints of the path can be connected to three neighbors, while the other vertices can be connected to two neighbors.
    Furthermore, notice that we can execute this procedure while ensuring that the modified graph is planar as well.
    We let $G'$ denote the modified graph where each vertex has degree at most 4; observe that $G$ is a minor of this graph.

    As a planar graph on $n$ vertices has at most $3n$ edges, the total degree of $G$ is bounded by $6n$.
    We can use this to bound the number of vertices in $G'$, since
    \[
        \abs{V(G')} \le \sum_{\substack{u \in V(G):\\ d(u) \le 4}} 1 + \sum_{\substack{u \in V(G):\\ d(u) > 4}} \lfloor d(u) / 2 \rfloor \le n + \sum_{u \in V(G)} d(u) / 2 \le 4n.
    \]
    Any planar graph on $n$ vertices that each have degree at most 4 can be drawn on an $n \times n$ grid such that the edges are non-intersecting grid paths~\cite{ntimesngrid}.
    It follows that there exists such an embedding on a grid with side length $4n$ for $G'$.
    Notice that $G'$ is also a minor of this grid, so we conclude that also $G$ is a minor.
\end{proof}

Together with the Excluded Grid Theorem, this proposition leads to the following treewidth bound.

\begin{lemma}\label{lemma:tw-bound-necklaces}
    Let $\mathcal F$ be a collection of connected planar graphs of at most $n$ vertices each.
    There exists a polynomial $f \colon \naturals \to \naturals$ with $f(g) = \bigO(g^{19} \textnormal{poly} \log g)$ such that for any graph $G$ with $\nm_{\mathcal F}(G) = t$, it holds that $\tw(G) < f(4n(t+1))$.
\end{lemma}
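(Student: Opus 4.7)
The plan is to prove the lemma by contradiction, using the polynomial form of the Excluded Grid Theorem together with Proposition~\ref{prop:planar-minor-of-grid} and Corollary~\ref{corr:planar}. Specifically, I would take $f$ to be the polynomial function guaranteed by the current-best polynomial Excluded Grid Theorem (as given by Chuzhoy and Tan), so that any graph of treewidth at least $f(g)$ contains the $(g \times g)$-grid as a minor, with $f(g) = \bigO(g^{19} \textnormal{poly} \log g)$. This is precisely the polynomial claimed in the statement of the lemma, so no further shape analysis of $f$ is needed.

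Suppose towards a contradiction that $G$ satisfies $\nm_{\mathcal F}(G) = t$ and $\tw(G) \ge f(4n(t+1))$. Then by the Excluded Grid Theorem, $G$ contains the $(4n(t+1)) \times (4n(t+1))$-grid as a minor. Next, I would fix any graph $H \in \mathcal F$ (we may assume $\mathcal F \neq \emptyset$, since otherwise no $\mathcal F$-necklace exists in any graph and the claim is vacuous) and construct the uniform $\{H\}$-necklace $N$ of length $t+1$ as in Definition~\ref{def:uniform-necklace}, by joining $t+1$ disjoint copies of $H$ with single bridging edges in a path-like fashion. Because every graph in $\mathcal F$ is planar by hypothesis and $N$ is an $\mathcal F$-necklace, Corollary~\ref{corr:planar} implies that $N$ itself is planar; moreover $\abs{V(N)} = (t+1)\abs{V(H)} \le n(t+1)$.

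By Proposition~\ref{prop:planar-minor-of-grid} applied to $N$, the necklace $N$ is a minor of the $(4n(t+1)) \times (4n(t+1))$-grid. Since the minor relation is transitive, it then follows that $G$ contains $N$ as a minor, so $\nm_{\mathcal F}(G) \ge t+1$, contradicting the assumption that $\nm_{\mathcal F}(G) = t$. This yields $\tw(G) < f(4n(t+1))$ as required.

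The main obstacle is purely a matter of citation: invoking the right version of the Excluded Grid Theorem to obtain the claimed bound $f(g) = \bigO(g^{19} \textnormal{poly} \log g)$. All remaining pieces are direct applications of results already established in the excerpt, with the one small thing to double-check being that our concrete construction of $N$ genuinely satisfies the definition of an $\mathcal F$-necklace, which is immediate because each bead is isomorphic to $H \in \mathcal F$ and consecutive beads are joined by exactly one edge.
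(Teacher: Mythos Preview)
Your proposal is correct and matches the paper's proof essentially line for line: both invoke the polynomial Excluded Grid Theorem with $f(g) = \bigO(g^{19}\,\textnormal{poly}\log g)$, assume $\tw(G) \ge f(4n(t+1))$ to obtain a large grid minor, note via Corollary~\ref{corr:planar} and the vertex bound $n(t+1)$ that an $\mathcal F$-necklace of length $t+1$ is planar and hence (by Proposition~\ref{prop:planar-minor-of-grid}) a minor of that grid, and derive the contradiction $\nm_{\mathcal F}(G) \ge t+1$. The only cosmetic difference is that you explicitly fix a single $H \in \mathcal F$ to build a uniform necklace, whereas the paper speaks of ``an $\mathcal F$-necklace of length $t+1$'' generically.
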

\begin{proof}
    Let $f$ be the function guaranteed by the Excluded Grid Theorem such that any graph with treewidth at least $f(g)$ contains the ($g \times g$)-grid as a minor~\cite{improvedgridtheorem}.
    Observe that $f(g) = \bigO(g^{19} \textnormal{poly} \log g)$.
    Now suppose towards a contradiction that the treewidth of $G$ is at least $f(4n(t+1))$.
    Then by the Excluded Grid Theorem, $G$ contains the grid with side length $4(n(t+1))$ as a minor.
    Now observe that an $\mathcal F$-necklace of length $t+1$ has at most $n(t+1)$ vertices when each graph in $\mathcal F$ has at most $n$ vertices and observe that, by Corollary~\ref{corr:planar}, $\mathcal F$-necklaces are planar when the graphs in $\mathcal F$ are planar.
    By Proposition~\ref{prop:planar-minor-of-grid}, such an $\mathcal F$-necklace is therefore a minor of the grid with side length $4(n(t+1))$.
    As $G$ contains this grid as a minor, $G$ also contains an $\mathcal F$-necklace of length $t+1$ as a minor, contradicting that $\nm_{\mathcal F}(G) = t$.
    We conclude that $\tw(G) < f(4n(t+1))$.
\end{proof}

To use this treewidth bound, we need a property similar to~\cite[Lemma 4.6]{bridgedepth}.

\begin{figure}
    \begin{subfigure}{0.45\textwidth}
        \center
        \includegraphics[width=0.75\textwidth]{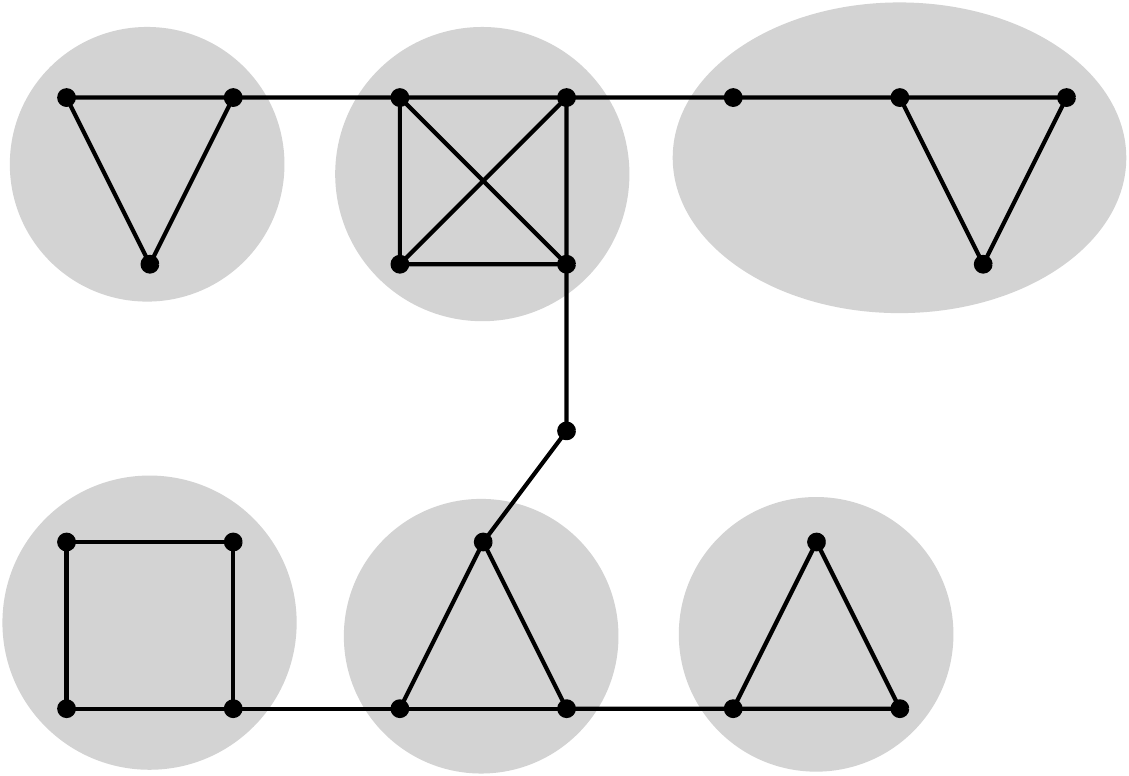}
        \caption{Two minor models of necklaces of length three.}\label{fig:neckl1}
    \end{subfigure}
    \hfill
    \begin{subfigure}{0.45\textwidth}
        \center
        \includegraphics[width=0.75\textwidth]{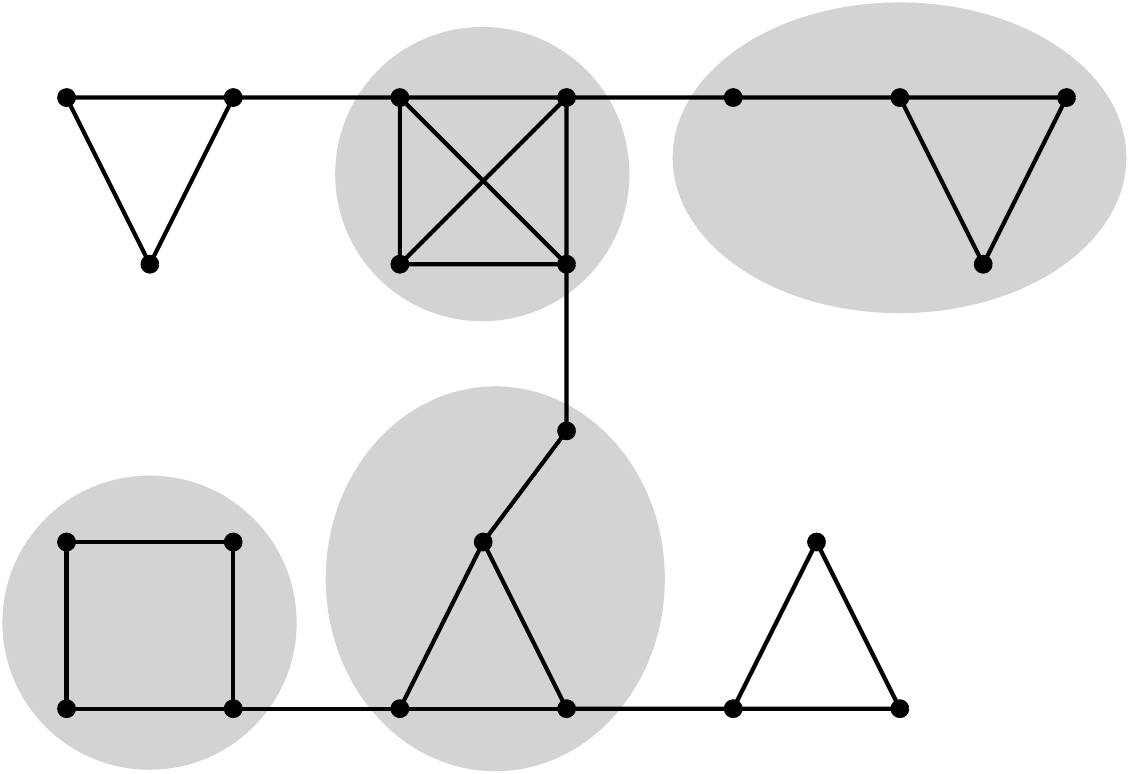}
        \caption{A minor model of a necklace of length four.}
        \label{fig:neck2}
    \end{subfigure}
    \caption{Figure~\ref{fig:neckl1} displays a connected graph where two disjoint minor models of a $\set{K_3}$-necklace of length 3 are highlighted. Figure~\ref{fig:neck2} indicates how these necklaces can be combined into a necklace of length 4.}\label{fig:mergenecklaces}
\end{figure}

\begin{proposition}\label{prop:longest-necklace-unique}
    For any family of connected graphs $\mathcal F$ and connected graph $G$ with $\nm_{\mathcal F}(G) > 0$, any pair of minor models of $\mathcal F$-necklaces of length $\nm_{\mathcal F}(G)$ in $G$ must intersect.
\end{proposition}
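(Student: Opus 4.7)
The plan is to argue by contradiction: suppose there exist two disjoint minor models $\phi_1, \phi_2$ of $\mathcal F$-necklaces $N_1, N_2$ of length $t := \nm_{\mathcal F}(G)$ in $G$, and I would construct an $\mathcal F$-necklace of length at least $t+1$ as a minor of $G$, contradicting the definition of $\nm_{\mathcal F}(G)$. This follows the high-level intuition illustrated in Figure~\ref{fig:mergenecklaces}: two long enough ``halves'' of the two necklaces, joined via a path in $G$, form a single longer necklace.

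Since $G$ is connected and the sets $U_1 := \phi_1(V(N_1))$ and $U_2 := \phi_2(V(N_2))$ are disjoint and non-empty, I would first take a shortest path $P = (u = p_0, p_1, \ldots, p_k = v)$ in $G$ from $U_1$ to $U_2$. Minimality of $P$ guarantees that $\{p_1, \ldots, p_{k-1}\}$ is disjoint from $U_1 \cup U_2$. Let $H_i^1$ be the bead of $N_1$ with $u \in \phi_1(V(H_i^1))$, and let $H_j^2$ be the bead of $N_2$ with $v \in \phi_2(V(H_j^2))$. Of the two sub-necklaces of $N_1$ having $H_i^1$ as an endpoint, namely the prefix of length $i$ and the suffix of length $t-i+1$, I pick the longer one, which has at least $\lceil (t+1)/2 \rceil$ beads; I do the same for $N_2$. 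The two chosen halves together have at least $2\lceil (t+1)/2 \rceil \ge t+1$ beads.

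Next I would build a new $\mathcal F$-necklace $N'$ abstractly by concatenating the chosen half of $N_1$ (with $H_i^1$ as its last bead, reversing if necessary) and the chosen half of $N_2$ (with $H_j^2$ as its first bead), adding a single fresh junction edge between the vertex $w \in V(H_i^1)$ satisfying $u \in \phi_1(w)$ and the vertex $w' \in V(H_j^2)$ satisfying $v \in \phi_2(w')$. Each bead of $N'$ is still a graph in $\mathcal F$, consecutive beads are joined by exactly one edge, and non-consecutive beads are unjoined, so $N'$ is indeed an $\mathcal F$-necklace of length at least $t+1$. A minor model $\phi'$ of $N'$ in $G$ is obtained by letting $\phi'$ agree with $\phi_1$ on the $N_1$-side and with $\phi_2$ on the $N_2$-side, except that the bag at $w$ is enlarged to $\phi'(w) := \phi_1(w) \cup \{p_1, \ldots, p_{k-1}\}$, absorbing the interior of $P$. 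This enlarged bag remains connected (since $P$ attaches at $u \in \phi_1(w)$ and the $p_\ell$ can be added one by one) and remains disjoint from every other bag (since the interior of $P$ avoids $U_1 \cup U_2$); the junction edge $ww'$ of $N'$ is realised by the edge $p_{k-1} p_k \in E(G)$.

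The main obstacle is bookkeeping: one must check that every requirement of a minor model is preserved after enlarging $\phi'(w)$, and that the abstractly defined $N'$ satisfies the three conditions in the definition of an $\mathcal F$-necklace. Both are routine once the setup is fixed, since we are free to define $N'$ as an abstract graph (so unwanted inter-bag edges of $G$ are simply not included in $N'$, and still $G$ contains $N'$ as a minor via $\phi'$). The inequality $2\lceil(t+1)/2\rceil \ge t+1$ then yields the desired contradiction.
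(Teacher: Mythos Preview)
Your proposal is correct and follows essentially the same approach as the paper: assume disjointness, connect the two necklace images by a shortest path in $G$, and splice the two longer halves together along that path to obtain an $\mathcal F$-necklace of length at least $t+1$ as a minor. The only cosmetic difference is that the paper coarsens each bead to a single set $S_i$ (with $G[S_i]$ connected and containing an $\mathcal F$-minor) and absorbs the path into the bead-level set, whereas you keep the fine-grained minor model and absorb the path into the single branch set $\phi_1(w)$; the arithmetic and the underlying idea are identical.
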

\begin{proof}
    Figure~\ref{fig:mergenecklaces} illustrates the main idea of the proof.
    Define $t \coloneqq \nm_{\mathcal F}(G)$.
    Assume towards a contradiction that two disjoint minor models of an $\mathcal F$-necklace of length $t$ exist.
    Then there exist pairwise disjoint sets $S_1, \dots, S_t$, $T_1, \dots, T_t$, such that:
    \begin{itemize}
        \item For each $i \in [t]$, $G[S_i]$ and $G[T_i]$ are connected and they both contain an $\mathcal F$-minor.
        \item For each $i \in [t-1]$, there exists an edge between $S_i$ and $S_{i+1}$, as well as an edge between $T_i$ and $T_{i+1}$.
    \end{itemize}
    Figure~\ref{fig:mergenecklaces} illustrates this setting and gives some intuition on how two of those necklaces can be combined into a longer one.
    Define $S \coloneqq \bigcup_{i \in [t]} S_i$, $\mathcal S = \set{S_1, \dots, S_t}$ and $T$ and $\mathcal T$ analogously.
    Since $G$ is connected, there exists a path between $S$ and $T$.
    Consider such a minimal path $P$, so in particular only its endpoints are in $S \cup T$.
    Let $S_i$ be the set in $\mathcal S$ containing the endpoint of $P$ in $S$ and let $T_j$ be the set in $\mathcal T$ containing the endpoint of $P$ in $T$.
    We will now construct an $\mathcal F$-necklace of length at least $t+1$, contradicting that $\nm_{\mathcal F}(G) = t$.

    If $i \ge \frac{t+1}{2}$, then start the new $\mathcal F$-necklace minor model with sets $S_1, \dots, S_{i-1}$.
    Otherwise, start with $S_t, \dots, S_{i+1}$.
    Observe that also in the latter case, we select at least $t - (i + 1) + 1 \ge t - \frac{t+1}{2} = \frac{t-1}{2}$ sets.
    Now modify set $S_i$ into a set $S_i'$ by including all vertices in $P$ except the vertex in $T_j$.
    We now include $S_i'$ in the necklace.
    Then if $j \ge \frac{t+1}{2}$, we conclude the $\mathcal F$-necklace with sets $T_j, \dots, T_1$; otherwise with $T_j, \dots, T_t$.
    Observe that these are at least $\frac{t+1}{2}$ additional sets.
    We will now argue that these sets satisfy the properties of a necklace.
    \begin{itemize}
        \item The sets are pairwise disjoint, since the sets in $\mathcal S \cup \mathcal T$ are pairwise disjoint and since only one set was modified by adding vertices not occurring in $\mathcal S \cup \mathcal T$.
        \item For each vertex set, the subgraph of $G$ induced by these vertices contains an $\mathcal F$-minor and is connected, since each set is either in $\mathcal S \cup \mathcal T$, or was in $\mathcal S \cup \mathcal T$ before adding vertices on a path with an endpoint in the set.
        \item Between any pair of consecutive sets, there exists an edge:
        the only non-trivial pair of consecutive sets is formed by $S_i'$ and set $T_j$.
        By construction, there exists an edge in $P$ which has one endpoint in $S_i'$ and one in $T_j$, concluding this case as well.
    \end{itemize}
    We conclude that $G$ contains an $\mathcal F$-necklace of length at least $\frac{t-1}{2} + 1 + \frac{t+1}{2} = t+1$ as a minor, contradicting that $\nm_{\mathcal F}(G) = t$.
    Therefore $G$ does not contain two disjoint minor models of an $\mathcal F$-necklace of length $\nm_{\mathcal F}(G)$.
\end{proof}

Proposition~\ref{prop:longest-necklace-unique} is a generalization of the idea that in any connected graph, two paths of maximum length must intersect \bmp{at a vertex}.
Given a graph $G$ with a tree decomposition, we can use this property to identify a vertex in the tree decomposition such that the removal of all vertices in its bag decreases $\nm_{\mathcal F}(G)$.
This result is described in Lemma~\ref{lemma:hitting-necklaces-with-tw-bound}.

\begin{lemma}
    \label{lemma:hitting-necklaces-with-tw-bound}
    Let $\mathcal F$ be a collection of connected graphs.
    Let $G$ be a connected graph with $\tw(G) = w$ and $\nm_{\mathcal F}(G) = t$.
    Then there exists a set $Z \subseteq V(G)$ with $\abs Z \le w+1$ such that $\nm_{\mathcal F}(G-Z) < t$.
\end{lemma}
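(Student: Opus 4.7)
The plan is to combine Proposition~\ref{prop:longest-necklace-unique} with the classical Helly property for subtrees of a tree: once we know that all length-$t$ necklace minor models pairwise intersect in $V(G)$, we can find a single bag of a tree decomposition hitting all of them.

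First I would fix an optimal tree decomposition $(R, (B_u)_{u \in V(R)})$ of $G$ of width $w$, so every bag has size at most $w+1$. For every minor model $\phi$ of an $\mathcal F$-necklace of length $t$ inside $G$, let $V_\phi := \phi(V(H_\phi)) \subseteq V(G)$ denote the ground set of the model, where $H_\phi$ is the corresponding necklace. Since each necklace is connected, Observation~\ref{obs:minor-conn} guarantees that $G[V_\phi]$ is connected. Associate with $\phi$ the set of nodes $U_\phi := \{u \in V(R) \mid B_u \cap V_\phi \neq \emptyset\}$. By standard properties of tree decompositions, the subgraph of $R$ induced by $U_\phi$ is connected, hence a subtree of $R$.

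Next I would argue pairwise intersection of these subtrees. Take any two length-$t$ necklace minor models $\phi$ and $\phi'$. By Proposition~\ref{prop:longest-necklace-unique} the sets $V_\phi$ and $V_{\phi'}$ share a vertex $v \in V(G)$. Any bag containing $v$ then belongs to both $U_\phi$ and $U_{\phi'}$, so $U_\phi \cap U_{\phi'} \neq \emptyset$. Thus the finite family $\{U_\phi\}_\phi$ consists of pairwise intersecting subtrees of the tree $R$. By the Helly property for subtrees of a tree, they share a common node $u^* \in V(R)$. Setting $Z := B_{u^*}$ yields $|Z| \le w+1$, and by the construction of $u^*$ every length-$t$ necklace minor model meets $Z$. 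Consequently $G - Z$ contains no $\mathcal F$-necklace of length $t$ as a minor, giving $\nm_{\mathcal F}(G-Z) < t$.

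The only step needing a little care is the application of the Helly property, which must be justified for a \emph{finite} family (the minor models of $\mathcal F$-necklaces of length $t$ in $G$ are finite since $G$ is finite). A short inductive argument shows that a finite family of pairwise-intersecting subtrees of a tree has nonempty common intersection, which is well known. Everything else is a direct consequence of Proposition~\ref{prop:longest-necklace-unique} and the definition of a tree decomposition, so I do not foresee a significant obstacle.
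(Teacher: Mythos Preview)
Your argument is correct and arrives at the same bag of an optimal tree decomposition, but via a genuinely different route. The paper locates the bag constructively: it picks a node $u$ of the tree decomposition that is deepest with respect to the property that $G_u$ still contains a length-$t$ necklace minor, and then uses the separator property of $B_u$ together with Proposition~\ref{prop:longest-necklace-unique} to rule out a length-$t$ model surviving on either side of $B_u$. You instead encode each length-$t$ model as the subtree of decomposition nodes touching it, invoke Proposition~\ref{prop:longest-necklace-unique} to get pairwise intersection of these subtrees, and then appeal to the Helly property of subtrees of a tree to obtain a common node whose bag hits all models. Both approaches rest on Proposition~\ref{prop:longest-necklace-unique} and a width-$w$ tree decomposition; your version is more conceptual and reusable (the Helly step is a clean black box), while the paper's version is self-contained and avoids citing Helly. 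The only detail worth spelling out slightly more is why $U_\phi$ is connected: one uses that the bags containing a fixed vertex form a subtree and that for each edge of the connected graph $G[V_\phi]$ some bag contains both endpoints, so the per-vertex subtrees overlap along edges of $G[V_\phi]$.
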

\begin{proof}
    Take a tree decomposition $\left(X, (B_u)_{u \in V(X)}\right)$ of $G$ of width $w$.
    Now consider a node $t$ of the decomposition such that $G_t$ contains an $\mathcal F$-necklace of length $t$ as a minor, but for none of the children $c$ of $t$ the graph $G_{c}$ contains such a minor.
    Observe that such a node exists, because for the root node $r$, the graph $G_r$ contains such a minor.
    We now claim that $B_t$ hits any minor model of an $\mathcal F$-necklace of length $t$ in $G$.
    First, notice that by definition of a tree decomposition, there cannot be an edge between a vertex $u$ in $G_t - B_t$ and a vertex $v$ in $G - V(G_t)$: if such an edge would exist, then the subtree of the tree decomposition of all bags containing $u$ would intersect the subtree of all bags containing $v$, implying that $B_t$ contains at least one of the vertices.
    It follows that any minor model of an $\mathcal F$-necklace of length $t$ in $G - B_t$ must either be contained in $G_t - B_t$ or $G - V(G_t)$.
    The first set cannot contain such a minor model by construction of node $t$.
    The second set cannot either, as this implies the existence of two disjoint minor models of an $\mathcal F$-necklace of length $t$ in $G$, while Proposition~\ref{prop:longest-necklace-unique} proves that any two of these models must intersect.
    We conclude that $B_t$ is a set of at most $w+1$ vertices hitting all minor models of $\mathcal F$-necklaces of length $t$ in $G$.
\end{proof}

The proof of Lemma~\ref{lemma:bounding-function} follows directly by combining Lemma~\ref{lemma:tw-bound-necklaces} and Lemma~\ref{lemma:hitting-necklaces-with-tw-bound}.

\begin{proof}[Proof of Lemma~\ref{lemma:bounding-function}]
    By Lemma~\ref{lemma:tw-bound-necklaces}, we know that there exists a polynomial $f$ such that $\tw(G) < f(4n(t+1))$.
    Then by Lemma~\ref{lemma:hitting-necklaces-with-tw-bound}, there exists a set $X$ with $\abs X \le f(4n(t+1))$ where $n$ only depends on $\mathcal F$, as it is the maximum number of vertices in a graph in $\mathcal F$.
\end{proof}

We are almost in position to prove Lemma~\ref{lemma:unbounded-ed-implies-unbounded-necklaces} now.
We will first prove the following reformulation and then argue how it implies the lemma.

\begin{lemma}[{Cf.~\cite[Theorem 4.8]{bridgedepth}}]
    \label{lemma:ed-bounded-by-g-of-nm}
    For any finite collection $\mathcal F$ of connected planar graphs, there exists a polynomial function $g$ such that for any graph $G$,
    \[
        \ed_{\neg \mathcal F}(G) \le g(\nm_{\mathcal F}(G)).
    \]
\end{lemma}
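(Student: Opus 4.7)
The plan is to induct on $t = \nm_{\mathcal F}(G)$, using Lemma~\ref{lemma:bounding-function} to strip off polynomially many vertices at a time while decreasing $\nm_{\mathcal F}$ strictly, and then to convert the resulting deletion-distance bound into a bound on $\ed_{\neg \mathcal F}$.

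First I would establish the elementary monotonicity statement that for every graph $H$ and every $X \subseteq V(H)$,
\[
\ed_{\neg \mathcal F}(H) \le \ed_{\neg \mathcal F}(H - X) + |X|.
\]
The one-vertex case $\ed_{\neg \mathcal F}(H) \le \ed_{\neg \mathcal F}(H - v) + 1$ is immediate from the third clause of Definition~\ref{def:eldist} when $H$ is connected (and when $H \not\in$ $\mathcal F$-minor free, otherwise the bound is trivial), and for disconnected $H$ it follows by applying the connected case to the component containing $v$ and taking the maximum with the elimination distances of the remaining components, which are unaffected by removing $v$. The general statement then follows by induction on $|X|$.

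For the base case $t = 0$, observe that each $H \in \mathcal F$ is itself an $\mathcal F$-necklace of length one, so $\nm_{\mathcal F}(G) = 0$ forces $G$ to be $\mathcal F$-minor free and hence $\ed_{\neg \mathcal F}(G) = 0$. For the inductive step $t \ge 1$, since every $\mathcal F$-necklace is connected its minor models lie in single components, so $\nm_{\mathcal F}(G) = \max_{C \in \cc(G)} \nm_{\mathcal F}(C)$ and $\ed_{\neg \mathcal F}(G) = \max_{C \in \cc(G)} \ed_{\neg \mathcal F}(C)$. It therefore suffices to bound $\ed_{\neg \mathcal F}(C)$ for each connected component $C$ with $t_C := \nm_{\mathcal F}(C) \le t$. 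Applying Lemma~\ref{lemma:bounding-function} to $C$ produces a set $X_C \subseteq V(C)$ of size at most $f_{\mathcal F}(t_C)$ with $\nm_{\mathcal F}(C - X_C) < t_C$; the induction hypothesis applied to each component of $C - X_C$ then yields $\ed_{\neg \mathcal F}(C - X_C) \le g(t_C - 1)$, and combining with the monotonicity property gives $\ed_{\neg \mathcal F}(C) \le g(t_C - 1) + f_{\mathcal F}(t_C)$. Taking the maximum over components and setting $g(0) = 0$ together with $g(t) = g(t-1) + f_{\mathcal F}(t) = \sum_{i=1}^{t} f_{\mathcal F}(i)$ closes the induction; since $f_{\mathcal F}$ is polynomial by Lemma~\ref{lemma:bounding-function}, so is $g$.

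Almost all of the technical work has been front-loaded into Lemmas~\ref{lemma:bounding-function} and~\ref{lemma:hitting-necklaces-with-tw-bound}, so the remaining obstacle is mild: aligning the componentwise behavior of $\nm_{\mathcal F}$ (whose minor models must live inside single components, because necklaces are connected) with the componentwise recursion of Definition~\ref{def:eldist}, and verifying that the single-vertex monotonicity step behaves correctly in the subcase where removing a vertex splits its component into several new components.
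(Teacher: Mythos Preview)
Your proposal is correct and mirrors the paper's proof almost exactly: both use the same function $g(t) = \sum_{i=1}^{t} f_{\mathcal F}(i)$, the same base case, the same appeal to Lemma~\ref{lemma:bounding-function}, and the same monotonicity step $\ed_{\neg \mathcal F}(G) \le |X| + \ed_{\neg \mathcal F}(G-X)$. The only cosmetic difference is that the paper inducts on $\nm_{\mathcal F}(G) + |V(G)|$ (so that passing to a connected component is a strict decrease even when $\nm_{\mathcal F}$ stays the same), whereas you induct on $t$ alone and dispose of the disconnected case by observing that both $\nm_{\mathcal F}$ and $\ed_{\neg \mathcal F}$ are componentwise maxima; either bookkeeping choice works.
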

\begin{proof}
    Let $\mathcal F$ be a finite collection of connected planar graphs.
    We will prove the statement by induction on $\nm_{\mathcal F}(G) + \abs{V(G)}$ for the function $g$ defined by
    \[
        g(t) = \sum_{i=1}^t f_{\mathcal F}(t),
    \]
    where $f_{\mathcal F}$ denotes the function guaranteed by Lemma~\ref{lemma:bounding-function}.

    If $\nm_{\mathcal F}(G) = 0$, then $G$ does not contain any $\mathcal F$-minor, so $\ed_{\neg \mathcal F}(G) = 0 = g(0)$.

    Now suppose that $\nm_{\mathcal F}(G) = t > 0$ and consider the case where $G$ is connected.
    Then by Lemma~\ref{lemma:bounding-function}, there exists a set $X \subseteq V(G)$ with $\abs{X} \le f_{\mathcal F}(t)$ such that $\nm_{\mathcal F}(G-X) < t$.
    Then for each component $C \in \cc(G-X)$, we know that $C$ is a connected graph with $\nm_{\mathcal F}(C) < t$ and $|V(C)| \le |V(G)|$, so by induction $\ed_{\neg \mathcal F}(C) \le g(\nm_{\mathcal F}(C)) \le g(\nm_{\mathcal F}(G-X)) \le g(t-1)$.
    Since this holds for each component of $G-X$, we obtain by definition that $\ed_{\neg \mathcal F}(G-X) = \max_{C \in \cc(G-X)} \ed_{\neg F}(C) \le g(t-1)$.
    Now observe that $\ed_{\neg \mathcal F}(G) \le \abs X + \ed_{\neg \mathcal F}(G-X)$.
    This follows directly from an inductive proof based on the definition of elimination distance, which implies that there always exists a vertex whose removal decreases the elimination distance.
    This implies that
    \[
        \ed_{\neg \mathcal F}(G) \le \abs X + \ed_{\neg \mathcal F}(G-X) \le f_{\mathcal F}(t) + g(t-1) = g(t).
    \]
    The case where $G$ is disconnected is analogous: for each component $C \in \cc(G)$ we have $\nm_{\mathcal F}(C) \le \nm_{\mathcal F}(G)$ while $\abs{V(C)} < \abs{V(G)}$, so by induction we can bound the elimination distance to an $\mathcal F$-minor free graph of $C$.
    As the elimination distance to an $\mathcal F$-minor free graph of $G$ is the maximum of the elimination distance of its connected components, we thereby bound $\ed_{\neg \mathcal F}(G)$ as well.
\end{proof}

We can now prove Lemma~\ref{lemma:unbounded-ed-implies-unbounded-necklaces}.
Recall that for a finite collection $\mathcal F$ of connected planar graphs, Lemma~\ref{lemma:unbounded-ed-implies-unbounded-necklaces} states that any minor-closed family of graphs $\mathcal G$ with unbounded elimination distance to an $\mathcal F$-minor free graph contains arbitrarily long uniform $\mathcal F$-necklaces.
The proof of this lemma is fairly straightforward now.

\begin{proof}[Proof of Lemma~\ref{lemma:unbounded-ed-implies-unbounded-necklaces}]
    Let $\mathcal G$ be a minor-closed graph family containing graphs with arbitrarily large elimination distance to an $\mathcal F$-minor free graph.
    Suppose towards a contradiction that $\mathcal G$ contains no $\mathcal F$-necklace of length at least $t$ for some $t$.
    Then no graph in $\mathcal G$ contains an $\mathcal F$-necklace of length $t$ as a minor, thereby bounding the elimination distance to an $\mathcal F$-minor free graph of any graph in $\mathcal G$ by $g(t)$ by Lemma~\ref{lemma:ed-bounded-by-g-of-nm}.
    We reach a contradiction and obtain that $\mathcal G$ contains arbitrarily long $\mathcal F$-necklaces, so by Proposition~\ref{prop:always-uniform-necklaces} we conclude that $\mathcal G$ also contains arbitrarily long uniform $\mathcal F$-necklaces.
\end{proof}

\subsection{Lower bound reduction}
Our reduction is based on the following theorem on $d$-\textsc{CNF Satisfiability}, where each clause has at most $d$ literals~\cite{dellmelkebeek2010}.
This theorem uses the notion of an \emph{oracle communication protocol}.
In such a protocol, there is a player with an input for a decision problem who has to answer the decision problem in time polynomial in the input size.
Besides the player, there is an oracle with unbounded resources, but the oracle does not know the input at the start of the protocol.
The goal is to develop a protocol for communication between the player and oracle, such that the player can solve any instance in polynomial time.
A trivial protocol sends the entire instance to the oracle and then sends the solution back, but we are interested in protocols that require less communication.
Theorem~\ref{thm:oracle-cnf} provides a lower bound on the number of bits of communication for such a protocol for $d$-\textsc{CNF}.

\begin{theorem}[{Cf.~\cite[Theorem 1]{dellmelkebeek2010}}]
    \label{thm:oracle-cnf}
    Let $d \ge 3$ be an integer and $\epsilon > 0$.
    Then there is no oracle communication protocol taking $\bigO(n^{d-\epsilon})$ bits of communication that can decide for any $d$-CNF formula $F$ whether $F$ is satisfiable, unless $\NP \subseteq \coNP / \poly$.
\end{theorem}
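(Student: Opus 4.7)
The plan is to follow the OR-distillation framework pioneered by Fortnow and Santhanam. Assume toward contradiction that an oracle communication protocol $\Pi$ of cost $c(n) = \bigO(n^{d-\epsilon})$ exists for $d$-CNF Satisfiability; recall that the player runs in polynomial time in the input size and knows the input, while the oracle has unbounded resources but does not see the input directly.

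First I would construct a size-tight reduction from the $t$-fold OR of $d$-CNF Satisfiability on $n$ variables to a single $d$-CNF formula $\Phi$ on $N = n + \bigO(\log t)$ variables with $\bigO(t \cdot n^d)$ clauses, each still of width at most $d$. The natural idea is to introduce $\bigO(\log t)$ ``selector'' variables together with, for each clause $C$ of the $i$-th input formula $F_i$, a $d$-CNF translation of the implication ``if the selectors encode index $i$, then $C$ must hold.'' The resulting $\Phi$ is satisfiable if and only if at least one $F_i$ is. The delicate point is keeping the clause width exactly $d$: the condition ``selectors encode $i$'' has width $\Theta(\log t)$ on the nose, so one has to avoid a naive Tseitin decomposition that would push the width above $d$. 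A block-partition construction that groups instances into batches of size comparable to $n$ and uses a fresh small pool of selector variables per batch does the job without blowing up the total size.

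Next I would run $\Pi$ on $\Phi$, obtaining a full transcript of $\bigO(|\Phi|^{d-\epsilon})$ bits that decides whether $\Phi$, and hence the OR of the $F_i$'s, is satisfiable. Choosing $t$ as a sufficiently large polynomial in $n$, say $t = n^{2d/\epsilon}$, makes the transcript length $o(t)$. Together with a polynomial-size description of the oracle's deterministic strategy on short prefixes (which can be supplied as advice), the transcript constitutes a \emph{weak OR-distillation} of $d$-CNF Satisfiability: a polynomial-time map from $t$ instances of size $n$ into substantially fewer than $t$ bits that preserves the answer to the OR. Invoking the Fortnow--Santhanam theorem on distillation (in the quantitative refinement used by Dell and van Melkebeek, which accommodates the oracle via advice) then yields $\mathrm{NP} \subseteq \coNP/\poly$, the required contradiction.

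The main obstacle I expect is the width-preserving OR-to-$d$-CNF reduction in Step~2: any sloppier encoding either lifts the clause width above $d$ (which makes the protocol's exponent $d-\epsilon$ the wrong target) or loses a super-linear factor in $t$ that the $n^{d-\epsilon}$ budget cannot absorb after substitution. A second, more technical, obstacle is converting the interactive oracle transcript into a non-interactive advice-plus-distillation: one has to argue that the oracle's answers on any transcript prefix can be replaced by a deterministic best response and that the size of this response table is polynomial in $n$ (on the relevant inputs), so that the advice fits within $\poly(n)$ bits as required by the $\coNP/\poly$ framework.
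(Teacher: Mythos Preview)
The paper does not give its own proof of this theorem; it is quoted verbatim from Dell and van~Melkebeek and used as a black box. So there is nothing to compare against in the paper itself, and the question reduces to whether your sketch stands on its own.

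Your high-level plan---compress the OR of many $d$-CNF instances below $t$ bits and invoke Fortnow--Santhanam---is the right framework, but the proof has a real gap in the first step. You assert that the OR of $t$ width-$d$ formulas on $n$ variables can be encoded as a \emph{single width-$d$} formula $\Phi$ on $N=n+\bigO(\log t)$ variables, and you acknowledge that the naive selector encoding produces clauses of width $d+\Theta(\log t)$. Your proposed fix (``group instances into batches of size comparable to $n$ with a fresh pool of selectors per batch'') does not close this gap: within a batch you still face the same width blow-up, and any Tseitin-style flattening that restores width $d$ introduces auxiliary variables \emph{per clause}, pushing $N$ to $\Theta(t\cdot n)$ or worse. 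With $N$ that large, the transcript bound $\bigO(N^{d-\epsilon})$ is no longer $o(t)$ for any polynomial choice of $t$, and the compression collapses. You also write the transcript length as $\bigO(|\Phi|^{d-\epsilon})$; the hypothesis bounds the cost in the number of \emph{variables}, not the formula size, so be careful which quantity you are tracking.

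This is not a cosmetic issue: the width-preserving OR is exactly the obstacle that makes the $n^{d-\epsilon}$ lower bound non-trivial, and Dell--van~Melkebeek do \emph{not} solve it by building a single combined formula. Their argument works directly with the protocol: fixing an optimal oracle strategy (supplied as advice) turns the transcript into a map from instances to at most $2^{c(n)}$ leaves, and a counting/min-entropy argument over these leaves yields the required OR-compression without ever re-encoding the inputs as a fresh $d$-CNF. If you want to complete your proof, you should either abandon the combined-formula route and reproduce that direct protocol argument, or exhibit an explicit width-$d$ OR gadget with $N=n+o(n)$ variables---something not known to exist.
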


Observe that a protocol sending the entire instance to the oracle takes $\bigO(n^d)$ bits of communication.
The theorem implies the following theorem on \textsc{CNF Satisfiability}, where clauses can have arbitrarily many variables.
This theorem is also described in~\cite{jansen2014}.
\begin{theorem}\label{thm:cnfsat}
There is no polynomial time algorithm that, given a CNF formula $F$ on $n$ variables, constructs an instance $I$ of a fixed decision problem $Q$ such that
\begin{enumerate}
    \item $F$ is satisfiable if and only if $I$ is a \textsc{Yes}-instance of $Q$,
    \item $\abs I = \bigO(n^c)$ for some constant $c$,
\end{enumerate}
unless $\NP \subseteq \coNP / \poly$.
\end{theorem}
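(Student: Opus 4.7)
\medskip

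\noindent\textbf{Proof proposal for Theorem~\ref{thm:cnfsat}.}
The plan is to derive the statement from Theorem~\ref{thm:oracle-cnf} by a contrapositive argument: a hypothetical compression algorithm would give a very efficient oracle communication protocol for $d$-\textsc{CNF Satisfiability}, contradicting the known lower bound. Throughout, let $c$ denote the constant promised by the hypothesised algorithm.

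First I would fix an integer $d \geq \max(3, c+1)$ and an arbitrary $d$-CNF formula $F$ on $n$ variables (and at most $\binom{2n}{d} = \bigO(n^d)$ clauses). Since every $d$-CNF formula is in particular a CNF formula, I can apply the hypothesised polynomial-time algorithm to $F$ to obtain, in polynomial time, an instance $I$ of the fixed decision problem $Q$ with $|I| = \bigO(n^c)$, such that $F$ is satisfiable if and only if $I$ is a \textsc{Yes}-instance of $Q$.

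Next I would turn this into an oracle communication protocol for $d$-\textsc{CNF Satisfiability} on $n$ variables. The player, given $F$, spends polynomial time computing the instance $I$. The player then sends the encoding of $I$ to the oracle, using $\bigO(n^c)$ bits; the oracle, having unbounded computational power, decides whether $I \in Q$ and sends back a single bit encoding the answer. By correctness of the hypothesised reduction, this bit also decides whether $F$ is satisfiable. The total communication is $\bigO(n^c)$ bits.

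Finally I would apply Theorem~\ref{thm:oracle-cnf} with $\epsilon = 1$: since $d \geq c+1$, we have $\bigO(n^c) \subseteq \bigO(n^{d-1}) = \bigO(n^{d-\epsilon})$, so the protocol above contradicts the theorem unless $\NP \subseteq \coNP/\poly$. Hence no such algorithm can exist under the same assumption. The argument is essentially routine once Theorem~\ref{thm:oracle-cnf} is in hand; the only thing that requires care is ensuring that the choice of $d$ depends only on the constant $c$ from the hypothesised reduction (so that $Q$ and $c$ are fixed before $d$ is chosen), which is precisely what the quantifier order of Theorem~\ref{thm:oracle-cnf} allows.
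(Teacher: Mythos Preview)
Your proposal is correct and follows essentially the same route as the paper: assume a compression to size $\bigO(n^c)$, pick $d$ just above $c$ (the paper takes $d=c+1$ after assuming $c\ge 2$ without loss of generality, you take $d\ge\max(3,c+1)$), and turn the compression into an oracle communication protocol for $d$-\textsc{CNF Satisfiability} that uses $\bigO(n^c)=\bigO(n^{d-\epsilon})$ bits, contradicting Theorem~\ref{thm:oracle-cnf}. Your remark about the quantifier order (fixing $Q$ and $c$ before choosing $d$) is a nice clarification that the paper leaves implicit.
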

\begin{proof}
    Assume towards a contradiction that such a polynomial-time kernelization algorithm would exist and that the kernel has size $\bigO(n^c)$ for some constant $c$.
    Without loss of generality, take $c \ge 2$.
    We can then construct an oracle communication protocol contradicting Theorem~\ref{thm:oracle-cnf}.
    Consider $(c+1)$-CNF, where each clause has at most $c+1$ variables.
    Using this kernelization algorithm, the player can first reduce the instance to an instance of size $\bigO(n^c)$.
    The protocol is concluded by communicating this instance to the oracle and letting the oracle solve the problem.
    Observe that this protocol took only $\bigO(n^c)$ bits of communication.
    This directly contradicts Theorem~\ref{thm:oracle-cnf} which stated the non-existence of any $\bigO(n^{c+1-\epsilon})$ communication protocol for any $\epsilon>0$ assuming $\NP \not \subseteq \coNP / \poly$.
    Therefore, such a kernelization algorithm does not exist under these complexity assumptions.
\end{proof}

Theorem~\ref{thm:cnfsat} implies that, under common hardness assumptions, there does not exist a kernelization algorithm for the \prob{CNF Satisfiability} problem that outputs a polynomial kernel in the number of variables.
We can use this theorem to prove the non-existence of a kernelization algorithm for $\mathcal F$-\prob{Minor Free Deletion} with as parameter the size of a modulator to a graph with constant elimination distance to an $\mathcal F$-minor free graph.
Our reduction will use the following graphs.
An example is illustrated in Figure~\ref{fig:jhv}.

\begin{definition}[$J_{H, v, x, y}$]
    \label{def:JHv}
    Let $H$ be a graph, $v, x, y \in V(H)$ with $x \neq y$ and let $c = \abs{V(H)}$.
    Define $J_{H, v, x, y}$ to be the graph which can be constructed with the following two steps.
    \begin{itemize}
        \item We create $c-1$ vertices $x_1, \dots, x_{c-1}$ for which we define the set $X \coloneqq \set{x_1, \dots, x_{c-1}}$.
        We then add precisely those edges such that $J_{H, v, x, y}[X]$ is isomorphic to $H - \set v$.
        We repeat this construction for a set of vertices $Y \coloneqq \set{y_1, \dots, y_{c-1}}$ such that $J_{H, v, x, y}[Y]$ is isomorphic to $H-\set v$.
        \item Then for any pair of $x_i \in X$ and $y_j \in Y$, we add $c-2$ extra vertices to the graph and let $D_{i, j}$ be the set containing these vertices.
        We then add edges to the graph such that $J_{H, v, x, y}[D_{i, j} \cup \set{x_i, y_j}]$ is isomorphic to $H$.
        In particular, there exists an isomorphism where $x_i$ is the image of $x$ and $y_j$ is the image of $y$.
    \end{itemize}
    In total, the graph will have $2(c-1) + (c-1)^2(c-2)$ vertices.
\end{definition}
\begin{figure}
    \centering
    \includegraphics[width=0.35 \textwidth]{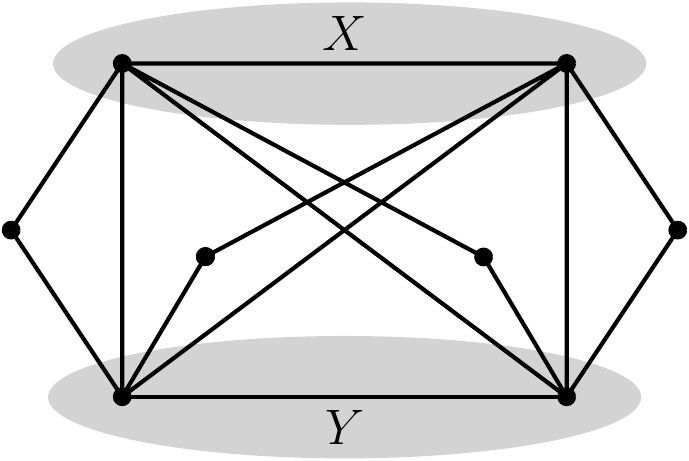}
    \caption{Illustration of $J_{H, v, x, y}$ with $H = {K_3}$ and $v, x, y \in V(H)$.
    Observe that the subgraph induced by $X$ is indeed isomorphic to $K_3 - \set{v}$.}\label{fig:jhv}
\end{figure}

The choice of vertices $v, x$ and $y$ will not be important, but they ensure that the graph is uniquely defined.
The constructed graph has some useful properties.

\begin{lemma}
    \label{lemma:JHv-no-minor-then-X-or-Y}
    Let $H$ be a biconnected graph on at least three vertices and let $v, x, y \in V(H)$ with $x \neq y$.
    Let $J_{H, v, x, y}$ be the graph defined in Definition~\ref{def:JHv} with $X$, $Y$ and $c$ defined accordingly.
    Let $Z \subseteq V(J_{H, v, x, y})$ be a vertex set with $\abs Z \le c-1$.
    If $J_{H,v, x, y} - Z$ contains no $H$-minor, then $Z$ is equal to $X$ or $Y$.
\end{lemma}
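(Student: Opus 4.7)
I would reduce the claim to a short combinatorial counting built around the $(c-1)^2$ explicit copies of $H$ given by the sets $D_{i,j} \cup \{x_i, y_j\}$.

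First, observe that by construction $J_{H,v,x,y}[D_{i,j} \cup \{x_i, y_j\}] \cong H$ for every $(i,j) \in [c-1]^2$, so each such ``diamond'' contains $H$ as a subgraph and hence as a minor. The hypothesis that $J_{H,v,x,y} - Z$ has no $H$-minor therefore forces $Z \cap (D_{i,j} \cup \{x_i, y_j\}) \neq \emptyset$ for every pair $(i,j)$; that is, $Z$ must be a hitting set for this family of $(c-1)^2$ diamond sets. Note that biconnectedness of $H$ is not needed for this observation: an $H$-subgraph is automatically an $H$-minor.

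Next, I would set $a = |Z \cap X|$, $b = |Z \cap Y|$, $z = |Z \setminus (X \cup Y)|$. Because the sets $D_{i,j}$ are pairwise disjoint from one another and from $X \cup Y$ by construction, every vertex of $Z \setminus (X \cup Y)$ lies in exactly one diamond and thus hits exactly one of them. The diamonds not already hit by $Z \cap X$ or $Z \cap Y$ form a $(c-1-a) \times (c-1-b)$ rectangle of pairs in $[c-1]^2$, each of which requires its own witness from $Z \setminus (X \cup Y)$. This yields $z \ge (c-1-a)(c-1-b)$, which together with $a + b + z \le c - 1$ and the substitution $u = c-1-a$, $v = c-1-b$ rearranges to $(u-1)(v-1) \le 2 - c$.

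Finally, since $c \ge 3$ the right-hand side is at most $-1$, and together with the range $0 \le u, v \le c-1$ this inequality admits only the two integer solutions $(u,v) = (0, c-1)$ and $(c-1, 0)$. These correspond to $a = c-1$, $b = 0$ (so $X \subseteq Z$) and $b = c-1$, $a = 0$ (so $Y \subseteq Z$), respectively; the size budget $|Z| \le c-1 = |X| = |Y|$ then forces $Z = X$ or $Z = Y$. The only delicate step is the counting inequality, where one must justify that vertices in $\bigcup_{i,j} D_{i,j}$ contribute only additively to coverage—immediate from disjointness of the $D_{i,j}$'s—while the biconnectedness of $H$ with $|V(H)| \ge 3$ enters only through $c \ge 3$, which is precisely what makes the final inequality nonvacuous.
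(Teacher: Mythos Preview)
Your argument is correct and follows essentially the same approach as the paper: both proofs observe that $Z$ must hit each of the $(c-1)^2$ explicit $H$-subgraphs on $D_{i,j}\cup\{x_i,y_j\}$, count the unhit ones as $(c-1-a)(c-1-b)$, and compare against the budget $|Z|\le c-1$. The only difference is packaging: the paper splits into three cases (no intersection with $X\cup Y$; contained in $X\cup Y$; mixed), whereas you handle all cases uniformly via the single inequality $(u-1)(v-1)\le 2-c$, which is slightly cleaner. One cosmetic point: you reuse the letter $v$ for both the fixed vertex $v\in V(H)$ from the statement and your substitution variable $v=c-1-b$; renaming the latter would avoid any confusion.
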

\begin{proof}
    Assume towards a contradiction that $Z$ is not equal to $X$ or $Y$.
    We distinguish a few cases.
    \begin{itemize}
        \item If $Z$ does not intersect $X \cup Y$, then $Z$ cannot hit all $H$-minors which we added for each pair of vertices $x \in X$ and $y \in Y$:
        there are $(c-1)^2$ such pairs, giving a set of $(c-1)^2$ $H$-minors that only intersect each other in $X \cup Y$.
        As $(c-1)^2 > c-1$ for $c \ge 3$, the graph $J_{H,v, x, y} - Z$ contains an $H$-minor in this case.
        \item Suppose $Z$ is included in $X \cup Y$.
        Since $Z$ is not equal to $X$ and not equal to $Y$, there exist vertices $x_i \in X$ and $y_j \in Y$ such that $x_i, y_j \not \in Z$.
        As $Z$ contains no vertices outside $X \cup Y$, it cannot hit the $H$-minor that was constructed for this pair, so $J_{H,v, x, y} - Z$ contains an $H$-minor here as well.
        \item Now assume neither of those cases apply.
        Let $k = \abs{Z \cap X}$ and $\ell = \abs{Z \cap Y}$ and observe that $k + \ell \le c-2$.
        Then the number of vertices in $X \setminus Z$ equals $c - 1 - k$ and the size of $Y \setminus Z$ equals $c - 1 - \ell$.
        It follows that the number of pairs of a vertex in $X \setminus Z$ and one in $Y \setminus Z$ equals
        \begin{align*}
        (c - 1 - k)(c - 1 - \ell) &= (c-1)^2 - (c-1)(k + \ell) + k \ell \\
        & \ge (c-1)^2-(c-1)(c-2) \\
        &= c-1.
        \end{align*}
        We now obtained at least $c-1$ $H$-minors where any pair of them can only intersect in a vertex in $(X \cup Y) \setminus Z$.
        Therefore, if these minors are all hit by $Z$, then $Z$ needs to contain at least $c-1$ vertices outside $X \cup Y$.
        However, at least one vertex of $Z$ is in $X \cup Y$, contradicting that $\abs Z \le c-1$.
    \end{itemize}
    We conclude that if $Z$ is not $X$ or $Y$, then $J_{H,v, x, y}$ contains an $H$-minor, thereby concluding the proof.
\end{proof}

The implication in Lemma~\ref{lemma:JHv-no-minor-then-X-or-Y} is actually an equivalence.
For the other direction, we will prove a slightly stronger statement.

\begin{lemma}
    \label{lemma:JHv-X-or-Y-then-no-minor}
    Let $H$ be a biconnected graph on at least three vertices and let $v, x, y \in V(H)$ with $x \neq y$.
    Let $J_{H, v, x, y}$ be the graph defined in Definition~\ref{def:JHv} with $X$, $Y$ and $c$ defined accordingly.
    If $Z \subseteq V(J_{H, v, x, y})$ is equal to $X$ or $Y$, then any biconnected graph that $J_{H,v, x, y} - Z$ contains as a minor is a proper minor of $H$.
\end{lemma}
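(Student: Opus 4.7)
The plan is to exploit the block structure of $J_{H,v,x,y} - Z$ and apply Proposition~\ref{prop:biconn}. By symmetry of the construction of $J_{H,v,x,y}$, it suffices to handle the case $Z = X$; the argument for $Z = Y$ is analogous after swapping the roles of $x$ and $y$.

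First I would describe $J_{H,v,x,y} - X$ explicitly. Its vertex set is $Y \cup \bigcup_{i,j} D_{i,j}$. The induced subgraph $J[Y]$ is isomorphic to $H - \{v\}$. For each pair $(i,j) \in [c-1]^2$, the induced subgraph $J[D_{i,j} \cup \{y_j\}]$ is isomorphic to $H - \{x\}$, since removing $x_i$ from the copy of $H$ on $D_{i,j} \cup \{x_i, y_j\}$ corresponds (under the given isomorphism) to removing the vertex $x$. Crucially, by construction of $J_{H,v,x,y}$, the only vertex of $J - X$ that a set $D_{i,j}$ shares with anything outside $D_{i,j}$ is $y_j$. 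Hence each $y_j$ is a cut vertex separating $D_{i,j}$ from the rest of $J - X$ (and separating the various $D_{i',j}$ with $i' \ne i$ from each other).

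The consequence is that every biconnected subgraph of $J - X$ is contained either in $J[Y]$ or in $J[D_{i,j} \cup \{y_j\}]$ for some $(i,j)$: any biconnected subgraph using vertices on both sides of a cut vertex $y_j$ would itself be disconnected by removing $y_j$, contradicting biconnectivity. Therefore every biconnected component of $J - X$ is isomorphic to a subgraph of $H - \{v\}$ or of $H - \{x\}$, both of which have only $c-1 = |V(H)| - 1$ vertices.

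Now suppose $H'$ is a biconnected graph which is a minor of $J - X$, and let $\phi$ be a minimal minor model of $H'$ in $J - X$. By Proposition~\ref{prop:biconn}, $(J - X)[\phi(V(H'))]$ is biconnected and is a minor of some biconnected component of $J - X$. Since $H'$ is itself a minor of $(J - X)[\phi(V(H'))]$ (the model $\phi$ witnesses this), we deduce that $H'$ is a minor of $H - \{v\}$ or of $H - \{x\}$. As both of these graphs have strictly fewer vertices than $H$, any minor of either has at most $c - 1 < c = |V(H)|$ vertices and thus cannot be isomorphic to $H$, so $H'$ is a proper minor of $H$. The only step requiring real care is justifying that no biconnected subgraph spans two ``pieces'', which is the cut-vertex observation above; everything else is a direct application of Proposition~\ref{prop:biconn}.
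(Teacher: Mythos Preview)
Your proof is correct and follows essentially the same approach as the paper: identify the block structure of $J_{H,v,x,y}-Z$, observe that every block sits inside a copy of $H$ with one vertex removed, and conclude via Proposition~\ref{prop:biconn}. In fact, your enumeration of the pieces is more careful than the paper's, which only lists the blocks $D_{i,j}\cup\{x_i\}$ and omits the biconnected components contained in $J[X]\cong H-\{v\}$; you correctly cover both cases.
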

\begin{proof}
    Assume without loss of generality that $Z = Y$.
    Now assume towards a contradiction that $J_{H,v, x, y} - Y$ contains an $H'$-minor for some biconnected $H'$ that is not a proper minor of $H$
    Observe that by Proposition~\ref{prop:biconn}, a biconnected component of $J_{H,v, x, y} - Y$ must contain an $H'$-minor.
    However, the only biconnected components of $J_{H,v, x, y} - Y$ are the induced subgraphs by the vertices $D_{i,j} \cup \set{x_i}$ for $i \in [c-1]$.
    These subgraphs are minors of $H$, so they cannot contain an $H'$-minor when $H'$ is a proper minor of $H$. 
\end{proof}

We can now give the reduction.

\begin{lemma}
    \label{lemma:fvsreduction}
    Let $\mathcal F$ be a finite collection of biconnected graphs on at least three vertices.
    Let $H \in \mathcal F$, let $x, y \in V(H)$ and consider the uniform necklace structure $(H, x, y)$.
    There exists a polynomial time algorithm that, given a CNF formula on $n$ variables $x_1, \dots, x_n$, outputs a graph $G$ together with a set $X \subseteq V(G)$ of size $k = \bigO_{\mathcal F}(n)$ and an integer $t$, such that
    \begin{enumerate}
        \item $F$ is satisfiable if and only if $G$ has a solution to $\mathcal F$-\prob{Minor Free Deletion} of size at most $t$,
        \item Each connected component of $G - X$ is a minor of a uniform necklace with structure $(H, x, y)$.
    \end{enumerate}
\end{lemma}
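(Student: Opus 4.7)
The plan is to reduce from CNF Satisfiability, so that composing a hypothetical polynomial kernel for $\mathcal F$-\prob{Minor Free Deletion} parameterised by $\abs X$ with the reduction yields a polynomial-size compression of CNF SAT, contradicting Theorem~\ref{thm:cnfsat}. Given a CNF formula $F$ on $n$ variables $x_1, \ldots, x_n$, I construct a graph $G$, a modulator $X$ of size $\bigO_{\mathcal F}(n)$, and an integer $t$ such that every connected component of $G - X$ is a minor of a uniform necklace with structure $(H, x, y)$, and $F$ is satisfiable if and only if $G$ has a solution to $\mathcal F$-\prob{Minor Free Deletion} of size at most $t$.

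For the variable gadgets I pick any $v \in V(H) \setminus \{x, y\}$, which exists because $\abs{V(H)} \ge 3$, and for each variable $x_i$ add a disjoint copy $J_i$ of $J_{H, v, x, y}$ from Definition~\ref{def:JHv}, with sides $X_i$ and $Y_i$ of size $c - 1$ each, where $c \coloneqq \abs{V(H)}$. The modulator is $X \coloneqq \bigcup_i (X_i \cup Y_i)$, of size $2n(c-1) = \bigO_{\mathcal F}(n)$. By Lemma~\ref{lemma:JHv-no-minor-then-X-or-Y}, any deletion set using only $c-1$ vertices inside $J_i$ to destroy the $H$-minors of $J_i$ must delete exactly $X_i$ or exactly $Y_i$, which I interpret as setting $x_i$ to TRUE or FALSE, respectively. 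After removing $X$, the gadget $J_i$ splits into $(c-1)^2$ pairwise-disjoint components each isomorphic to $H - \{x, y\}$, which is a subgraph of one bead of a uniform necklace with structure $(H, x, y)$, hence a necklace minor.

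For each clause $C_j = \ell_1 \vee \cdots \vee \ell_k$ I attach a clause gadget $N_j$ whose vertices lie outside the modulator but that shares ``selector'' vertices with the variable gadgets of the literals in $C_j$. It should satisfy (a) the subgraph of $G$ induced on $V(N_j)$ together with the undeleted selector vertices contains an $\mathcal F$-minor exactly when the induced assignment does not satisfy $C_j$, and (b) $N_j$ is itself a minor of a uniform necklace with structure $(H, x, y)$. A natural realisation takes a fresh copy of $H$, subdivides a single edge $e$ into a path of $k$ selector vertices, and identifies the $s$-th selector with a chosen vertex of $X_{i_s}$ or $Y_{i_s}$ according to the polarity of $\ell_s$. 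The resulting graph contains $H$ as a minor by contracting the subdivided path, but after any single selector is deleted the body reduces to $H - e$, whose $\abs{V(H)}$ vertices force singleton branch sets in any minor model of $H$ and therefore require every edge of $H$, so no $H$-minor survives by an edge-count argument. Sharing selectors across clauses that reuse the same literal — for instance using a single TRUE-marker per $(\text{variable, polarity})$ pair — keeps $\abs X = \bigO_{\mathcal F}(n)$, and coordinating these identifications so that no spurious $\mathcal F$-minors arise between gadgets via the interaction between $J_i$'s internal edges and subdivision paths is the main technical obstacle. If the basic gadget fails for wide clauses, one may instead reduce from $d$-CNF for a constant $d$ chosen according to the hypothetical kernel size via Theorem~\ref{thm:oracle-cnf}, replacing the subdivision by a generalisation of $J_{H, v, x, y}$ with $d$ sides.

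Setting $t \coloneqq n(c-1)$ completes the construction. Any $\mathcal F$-minor-free deletion set $S$ must devote at least $c-1$ vertices to each $J_i$ by Lemma~\ref{lemma:JHv-no-minor-then-X-or-Y}, so $\abs S \ge n(c-1) = t$, and equality forces $S \cap J_i \in \{X_i, Y_i\}$ for every $i$, so $S$ encodes a truth assignment of $F$. That assignment must satisfy every clause, because otherwise the corresponding $N_j$ retains an $\mathcal F$-minor by property~(a) and requires at least one additional deletion, breaking the bound. Conversely, from any satisfying assignment I build such an $S$ of size exactly $t$ by deleting $X_i$ or $Y_i$ as prescribed by the assignment. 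Composing this construction with a hypothetical polynomial kernel for $\mathcal F$-\prob{Minor Free Deletion} parameterised by $\abs X$ then yields a polynomial-size compression of CNF SAT, contradicting Theorem~\ref{thm:cnfsat}.
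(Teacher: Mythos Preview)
Your proposal shares the variable gadget with the paper (copies of $J_{H,v,x,y}$ with sides encoding truth values via Lemma~\ref{lemma:JHv-no-minor-then-X-or-Y}), but diverges in the clause gadget, and the divergence is where the argument breaks.

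You realise a clause by subdividing an edge of a fresh copy of $H$ and \emph{identifying} the subdivision vertices with vertices of the sides $X_i, Y_i$. This identification inserts edges between consecutive selectors, i.e.\ edges joining vertices that live in \emph{different} variable gadgets. You correctly flag that controlling the $\mathcal F$-minors created by these cross-gadget edges is ``the main technical obstacle'', and then leave it unresolved; the fallback suggestions (bounded clause width via $d$-CNF, or a $d$-sided generalisation of $J_{H,v,x,y}$) do not address it either. Concretely, your forward direction is incomplete: given a satisfying assignment and $S$ consisting of one side per variable, you must show $G - S$ is \emph{globally} $\mathcal F$-minor-free, not merely that each gadget in isolation is. After deleting $S$, every surviving selector (one per unsatisfied literal) keeps its subdivision-path edges \emph{and} its edges into the $D$-blocks of its own variable gadget, gluing clause bodies, surviving sides $Y_i$, and $D$-blocks into one large component whose $\mathcal F$-minor content you never analyse. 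With your budget $t = n(c-1)$ the solution cannot touch clause bodies at all, so if any such spurious minor arises the reduction is simply wrong.

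The paper avoids vertex identification entirely. Its clause gadget for an $m$-literal clause is a uniform $(H,x,y)$-necklace of $m$ beads, one per literal; the $i$th bead is linked to the side $V_{\ell_i}$ by picking one bead-vertex $u$ and adding edges so that $G[V_{\ell_i}\cup\{u\}]\cong H$. A further global gadget $S_0$ (a copy of $H$ with one edge $ab$ removed) is wired to the first and last bead of every clause necklace. The budget is $t = w + (c-1)n$ with $w$ the total number of literal occurrences: there are exactly that many pairwise-disjoint $H$-subgraphs (one per bead plus $c-1$ per variable gadget), forcing precisely one deletion per bead. The $S_0$ gadget supplies the mechanism your construction lacks: if in some clause every bead loses only its gadget-connecting vertex, the necklace still carries an $a$--$b$ path which, together with $S_0$, contracts to $H$; hence some bead must instead lose a necklace-connecting vertex, leaving its $u$ alive, and then the surviving $H$-copy on $V_\ell\cup\{u\}$ forces $V_\ell\subseteq S$, i.e.\ the literal is satisfied. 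The forward direction is made global via Lemma~\ref{lemma:JHv-X-or-Y-then-no-minor} and Lemma~\ref{lemma:necklace-minors}, which reduce the $\mathcal F$-minor-freeness of $G-S$ to its biconnected components, each of which is a proper minor of $H$.
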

\begin{proof}
    We will also assume that for any $H \in \mathcal F$, no proper minor of $H$ is contained in $\mathcal F$.
    Otherwise, we can remove $H$ from $\mathcal F$ in the context of $\mathcal F$-\prob{Minor Free Deletion}, as the deletion of all $H$-minors follows from the deletion of the proper minors of $H$.

    We will start our construction of $G$ by picking vertices $v, x, y \in V(H)$ with $x \neq y$ and let $c = \abs{V(H)} = \bigO_{\mathcal F}(1)$.
    Given a CNF formula $F$ on variables $x_1, \dots, x_n$, the algorithm executes the following three steps which are illustrated by Figure~\ref{fig:reduction}:

    \begin{enumerate}
        \item For each variable $x_i$, we add a variable gadget $J_i$ to $G$ which is a copy of the graph $J_{H,v, x, y}$.
        Instead of using $X$ and $Y$ for the two subsets of $V(J_i)$ defined earlier, we use the names $V_{x_i} = \set{v_{x_i, 1}, \dots, v_{x_i, c-1}}$ and $V_{\neg x_i} = \set{v_{\neg x_i, 1}, \dots, v_{\neg x_i, c}}$.
        Notice the direct correspondence between literals and these sets.
        \item For each clause in $F$, we are going to add a necklace to $G$.
        For an integer $t$, we let $N_t$ denote the uniform necklace with structure $(H, x, y)$ of length $t$.
        For each clause in $F$ with length $m$, we add a copy of $N_{m}$ to the graph.
        Each literal in this clause corresponds to a bead in the necklace.
        For $i \in [m]$, we let $\ell_i$ be the $i$th literal in this clause.
        We are now going to connect the $i$th bead to the vertices in $V_{\ell_i}$ for each $i \in [m]$.
        To do this, we pick a vertex $u$ in the $i$th bead that is not adjacent to the other beads.
        As $H$ has at least three vertices, such a vertex exists.
        We then connect $u$ to $V_{\ell_i}$, such that the subgraph induced by $V_{\ell_i} \cup \set u$ is isomorphic to $H$.
        By definition of $J_{H,v, x, y}$, we can ensure this by only adding edges between $u$ and $V_{\ell_i}$.
        \item It remains to add one extra copy of $H$ to $G$, whose vertex set we denote with $S_0$.
        We remove one edge from this copy of $H$, say with endpoints $a$ and $b$.
        Then for every necklace that we added, we add an edge between $a$ and the first bead, and an edge between $b$ and the last bead of the necklace.
        We make sure that the endpoints in the necklace are not connected to a variable gadget.
    \end{enumerate}

    Observe that, for fixed $\mathcal F$, the entire procedure can be executed in time polynomial in the size of $F$.

    \begin{figure}
        \centering
        \includegraphics[width=0.9 \textwidth]{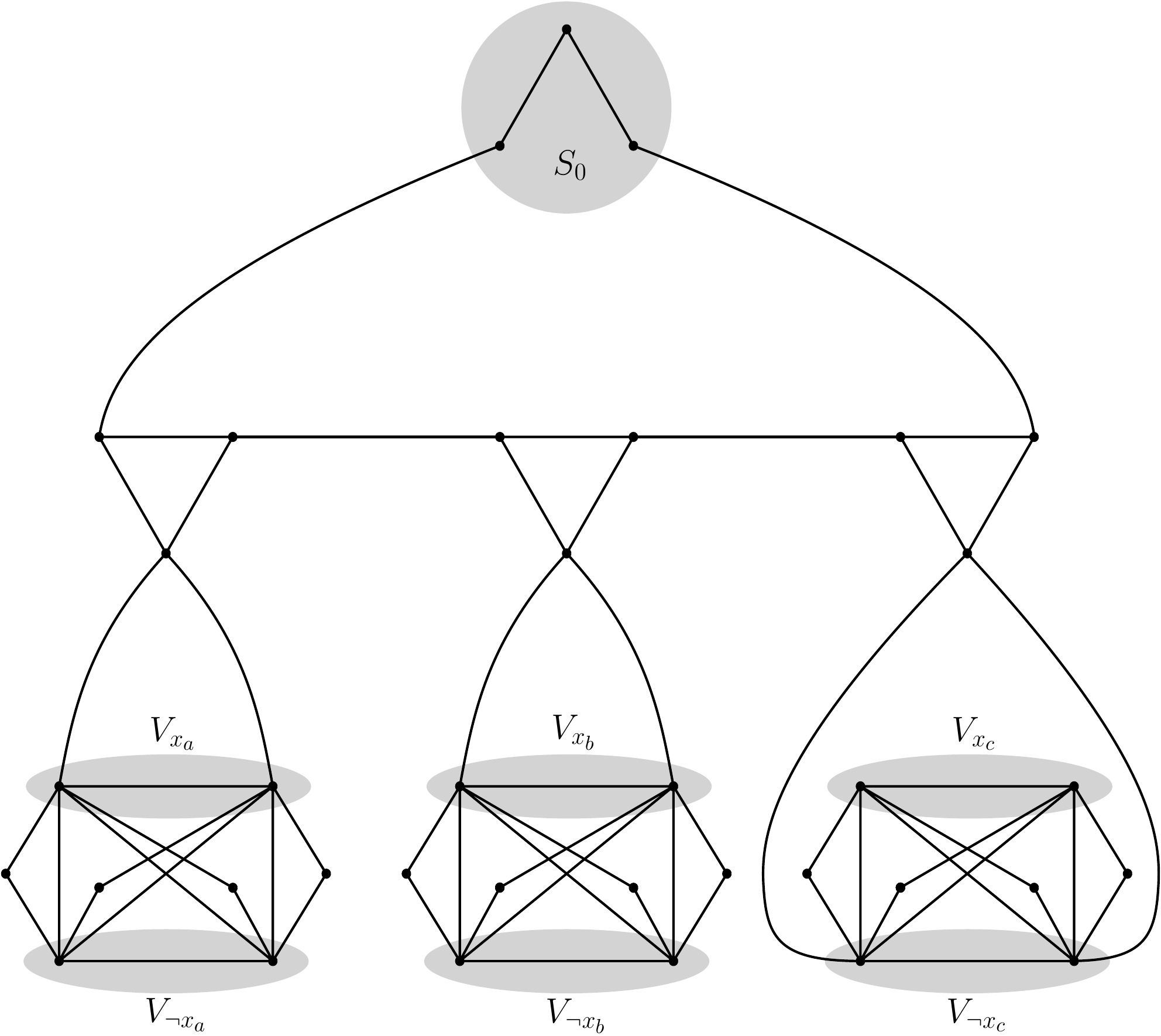}
        \caption{Illustration of the reduction in the proof of Lemma~\ref{lemma:fvsreduction} for a clause $(x_a \vee x_b \vee \neg x_c)$.
        Let $\ell_i$ denote the $i$th literal of this clause.
        We pick $H = K_3$ and we pick distinct $x, y \in V(H)$, so the uniform necklace structure becomes $(K_3, x, y)$.
        We add such a uniform necklace of length $3$.
        The first and last bead are connected to $S_0$.
        We then connect each bead to a variable gadget.
        For the $i$th bead, we pick a vertex $u$ in this bead which is not adjacent to the next or previous bead, and neither to $S_0$.
        We then ensure that the subgraph induced by $u$ and $V_{\ell_i}$ equals $K_3$.
        Observe that, since the last literal of the clause is a negated variable, the last bead is connected to $V_{\neg x_c}$.
        }\label{fig:reduction}
    \end{figure}

    Let $X \coloneqq \left( \bigcup_{i \in [n]} (V_{x_i} \cup V_{\neg x_i}) \right) \cup S_0$.
    Observe that $\abs X = 2n(c-1) + c = \bigO_{\mathcal F}(n)$.
    We are now interested in the structure of the connected components of $G - X$.
    For each clause in $F$, we added a uniform necklace with structure $(H, x, y)$ in the second step.
    Since the neighbors of this necklace in $G$ are all in $X$, these necklaces are connected components of $G-X$.
    The other connected components of $G-X$ are contained in the variable gadgets which were added in the first step.
    Each of these is a copy of $H$ where two vertices are removed, so these components are minors of $H$.
    Therefore, each connected component of $G-X$ is a minor of a uniform necklace with structure $(H, x, y)$.

    Let $w$ be the total number of occurrences of literals in $F$, i.e.
    \[
        w \coloneqq \sum_{\text{clause $C$ in $F$}} \text{\#literals in $C$}
    \]
    We claim that $F$ is satisfiable if and only if $G$ has a solution to $\mathcal F$-\prob{Minor Free Deletion} of size at most $w+(c-1)n$.
    We will first prove the `only if' direction.

    \begin{claim}
        If $F$ is satisfiable, then $G$ has a solution to $\mathcal F$-\prob{Minor Free Deletion} of size at most $w+(c-1)n$.
    \end{claim}
    \begin{innerproof}
        Suppose that $F$ has a satisfying truth assignment.
        Based on this truth assignment, we will construct a solution $Z$ for $\mathcal F$-\prob{Minor Free Deletion} of size $w+(c-1)n$ in the corresponding graph $G$.
        For every variable $x_i$ that is set to \textsc{True}, we add the $c-1$ vertices in $V_{x_i}$ to $Z$.
        For every variable set to \textsc{False}, we add $V_{\neg x_i}$ to $Z$.
        These vertices are in the variable gadgets added in the first step.
        Now, $Z$ contains $(c-1)n$ vertices.

        Then for each clause in $F$, we consider each literal $\ell_i$ in this clause and consider the necklace in $G$ corresponding to this clause.
        If $\ell_i$ evaluates to \textsc{True} according to the truth assignment, we add a vertex from the $i$th bead to $Z$ that is adjacent to the previous or next bead.
        Otherwise, we add the vertex in this bead adjacent to $V_{\ell_i}$.
        This procedure adds another $w$ vertices to $Z$, ensuring that $Z$ has size $w+(c-1)n$.

        We will now prove that $G-Z$ contains no $\mathcal F$-minors.
        The graph $G-Z$ contains multiple connected components.
        We will argue that none of these components contains an $\mathcal F$-minor, as it implies directly that $G-Z$ has no $\mathcal F$-minors.

        Consider a variable gadget $J_i$ corresponding to some variable $x_i$, and assume without loss of generality (by symmetry) that $x_i$ evaluates to \textsc{True} in the given truth assignment.
        The only vertices in $J_i$ that were connected to other vertices in $G$ are those in $V_{x_i}$ and those in $V_{\neg x_i}$.
        The first set is removed.
        For the second set, observe that these vertices are only connected to beads in necklaces where the corresponding literal is $\neg x_i$, and that precisely these vertices in those beads are removed as $\neg x_i$ evaluates to \textsc{False}.
        It follows that $J_i - V_{x_i}$ is a connected component in $G-Z$.
        By Lemma~\ref{lemma:JHv-X-or-Y-then-no-minor}, any biconnected minor of $J_i - V_{x_i}$ is a proper minor of $H$.
        By assumption, for each pair of graphs $H_1, H_2 \in \mathcal F$, $H_1$ is not a minor of $H_2$ and $H_2$ is not a minor of $H_1$.
        Since all graphs in $\mathcal F$ are biconnected, it follows that the connected component $J_i - V_{x_i}$ contains no $\mathcal F$-minors.

        After analyzing the connected components containing vertices of variable gadgets, it remains to analyze the other connected components of $G-Z$.
        Define the set $\mathcal F'$ to be the set of all graphs that can be obtained by either removing a vertex or by removing an edge from a graph in $\mathcal F$.
        Observe that $(G-Z)[S_0] \in \mathcal F'$, and similarly the subgraph induced by what remains of a bead of a necklace is in $\mathcal F'$.
        Besides, since at least one literal evaluates to \textsc{True} in each clause, we always remove at least one edge between some pair of consecutive beads in a necklace.
        Hence, each necklace in $G$ is now broken into at least two $\mathcal F'$-necklaces in $G-Z$.
        This also implies that in each necklace there does not exist a path through the necklace from the first to the last bead.
        Therefore, there exists at most one edge between each of those $\mathcal F'$-necklaces and vertex set $S_0$.
        Now consider the connected component that includes $S_0$.
        It follows that each biconnected component of at least three vertices of this component is either included in a bead or in $S_0$.
        In both cases, it is a minor of a graph in $\mathcal F'$.
        Observe that no graph in $\mathcal F'$ contains an $\mathcal F$-minor:
        then this graph would be a proper minor of $H$ since $\mathcal F'$ contains only proper subgraphs of $H$.
        It follows that no biconnected component on at least three vertices contains an $\mathcal F$-minor, so this component contains no $\mathcal F$-minor by Proposition~\ref{prop:biconn}.
        It remains to observe that all other components of $G-Z$ are $\mathcal F'$-necklaces which are not connected to any other vertices.
        By Lemma~\ref{lemma:necklace-minors}, these components contain no $\mathcal F$-minors either.
        Since no component of $G-Z$ contains an $\mathcal F$-minor, $G-Z$ has no $F$-minors.
    \end{innerproof}

    It remains to prove the opposite implication.

    \begin{claim}
        If $G$ has a solution to $\mathcal F$-\prob{Minor Free Deletion} of size at most $w+(c-1)n$, then $F$ is satisfiable.
    \end{claim}
    \begin{innerproof}
        To this end, we will first observe that there are $w+(c-1)n$ pairwise disjoint $\mathcal F$-minors in $G$.
        Each literal contributes such an $H$-minor in the bead in the corresponding $\set H$-necklace, giving $w$ pairwise disjoint $\mathcal F$-minors.
        Now consider a variable gadget $J_i$ corresponding to variable $x_i$.
        It contains the vertex sets $V_{x_i} = \set{v_{x_i, 1}, \dots, v_{x_{c-1}}}$ and $V_{\neg x_i} = \set{v_{\neg x_i, 1}, \dots, v_{\neg x_{c-1}}}$.
        For each pair of a vertex in $V_{x_i}$ and a vertex in $V_{\neg x_i}$, we created an $H$-minor in this gadget.
        We can identify $c-1$ pairwise disjoint $H$-minors in $G$ by considering for each $j \in [c-1]$ the $H$-minor created for the pair of $v_{x_i, j}$ and $v_{\neg x_i, j}$.
        Observe that these minors are pairwise disjoint and contained in $J_i$.
        Therefore each variable gadget contributes $c-1$ $H$-minors, giving a total of $w+(c-1)n$ $H$-minors which are all pairwise disjoint.

        Now consider a variable gadget $J_i$ for variable $x_i$.
        If there exists a solution to $\mathcal F$-\prob{Minor Free Deletion} within this variable gadget of at most $c-1$ vertices, then by Lemma~\ref{lemma:JHv-no-minor-then-X-or-Y} the solution is equal to $V_{x_i}$ or $V_{\neg x_i}$.

        Now consider a solution $Z$ for $\mathcal F$-minor deletion in $G$ of at most $w+(c-1)n$ vertices.
        Since we can identify $w+(c-1)n$ pairwise disjoint $H$-minors, $Z$ contains one vertex from each of the identified $H$-minors in $G$.
        It follows that $Z$ contains $c-1$ vertices from each variable gadget.
        Then for a variable $x_i$, the intersection of $Z$ with $V(J_i)$ equals precisely $V_{x_i}$ or $V_{\neg x_i}$ by Lemma~\ref{lemma:JHv-no-minor-then-X-or-Y}.

        We can now construct a truth assignment for $F$.
        For each variable $x_i$, we verify which vertices $Z$ contains from $J_i$.
        If $V_{x_i}$ is included (and therefore $V_{\neg x_i}$ is disjoint from $Z$), we set $x_i$ to \textsc{True}, and if $V_{\neg x_i}$ is included we set $x_i$ to \textsc{False}.
        It remains to prove that this yields a valid truth assignment.

        To this end, we first consider an $\mathcal F$-necklace corresponding to a clause.
        Observe that $Z$ contains precisely one vertex from each bead, as each bead contains one of the $w+(c-1)n$ pairwise disjoint $H$-minors.
        Suppose towards a contradiction that in each bead, $Z$ contains the vertex which we connected to a variable gadget.
        Since this vertex is not adjacent to other beads by construction and since $H$ is biconnected, it follows that there still exists a path from $a \in S_0$ through this necklace to $b \in S_0$.
        By contracting this path to a single edge, we obtain a copy of the graph $H$ on vertex set $S_0$, contradicting that $Z$ hits all $H$-minors.
        It follows that in each necklace corresponding to a clause, we can identify a bead where the vertex connected to the corresponding variable gadget is not in $Z$.
        Let this vertex be $u$ in a bead corresponding to literal $\ell$.
        We now claim that $\ell$ evaluates to \textsc{True}, implying that such a literal exists in every clause and thereby proving that $F$ is satisfiable.
        Assume towards a contradiction that $\ell$ evaluates to \textsc{False}, then in the corresponding variable gadget, none of the vertices of $V_{\ell}$ were included in $Z$.
        However, recall that we created an $H$-subgraph on vertex set $V_{\ell} \cup \set{ v }$.
        As none of these vertices is in $Z$, this subgraph contradicts $Z$ being a solution to $\mathcal F$-\prob{Minor Free Deletion}.
        It follows that $\ell$ evaluates to \textsc{True}, so every clause in $F$ contains a literal that evaluates to \textsc{True}.
        This concludes the claim that $F$ is satisfiable.
    \end{innerproof}

    These claims prove the correctness of our reduction and conclude the proof of Lemma~\ref{lemma:fvsreduction}.
\end{proof}

We are finally able to prove Lemma~\ref{lemma:unbounded-necklaces-implies-geen-kernel}.
We formalize this lemma with the following theorem and prove the theorem.

\begin{theorem}[Formalization of Lemma~\ref{lemma:unbounded-necklaces-implies-geen-kernel}]
    \label{thm:no-kernelization-of-f-minor-free-deletion}
    Let $\mathcal F$ be a finite collection of biconnected graphs on at least three vertices.
    Let $\mathcal G$ be a minor-closed graph family containing arbitrarily long uniform $\mathcal F$-necklaces.
    There is no polynomial time algorithm that, given a graph $G$, integer $t$ and vertex set $X$ of size $k$ such that $G-X$ is in $\mathcal G$, constructs an instance $I$ of a fixed decision problem $Q$ such that
    \begin{enumerate}
        \item $G$ has a solution to $\mathcal F$-\prob{Minor Free Deletion} of size at most $t$ if and only if $I$ is a \textsc{Yes}-instance of $Q$,
        \item $\abs I = \bigO(k^c)$ for some constant $c$,
    \end{enumerate}
    unless $\NP \subseteq \coNP / \poly$.
\end{theorem}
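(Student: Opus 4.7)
The plan is to compose the reduction from Lemma~\ref{lemma:fvsreduction} with a hypothetical polynomial compression and derive a contradiction with Theorem~\ref{thm:cnfsat}. First, since $\mathcal G$ contains arbitrarily long uniform $\mathcal F$-necklaces and is minor-closed, Proposition~\ref{prop:always-uniform-necklaces} provides a specific graph $H \in \mathcal F$ and vertices $x, y \in V(H)$ such that every uniform necklace with structure $(H, x, y)$ lies in $\mathcal G$. I would fix this choice of $(H, x, y)$ for the remainder of the proof.

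Next, given a CNF formula $F$ on $n$ variables, I would apply Lemma~\ref{lemma:fvsreduction} with the fixed structure $(H, x, y)$ to obtain in polynomial time a graph $G$, a vertex set $X \subseteq V(G)$ of size $k = \bigO_{\mathcal F}(n)$, and an integer $t$, such that $F$ is satisfiable if and only if $G$ admits a solution to $\mathcal F$-\prob{Minor Free Deletion} of size at most $t$, and each connected component of $G - X$ is a minor of a uniform necklace with structure $(H, x, y)$. Since every such uniform necklace is contained in $\mathcal G$ and $\mathcal G$ is minor-closed, every connected component of $G - X$ lies in $\mathcal G$; as $\mathcal G$ is closed under taking disjoint unions (being minor-closed and containing each component), we conclude $G - X \in \mathcal G$. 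Thus $X$ is a valid $\mathcal G$-modulator of size $\bigO_{\mathcal F}(n)$.

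Now suppose towards a contradiction that the polynomial-time algorithm described in the theorem statement does exist, producing an instance $I$ of some fixed decision problem $Q$ with $|I| = \bigO(k^c)$ that is equivalent to the instance $(G, X, t)$ of $\mathcal F$-\prob{Minor Free Deletion}. Composing the reduction with this algorithm yields a polynomial-time procedure that maps any CNF formula $F$ on $n$ variables to an instance of $Q$ of size $\bigO(k^c) = \bigO_{\mathcal F}(n^c)$, whose answer equals the satisfiability of $F$. Since $\mathcal F$ is fixed, the overall size bound is polynomial in $n$, contradicting Theorem~\ref{thm:cnfsat} unless $\NP \subseteq \coNP/\poly$.

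The main technical obstacle has already been addressed by Lemma~\ref{lemma:fvsreduction} (the correctness of the gadget construction) and by Proposition~\ref{prop:always-uniform-necklaces} (the existence of a uniform necklace structure that is fully contained in $\mathcal G$); the remaining step is the straightforward verification that the output of the reduction respects the modulator constraint, which follows by noting that the components of $G - X$ are necklace minors and invoking minor-closedness of $\mathcal G$. From this, Lemma~\ref{lemma:unbounded-necklaces-implies-geen-kernel} follows immediately by specializing $Q$ to the trivial problem encoding a polynomial kernel.
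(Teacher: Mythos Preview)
Your proof follows the same strategy as the paper, but there is a genuine gap in the step where you conclude $G - X \in \mathcal G$. You write that ``$\mathcal G$ is closed under taking disjoint unions (being minor-closed and containing each component)'', but minor-closed families are \emph{not} in general closed under disjoint unions. For instance, the class of all minors of a fixed graph $K_r$ is minor-closed yet does not contain the disjoint union of two copies of $K_r$. So knowing that each connected component of $G-X$ lies in $\mathcal G$ does not by itself give $G-X \in \mathcal G$.

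The paper closes this gap differently: it argues that the whole graph $G-X$ is a minor of a single sufficiently long uniform necklace with structure $(H,x,y)$. This works because each component of $G-X$ is a minor of some uniform $(H,x,y)$-necklace of finite length, and a long enough necklace can be split, by deleting the bridge edges between appropriately spaced consecutive beads, into disjoint shorter necklaces that separately host minor models of all the components. Since $\mathcal G$ contains all uniform $(H,x,y)$-necklaces and is minor-closed, this yields $G-X \in \mathcal G$. Replacing your disjoint-union claim with this argument makes the proof correct and essentially identical to the paper's.
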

\begin{proof}
    We will prove this theorem by contradiction: if such an algorithm would exist, we can use it to give a kernel for the \textsc{CNF Satisfiability} problem which would contradict Theorem~\ref{thm:cnfsat}.
    This can be achieved by reducing a provided CNF formula to an instance of $\mathcal F$-\prob{Minor Free Deletion}.

    Let $F$ be a CNF formula on $n$ variables.
    By Proposition~\ref{prop:always-uniform-necklaces}, there exists a uniform necklace structure $(H, x, y)$ with $H \in \mathcal F$ and $x, y \in V(H)$ such that all uniform necklaces with this structure are contained in $\mathcal G$.
    For this uniform necklace structure $(H, x, y)$, let $G$ be the graph obtained by Lemma~\ref{lemma:fvsreduction} with a set $X$ of size $k = \bigO_{\mathcal F}(n)$ such that all connected components of $G-X$ are minors of uniform necklaces with this structure.
    Then observe that $G-X$ itself is also a minor of a sufficiently long uniform necklace with the same structure.
    Let $t$ be the integer such that $F$ is satisfiable if and only if $G$ has a solution to $\mathcal F$-\prob{Minor Free Deletion} of size at most $t$.

    If such an algorithm would exist, we can now construct an instance $I$ of a fixed decision problem $Q$ such that $G$ has a solution to $\mathcal F$-\prob{Minor Free Deletion} of size at most $t$ if and only if $I$ is a \textsc{Yes}-instance of $Q$, while $\abs I = \bigO(k^c)$.
    Since $k = \bigO_{\mathcal F}(n)$, the bound on the size of $I$ becomes $\abs I = \bigO(n^c)$ when $\mathcal F$ is a finite collection of graphs.
    However, since $G$ has a solution to $\mathcal F$-\prob{Minor Free Deletion} of size at most $t$ if and only if $F$ is satisfiable, we obtain a contradiction with Theorem~\ref{thm:cnfsat}.
    Therefore such an algorithm cannot exist, concluding the proof of Theorem~\ref{thm:no-kernelization-of-f-minor-free-deletion}.
\end{proof}

\subsection{Proof of Theorem~\ref{thm:main}}
By combining our previous results, we can now prove Theorem~\ref{thm:main}.

\begin{customthm}{1}
Let $\mathcal G$ be a minor-closed graph family.
Then \prob{Feedback Vertex Set} admits a polynomial kernel in the size of a $\mathcal G$-modulator if and only if $\mathcal G$ has bounded elimination distance to a forest, assuming that $\NP \not \subseteq \coNP / \poly$.
\end{customthm}
\begin{proof}
    Let $\mathcal G_\eta$ be the set of graphs with elimination distance to a forest at most $\eta$.
    Then by Theorem~\ref{thm:main-kernel}, \prob{Feedback Vertex Set} admits a polynomial kernel in the size of a $\mathcal G_{\eta}$-modulator $X$.
    Now let $\mathcal H$ be a family of graphs where each graph has elimination distance to a forest at most $\eta$.
    Then Theorem~\ref{thm:main-kernel} implies directly that \prob{Feedback Vertex Set} also admits a polynomial kernel in an $\mathcal H$-modulator, since such a modulator is always also an $\mathcal G_\eta$-modulator since $\mathcal H \subseteq \mathcal G_\eta$.
   
   Regarding the lower bound, Theorem~\ref{thm:main-lower-bound} states that for any minor-closed graph family $\mathcal G$ and any finite set $\mathcal F$ of biconnected planar graphs on at least three vertices, $\mathcal F$-\prob{Minor Free Deletion} does not admit a polynomial kernel in the size of a $\mathcal G$-modulator, unless $\NP \subseteq \coNP / \poly$.
   By taking $\mathcal F = \set{K_3}$, we obtain the stated lower bound.
\end{proof}

    \section{Conclusion and discussion}\label{sec:discussion}

    We conclude that the elimination distance to a forest characterizes the \prob{Feedback Vertex Set} problem in terms of polynomial kernelization.
For a minor-closed graph family $\mathcal G$, the problem admits a polynomial kernel in the size of a $\mathcal G$-modulator if and only if $\mathcal G$ has bounded elimination distance to a forest, assuming $\NP \not \subseteq \coNP / \poly$.
In particular, this implies that \prob{Feedback Vertex Set} does not admit a polynomial kernel in the deletion distance to a graph of constant bridge-depth under the mentioned hardness assumption.
We also generalize the lower bound to other $\mathcal F$-\prob{Minor Free Deletion} problems where $\mathcal F$ contains only biconnected planar graphs on at least three vertices.
It remains unknown whether such a lower bound also generalizes to collections of graphs $\mathcal F$ that contain non-planar graphs.

An interesting open problem is whether similar polynomial kernels can be obtained for other $\mathcal F$-\prob{Minor Free Deletion} problems.
Regarding the field of fixed parameter tractable algorithms, it was recently shown~\cite{eldistapproxes} \bmp{that for any set~$\mathcal{F}$ of connected graphs,} $\mathcal F$-\prob{Minor Free Deletion} admits an FPT algorithm when parameterized by the elimination distance to an $\mathcal F$-minor free graph (or even $\mathcal H$-treewidth when $\mathcal H$ is the class of $\mathcal F$-minor free graphs).
This generalizes known FPT algorithms for the natural parameterization by solution size.
Regarding polynomial kernels, $\mathcal F$-\prob{Minor Free Deletion} problems admit a polynomial kernel in the solution size when $\mathcal F$ contains a planar graph~\cite{FominLMS12}.
\bmp{Do polynomial kernels exist when the problem is parameterized by a modulator to a graph of constant elimination distance to being $\mathcal{F}$-minor free?}
%
%
%
 \bibliographystyle{splncs04}
 \bibliography{bib}
 
\end{document}